\newtheorem{theorem}{Theorem}
\newtheorem{corollary}{Corollary}
\newtheorem{lemma}{Lemma}
\theoremstyle{definition}
\newtheorem{definition}{Definition}
\newtheorem{example}{Example}
\begin{document}

\title{A Fully Dynamic Algorithm for k-Regret Minimizing Sets\\
\thanks{\IEEEauthorrefmark{1}This research was mostly done when Yanhao Wang worked
at National University of Singapore (NUS).}
}

\author{
\IEEEauthorblockN{Yanhao Wang\IEEEauthorrefmark{1},
                  Yuchen Li\textsuperscript{\IEEEauthorrefmark{2},\IEEEauthorrefmark{5}},
                  Raymond Chi-Wing Wong\IEEEauthorrefmark{3},
                  Kian-Lee Tan\IEEEauthorrefmark{4}}
\IEEEauthorblockA{
  \textit{\IEEEauthorrefmark{1}University of Helsinki\quad
  \IEEEauthorrefmark{2}Singapore Management University\quad
  \IEEEauthorrefmark{3}The Hong Kong University of Science and Technology}\\
  \textit{\IEEEauthorrefmark{4}National University of Singapore\quad
  \IEEEauthorrefmark{5}Alibaba-Zhejiang University Joint Research Institute of Frontier Technologies}\\
  \IEEEauthorrefmark{1}yanhao.wang@helsinki.fi\quad
  \IEEEauthorrefmark{2}yuchenli@smu.edu.sg\quad
  \IEEEauthorrefmark{3}raywong@cse.ust.hk\quad
  \IEEEauthorrefmark{4}tankl@comp.nus.edu.sg
}
}

\maketitle

\begin{abstract}
  Selecting a small set of representatives from a large database is important in many
  applications such as multi-criteria decision making, web search, and recommendation.
  The $k$-regret minimizing set ($k$-RMS) problem was recently proposed for
  representative tuple discovery.
  Specifically, for a large database $P$ of tuples with multiple numerical attributes,
  the $k$-RMS problem returns a size-$r$ subset $Q$ of $P$ such that,
  for any possible ranking function, the score of the top-ranked tuple in $Q$
  is not much worse than the score of the $k$\textsuperscript{th}-ranked tuple in $P$.
  Although the $k$-RMS problem has been extensively studied in the literature,
  existing methods are designed for the static setting and
  cannot maintain the result efficiently when the database is updated.
  To address this issue, we propose the first fully-dynamic algorithm for the $k$-RMS problem
  that can efficiently provide the up-to-date result w.r.t.~any tuple insertion and deletion
  in the database with a provable guarantee.
  Experimental results on several real-world and synthetic datasets demonstrate that
  our algorithm runs up to four orders of magnitude faster than existing $k$-RMS algorithms
  while providing results of nearly equal quality.
\end{abstract}

\begin{IEEEkeywords}
regret minimizing set; dynamic algorithm; set cover; top-k query; skyline
\end{IEEEkeywords}

%!TEX root = ../main.tex
\section{Introduction}
\label{sec:intro}

In many real-world applications such as multi-criteria decision
making~\cite{DBLP:journals/pvldb/NanongkaiSLLX10},
web search~\cite{DBLP:conf/edbt/StoyanovichYJ18},
recommendation~\cite{DBLP:journals/tkde/WangLT19,DBLP:conf/recsys/LiuMLY11},
and data description~\cite{DBLP:conf/kdd/WangLT19},
it is crucial to find a succinct representative subset from a large database
to meet the requirements of various users.
For example, when a user queries for a hotel on a website (e.g.,~booking.com and~expedia.com),
she/he will receive thousands of available options as results.
The website would like to display the best choices in the first few pages
from which almost all users could find what they are most interested in.
A common method is to rank all results using a utility function
that denotes a user's preference on different attributes
(e.g.,~\emph{price}, \emph{rating}, and \emph{distance to destination} for hotels)
and only present the top-$k$ tuples with the highest scores
according to this function to the user.
However, due to the wide diversity of user preferences,
it is infeasible to represent the preferences of all users by any single utility function.
Therefore, to select a set of highly representative tuples,
it is necessary to take all (possible) user preferences into account.

A well-established approach to finding such representatives from databases
is the \emph{skyline} operator~\cite{DBLP:conf/icde/BorzsonyiKS01}
based on the concept of \emph{domination}: a tuple $p$ dominates a tuple $q$ iff
$p$ is as good as $q$ on all attributes and strictly better than $q$ on at least one attribute.
For a given database, a skyline query returns its \emph{Pareto-optimal} subset which
consists of all tuples that are not dominated by any tuple.
It is guaranteed that any user can find her/his best choice from the skyline
because the top-ranked result according to any monotone function must not be dominated.
Unfortunately, although skyline queries are effective for representing low-dimensional databases,
their result sizes cannot be controlled
and increase rapidly as the dimensionality (i.e.,~number of attributes in a tuple) grows,
particularly so for databases with anti-correlated attributes.

Recently, the $k$-regret minimizing set ($k$-RMS)
problem~\cite{DBLP:journals/pvldb/NanongkaiSLLX10,DBLP:conf/icde/PengW14,DBLP:journals/pvldb/ChesterTVW14,DBLP:conf/wea/AgarwalKSS17,DBLP:conf/icdt/CaoLWWWWZ17,DBLP:conf/sigmod/AsudehN0D17,DBLP:conf/alenex/KumarS18,DBLP:conf/sigmod/XieW0LL18}
was proposed to alleviate the deficiency of skyline queries.
Specifically, given a database $P$ of tuples with $d$ numeric attributes,
the $k$-RMS problem aims to find a subset $Q \subseteq P$ such that, for any possible utility function,
the top-$1$ tuple in $Q$ can approximate the top-$k$ tuples in $P$ within a small error.
Here, the \emph{maximum $k$-regret ratio}~\cite{DBLP:journals/pvldb/ChesterTVW14} ($\mathtt{mrr}_k$)
is used to measure how well $Q$ can represent $P$.
For a utility function $f$, the $k$-regret ratio ($\mathtt{rr}_k$) of $Q$ over $P$
is defined to be $0$ if the top-$1$ tuple in $Q$ is among the top-$k$ tuples in $P$ w.r.t.~$f$,
or otherwise, to be one minus the ratio between the score of the top-$1$ tuple in $Q$
and the score of the $k$\textsuperscript{th}-ranked tuple in $P$ w.r.t.~$f$.
Then, the \emph{maximum $k$-regret ratio} ($\mathtt{mrr}_k$) is
defined by the maximum of $\mathtt{rr}_k$ over a class of (possibly infinite) utility functions.
Given a positive integer $r$,
a $k$-RMS on a database $P$ returns a subset $Q \subseteq P$ of size $r$ to minimize $\mathtt{mrr}_k$.
As an illustrative example, the website could run a $k$-RMS on all available hotels to
pick a set of $r$ candidates from which all users can find at least one close to her/his top-$k$ choices.

The $k$-RMS problem has been extensively studied recently.
Theoretically, it is
NP-hard~\cite{DBLP:journals/pvldb/ChesterTVW14,DBLP:conf/wea/AgarwalKSS17,DBLP:conf/icdt/CaoLWWWWZ17}
on any database with $d \geq 3$.
In general, we categorize existing $k$-RMS algorithms into three types.
The first type is dynamic programming
algorithms~\cite{DBLP:conf/sigmod/AsudehN0D17,DBLP:journals/pvldb/ChesterTVW14,DBLP:conf/icdt/CaoLWWWWZ17}
for $k$-RMS on two-dimensional data.
Although they can provide optimal solutions when $d = 2$,
they are not suitable for higher dimensions due to the NP-hardness of $k$-RMS.
The second type is the greedy
heuristic~\cite{DBLP:journals/pvldb/NanongkaiSLLX10,DBLP:conf/icde/PengW14,DBLP:journals/pvldb/ChesterTVW14},
which always adds a tuple that maximally reduces $\mathtt{mrr}_k$ at each iteration.
Although these algorithms can provide high-quality results empirically,
they have no theoretical guarantee and suffer from low efficiency on high-dimensional data.
The third type is to transform $k$-RMS into another problem
such as $\varepsilon$-kernel~\cite{DBLP:conf/sigmod/XieW0LL18,DBLP:conf/alenex/KumarS18,DBLP:conf/wea/AgarwalKSS17,DBLP:conf/icdt/CaoLWWWWZ17},
discretized matrix min-max~\cite{DBLP:conf/sigmod/AsudehN0D17},
hitting set~\cite{DBLP:conf/alenex/KumarS18,DBLP:conf/wea/AgarwalKSS17},
and $k$-\textsc{medoid} clustering~\cite{Shetiya2019},
and then to utilize existing solutions of the transformed problem for $k$-RMS computation.
Although these algorithms are more efficient than greedy heuristics while having theoretical bounds,
they are designed for the static setting and
cannot process database updates efficiently.
Typically, most of them precompute the skyline as an input to compute the result of
$k$-RMS. Once a tuple insertion or deletion triggers any change in the skyline,
they are unable to maintain the result without re-running from scratch.
Hence, existing $k$-RMS algorithms become very inefficient in highly dynamic environments
where tuples in the databases are frequently inserted and deleted.
However, dynamic databases are very common in real-world scenarios, especially for online services.
For example, in a hotel booking system,
the \emph{prices} and \emph{availabilities} of rooms are frequently changed over time.
As another example, in an IoT network,
a large number of sensors may often connect or disconnect with the server.
Moreover, sensors also update their statistics regularly.
Therefore, it is essential to address the problem of maintaining an up-to-date
result for $k$-RMS when the database is frequently updated.

In this paper, we propose the first fully-dynamic $k$-RMS algorithm that
can efficiently maintain the result of $k$-RMS w.r.t.~any tuple
insertion and deletion in the database with both theoretical guarantee and good empirical performance.
Our main contributions are summarized as follows.
\begin{itemize}
  \item We formally define the notion of \emph{maximum $k$-regret ratio}
  and the \emph{$k$-regret minimizing set} ($k$-RMS) problem in a fully-dynamic setting.
  (Section~\ref{sec:definition})
  \item We propose the first fully-dynamic algorithm called FD-RMS
  to maintain the $k$-RMS result over tuple insertions and deletions in a database.
  Our basic idea is to transform \emph{fully-dynamic $k$-RMS} into
  a \emph{dynamic set cover} problem. Specifically, FD-RMS computes the (approximate)
  top-$k$ tuples for a set of randomly sampled utility functions
  and builds a set system based on the top-$k$ results.
  Then, the $k$-RMS result can be retrieved from an approximate solution
  for \emph{set cover} on the set system.
  Furthermore, we devise a novel algorithm for dynamic set cover by introducing
  the notion of \emph{stable solution},
  which is used to efficiently update the $k$-RMS result
  whenever an insertion or deletion triggers some changes
  in top-$k$ results as well as the set system.
  We also provide detailed theoretical analyses of FD-RMS.
  (Section~\ref{sec:framework})
  \item We conduct extensive experiments on several real-world and synthetic datasets
  to evaluate the performance of FD-RMS.
  The results show that FD-RMS achieves up to four orders of magnitude speedup over existing
  $k$-RMS algorithms while providing results of nearly equal quality in a fully dynamic setting.
  (Section~\ref{sec:exp})
\end{itemize}

% The reminder of this paper is organized as follows.
% Section~\ref{sec:definition} formally defines the $k$-RMS problem in a fully dynamic setting.
% Section~\ref{sec:framework} presents our FD-RMS algorithm for fully-dynamic $k$-RMS.
% Section~\ref{sec:exp} describes our experimental setup and results.
% Section~\ref{sec:related:work} reviews the literature relating to this work.
% Finally, Section~\ref{sec:conclusion} concludes this paper.
%!TEX root = ../main.tex
\section{Preliminaries}
\label{sec:definition}

In this section, we formally define the problem we study in this paper. We first introduce
the notion of maximum $k$-regret ratio. Then, we formulate the
$k$-regret minimizing set ($k$-RMS) problem in a fully dynamic setting.
Finally, we present the challenges of solving fully-dynamic $k$-RMS.

\begin{figure}
  \centering
  \includegraphics[width=.32\textwidth]{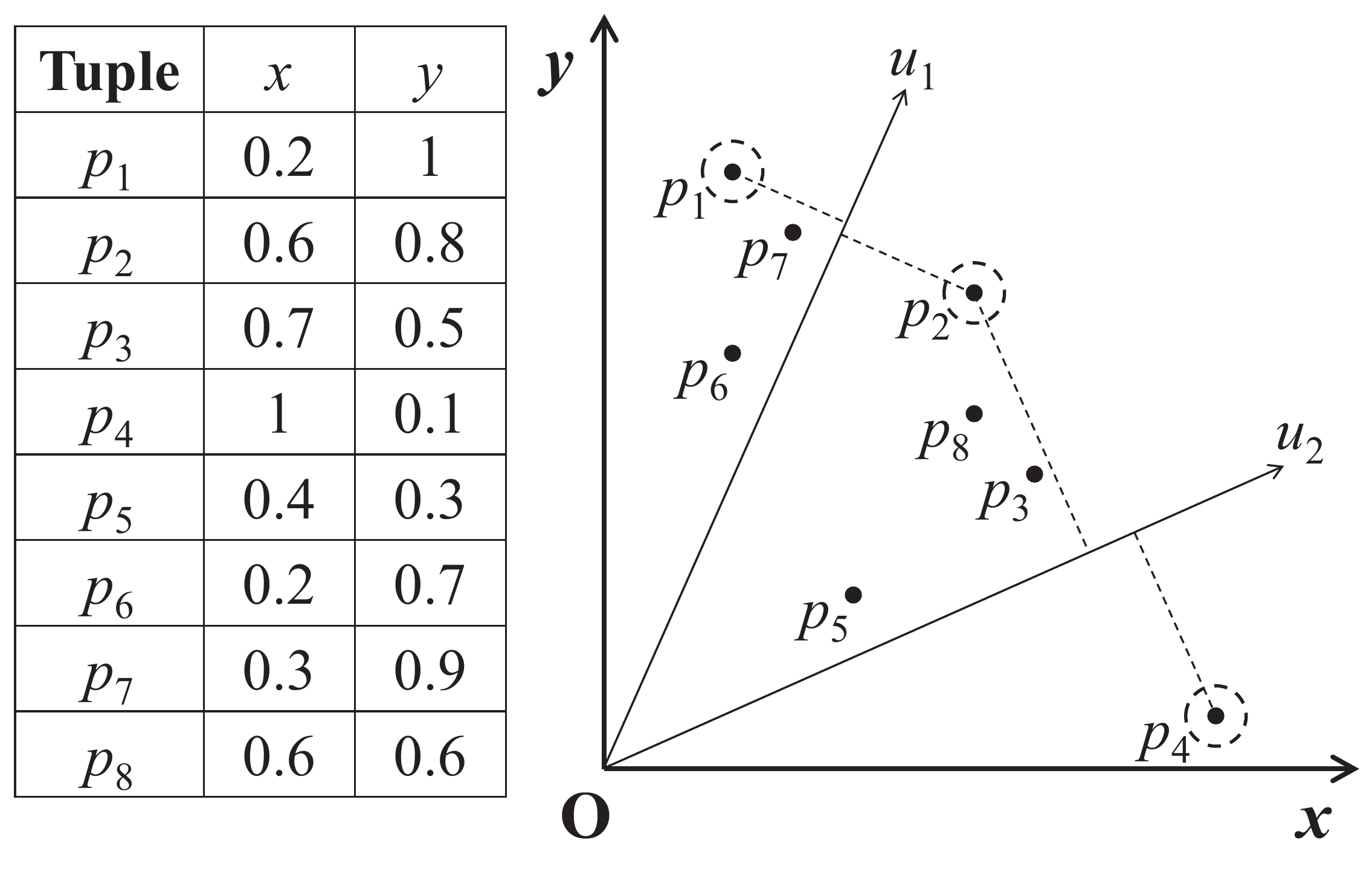}
  \caption{A two-dimensional database of 8 tuples.}
  \label{fig:database}
\end{figure}

\subsection{Maximum K-Regret Ratio}

Let us consider a database $P$ where each tuple $p \in P$
has $d$ nonnegative numerical attributes $p[1],\ldots,p[d]$
and is represented as a point in the nonnegative orthant $\mathbb{R}_+^d$.
A user's preference is denoted by a utility function
$f:\mathbb{R}_+^d \rightarrow \mathbb{R}^+$
that assigns a positive score $f(p)$ to each tuple $p$.
Following~\cite{DBLP:journals/pvldb/NanongkaiSLLX10,DBLP:conf/icde/PengW14,DBLP:journals/pvldb/ChesterTVW14,DBLP:conf/sigmod/AsudehN0D17,DBLP:conf/sigmod/XieW0LL18},
we restrict the class of utility functions to \emph{linear functions}.
A function $f$ is linear if and only if there exists
a $d$-dimensional vector $u=(u[1],\ldots,u[d]) \in \mathbb{R}_+^d$
such that $f(p)=\langle u,p \rangle =\sum_{i=1}^{d} u[i] \cdot p[i]$
for any $p \in \mathbb{R}_+^d$.
W.l.o.g., we assume the range of values on each dimension is scaled to $[0,1]$ and
any utility vector is normalized to be a unit\footnote{The normalization does not affect our results
because the maximum $k$-regret ratio is \emph{scale-invariant}~\cite{DBLP:journals/pvldb/NanongkaiSLLX10}.},
i.e., $\| u \|=1$.
Intuitively, the class of linear functions corresponds to the nonnegative orthant of
$d$-dimensional unit sphere
$ \mathbb{U}=\{ u \in \mathbb{R}_+^d \,:\, \lVert u \rVert=1 \} $.

We use $\varphi_j(u,P)$ to denote the tuple $p \in P$ with the $j$\textsuperscript{th}-largest
score w.r.t.~vector $u$ and $\omega_j(u,P)$ to denote its score.
Note that multiple tuples may have the same score w.r.t.~$u$
and any consistent rule can be adopted to break ties.
For brevity, we drop the subscript $j$ from the above notations when $j=1$,
i.e., $\varphi(u,P)=\arg\max_{p \in P} \langle u,p \rangle$
and $\omega(u,P)=\max_{p \in P} \langle u,p \rangle$.
The top-$k$ tuples in $P$ w.r.t.~$u$ is represented as
$\Phi_{k}(u,P)=\{\varphi_{j}(u,P) : 1 \leq j \leq k\}$.
Given a real number $\varepsilon \in (0,1)$,
the $\varepsilon$-approximate top-$k$ tuples in $P$ w.r.t.~$u$ is denoted
as $\Phi_{k,\varepsilon}(u,P)=\{p \in P \,:\, \langle u,p \rangle \geq (1-\varepsilon)\cdot\omega_{k}(u,P)\}$,
i.e., the set of tuples whose scores are at least $(1-\varepsilon)\cdot\omega_{k}(u,P)$.

For a subset $Q \subseteq P$ and an integer $k \geq 1$,
we define the \emph{$k$-regret ratio} of $Q$ over $P$ for a utility vector $u$ by
$\mathtt{rr}_k(u,Q)=\max\big(0,1-\frac{\omega(u,Q)}{\omega_{k}(u,P)}\big)$,
i.e., the relative loss of replacing the $k$\textsuperscript{th}-ranked tuple in $P$ by
the top-ranked tuple in $Q$.
Since it is required to consider the preferences of all possible users, our goal is to find
a subset whose $k$-regret ratio is small for an arbitrary utility vector.
Therefore, we define the \emph{maximum $k$-regret ratio} of $Q$ over $P$
by $\mathtt{mrr}_k(Q)=\max_{u \in \mathbb{U}}\mathtt{rr}_k(u,Q)$.
Intuitively, $\mathtt{mrr}_k(Q)$ measures how well the top-ranked tuple of $Q$
approximates the $k$\textsuperscript{th}-ranked tuple of $P$ in the worst case.
For a real number $\varepsilon \in (0,1)$,
$Q$ is said to be a $(k,\varepsilon)$-regret set of $P$ iff $\mathtt{mrr}_k(Q) \leq \varepsilon$,
or equivalently, $\varphi(u,Q) \in \Phi_{k,\varepsilon}(u,P)$ for any $u \in \mathbb{U}$.
By definition, it holds that $\mathtt{mrr}_k(Q) \in [0,1]$.

\begin{example}
  Fig.~\ref{fig:database} illustrates a database $P$ in $\mathbb{R}^2_+$
  with $8$ tuples $\{p_1,\ldots,p_8\}$.
  For utility vectors $u_1=(0.42,0.91)$ and $u_2=(0.91,0.42)$,
  their top-$2$ results are $\Phi_2(u_1,P)=\{p_1,p_2\}$ and $\Phi_2(u_2,P)=\{p_2,p_4\}$, respectively.
  Given a subset $Q_1=\{p_3,p_4\}$ of $P$, $\mathtt{rr}_2(u_1,Q_1)=1-\frac{0.749}{0.98} \approx 0.236$
  as $\omega(u_1,Q_1)=\langle u_1,p_3 \rangle=0.749$ and $\omega_2(u_1,P)=\langle u_1,p_2 \rangle=0.98$.
  Furthermore, $\mathtt{mrr}_2(Q_1) \approx 0.444$
  because $\mathtt{rr}_2(u,Q_1)$ is the maximum when $u=(0.0,1.0)$
  with $\mathtt{rr}_2(u,Q_1) = 1-\frac{5}{9} \approx 0.444$.
  Finally, $Q_2=\{p_1,p_2,p_4\}$ is a $(2,0)$-regret set of $P$ since $\mathtt{mrr}_2(Q_2)=0$.
\end{example}

\subsection{K-Regret Minimizing Set}

Based on the notion of \emph{maximum $k$-regret ratio},
we can formally define the $k$-\emph{regret minimizing set}
($k$-RMS) problem in the following.
\begin{definition}[$k$-Regret Minimizing Set]\label{def:k-rms}
  Given a database $P \subset \mathbb{R}_+^d$ and a size constraint $r \in \mathbb{Z}^+$
  ($r \geq d$), the $k$-regret minimizing set ($k$-RMS) problem
  returns a subset $Q^{*} \subseteq P$ of at most $r$ tuples
  with the smallest \emph{maximum $k$-regret ratio},
  i.e., $Q^{*}=\arg\min_{Q \subseteq P \,:\, |Q| \leq r} \mathtt{mrr}_k(Q)$.
\end{definition}
For any given $k$ and $r$, we denote the $k$-RMS problem by $\mathtt{RMS}(k,r)$
and the \emph{maximum $k$-regret ratio} of the optimal result $Q^{*}$ for $\mathtt{RMS}(k,r)$
by $\varepsilon^{*}_{k,r}$.
In particular, the $r$-regret query studied
in~\cite{DBLP:conf/icde/PengW14,DBLP:conf/sigmod/AsudehN0D17,DBLP:conf/sigmod/XieW0LL18,
DBLP:journals/pvldb/NanongkaiSLLX10}
is a special case of our $k$-RMS problem when $k=1$, i.e., $1$-RMS.

\begin{example}
  Let us continue with the example in Fig.~\ref{fig:database}.
  For a query $\mathtt{RMS}(2,2)$ on $P$, we have $Q^*=\{p_1,p_4\}$
  with $\varepsilon^*_{2,2}=\mathtt{mrr}_2(Q^*) \approx 0.05$
  because $\{p_1,p_4\}$ has the smallest maximum $2$-regret ratio
  among all size-$2$ subsets of $P$.
\end{example}

In this paper, we focus on the fully-dynamic $k$-RMS problem.
We consider an initial database $P_0$ and a (possibly countably infinite) sequence of operations
$\Delta = \langle \Delta_1,\Delta_2,\ldots \rangle$.
At each timestamp $t$ ($t \in \mathbb{Z}^+$), the database is updated from $P_{t-1}$
to $P_t$ by performing an operation $\Delta_t$ of one of the following two types:
\begin{itemize}
  \item \textbf{Tuple insertion} $\Delta_t=\langle p,+ \rangle$:
  add a new tuple $p$ to $P_{t-1}$, i.e., $P_t \gets P_{t-1}\cup\{p\}$;
  \item \textbf{Tuple deletion} $\Delta_t=\langle p,- \rangle$:
  delete an existing tuple $p$ from $P_{t-1}$, i.e., $P_t \gets P_{t-1}\setminus\{p\}$.
\end{itemize}
Note that the update of a tuple can be processed by a deletion followed by an insertion,
and thus is not discussed separately in this paper.
Given an initial database $P_0$, a sequence of operations $\Delta$,
and a query $\mathtt{RMS}(k,r)$,
we aim to keep track of the result $Q^{*}_t$ for $\mathtt{RMS}(k,r)$ on $P_t$ at any time $t$.

Fully-dynamic $k$-RMS faces two challenges.
First, the $k$-RMS problem is
NP-hard~\cite{DBLP:conf/icdt/CaoLWWWWZ17,DBLP:journals/pvldb/ChesterTVW14,DBLP:conf/wea/AgarwalKSS17}
for any $d \geq 3$. Thus, the optimal solution of $k$-RMS is intractable
for any database with three or more attributes unless P=NP in both static and dynamic settings.
Hence, we will focus on maintaining an approximate result of $k$-RMS in this paper.
Second, existing $k$-RMS algorithms can only work in the static setting.
They must recompute the result from scratch once an operation
triggers any update in the skyline
(Note that since the result of $k$-RMS is a subset of the
skyline~\cite{DBLP:journals/pvldb/NanongkaiSLLX10,DBLP:journals/pvldb/ChesterTVW14},
it remains unchanged for any operation on non-skyline tuples).
However, frequent recomputation leads to significant overhead
and causes low efficiency on highly dynamic databases.
Therefore, we will propose a novel method for fully-dynamic $k$-RMS
that can maintain a high-quality result for $\mathtt{RMS}(k,r)$
on a database w.r.t.~any tuple insertion and deletion efficiently.

%!TEX root = ../main.tex
\section{The FD-RMS Algorithm}
\label{sec:framework}

\begin{figure}
  \centering
  \includegraphics[width=.475\textwidth]{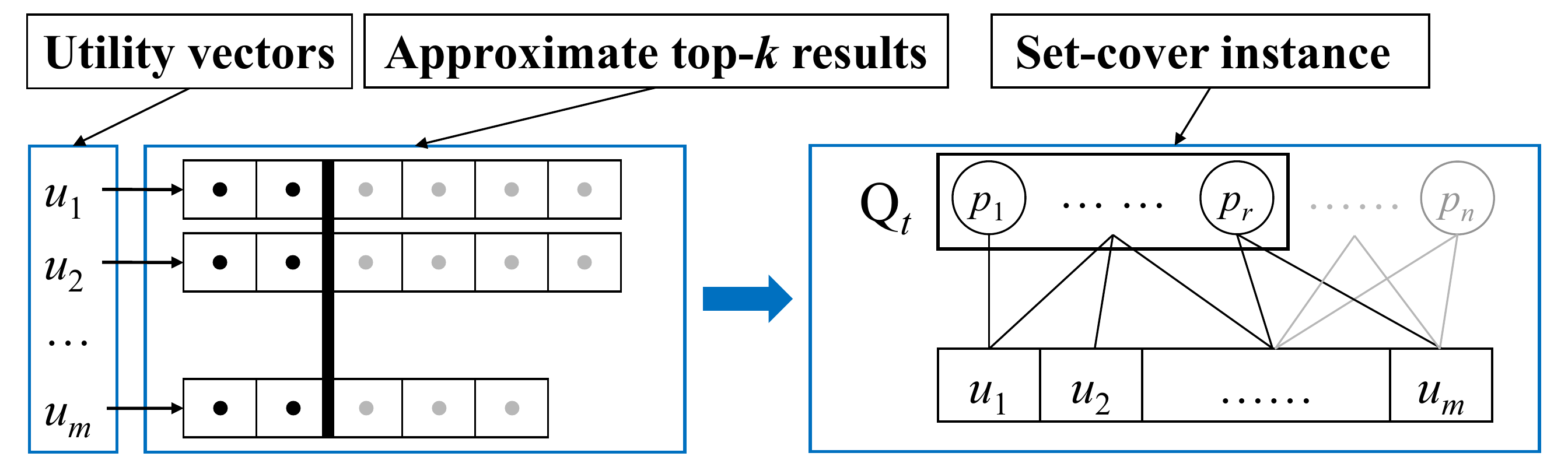}
  \caption{An illustration of FD-RMS}
  \label{fig:framework}
\end{figure}

In this section, we present our FD-RMS algorithm for $k$-RMS in a fully dynamic setting.
The general framework of FD-RMS is illustrated in Fig.~\ref{fig:framework}.
The basic idea is to transform \emph{fully-dynamic $k$-RMS}
to a \emph{dynamic set cover} problem.
Let us consider how to compute the result of $\mathtt{RMS}(k,r)$ on database $P_t$.
First of all, we draw a set of $m$ random utility vectors $\{u_1,\ldots,u_m\}$
from $ \mathbb{U} $ and maintain the $\varepsilon$-approximate top-$k$ result
of each $u_i$ ($i \in [1,m]$) on $P_t$, i.e., $ \Phi_{k,\varepsilon}(u_i,P_t) $.
Note that $\varepsilon$ should be given as an input parameter of FD-RMS
and we will discuss how to specify its value at the end of Section~\ref{sec:framework}.
Then, we construct a set system $ \Sigma = (\mathcal{U},\mathcal{S}) $
based on the approximate top-$k$ results,
where the universe $\mathcal{U}=\{u_1,\ldots,u_m\}$
and the collection $\mathcal{S}$ consists of $n_t$ sets ($n_t=|P_t|$)
each of which corresponds to one tuple in $P_t$.
Specifically, for each tuple $p \in P_t$, we define $S(p)$ as a set of
utility vectors for which $p$ is an $\varepsilon$-approximate top-$k$ result on $P_t$.
Or formally, $S(p) = \{u \in \mathcal{U} : p \in \Phi_{k,\varepsilon}(u,P_t)\}$
and $\mathcal{S}=\{ S(p) : p \in P_t \}$.
After that, we compute a result $Q_t$ for $\mathtt{RMS}(k,r)$ on $P_t$ 
using an (approximate) solution for set cover on $\Sigma$.
Let $\mathcal{C} \subseteq \mathcal{S}$ be a set-cover solution of $\Sigma$,
i.e., $\bigcup_{S(p) \in \mathcal{C}} S(p)=\mathcal{U}$.
We use the set $Q_t$ of tuples corresponding to $\mathcal{C}$,
i.e., $Q_t=\{p \in P_t : S(p) \in \mathcal{C}\}$,
as the result of $\mathtt{RMS}(k,r)$ on $P_t$.
Given the above framework, there are still two challenges of updating the result of $k$-RMS
in a fully dynamic setting. Firstly, because the size of $Q_t$ is restricted to $r$,
it is necessary to always keep an appropriate value of $m$ over time
so that $|\mathcal{C}| \leq r$.
Secondly, the updates in approximate top-$k$ results triggered by tuple insertions
and deletions in the database
lead to the changes in the set collection $\mathcal{S}$.
Therefore, it is essential to maintain the set-cover
solution $\mathcal{C}$ over time for the changes in $\mathcal{S}$.
In fact, both challenges can be treated as a \emph{dynamic set cover} problem that keeps
a set-cover solution w.r.t.~changes in both $\mathcal{U}$ and $\mathcal{S}$.
Therefore, we will first introduce the background on \emph{dynamic set cover}
in Section~\ref{subsec:set:cover}.
After that, we will elaborate on how FD-RMS processes $k$-RMS in a fully dynamic setting
using the \emph{dynamic set cover} algorithm in Section~\ref{subsec:alg}.

\subsection{Background: Dynamic Set Cover}
\label{subsec:set:cover}

Given a set system $\Sigma=(\mathcal{U},\mathcal{S})$, the \emph{set cover} problem asks for
the smallest subset $\mathcal{C}^{*}$ of $\mathcal{S}$ whose union equals to the universe $\mathcal{U}$.
It is one of Karp's 21 NP-complete problems~\cite{DBLP:conf/coco/Karp72},
and cannot be approximated to $(1-o(1))\cdot\ln{m}$ ($m=|\mathcal{U}|$) unless P=NP~\cite{DBLP:journals/jacm/Feige98}.
A common method to find an approximate set-cover solution is the greedy algorithm.
Starting from $\mathcal{C} = \varnothing$, it always adds the set that contains the largest number of uncovered
elements in $\mathcal{U}$ to $\mathcal{C}$ at each iteration until $\bigcup_{S \in \mathcal{C}}S=\mathcal{U}$.
Theoretically, the solution $\mathcal{C}$ achieves an approximation ratio of $(1+\ln{m})$,
i.e., $|\mathcal{C}| \leq (1+\ln{m}) \cdot |\mathcal{C}^{*}|$.
But obviously, the greedy algorithm cannot dynamically update the set-cover solution
when the set system $\Sigma$ is changed.

Recently, there are some theoretical advances on covering and relevant problems
(e.g., vertex cover, maximum matching, set cover, and maximal independent set) in dynamic
settings~\cite{DBLP:journals/siamcomp/BhattacharyaHI18,DBLP:conf/stoc/GuptaK0P17,DBLP:conf/stoc/AbboudA0PS19,DBLP:conf/stacs/HjulerIPS19}.
Although these theoretical results have opened up new ways to design dynamic set cover algorithms,
they cannot be directly applied to the update procedure of FD-RMS because of two limitations.
First, existing dynamic algorithms for set cover~\cite{DBLP:conf/stoc/AbboudA0PS19,DBLP:conf/stoc/GuptaK0P17}
can only handle the update in the universe $\mathcal{U}$ but assume that the set collection $\mathcal{S}$ is not changed.
But in our scenario, the changes in top-$k$ results lead to the update of $\mathcal{S}$.
Second, due to the extremely large constants introduced in their analyses,
the solutions returned may be far away from the optima in practice.

Therefore, we devise a more practical approach to dynamic set cover
that supports any update in both $\mathcal{U}$ and $\mathcal{S}$.
Our basic idea is to introduce the notion of \emph{stability} to a set-cover solution.
Then, we prove that any stable solution is $O(\log{m})$-approximate ($m=|\mathcal{U}|$) for set cover.
Based on this result, we are able to design an algorithm to maintain a set-cover solution
w.r.t.~any change in $\Sigma$ by guaranteeing its \emph{stability}.

We first formalize the concept of \emph{stability} of a set-cover solution.
Let $\mathcal{C} \subseteq \mathcal{S}$ be a set-cover solution on $\Sigma=(\mathcal{U},\mathcal{S})$.
We define an assignment $\phi$ from each element $u \in \mathcal{U}$ to a \emph{unique} set $S \in \mathcal{C}$
that contains $u$ (or formally, $\phi: \mathcal{U} \rightarrow \mathcal{C}$).
For each set $S \in \mathcal{C}$, its cover set $\mathtt{cov}(S)$ is defined as the set of elements assigned to $S$,
i.e., $\mathtt{cov}(S) = \{u \in \mathcal{U} \,:\, \phi(u)=S\}$.
By definition, the cover sets of different sets in $\mathcal{C}$ are \emph{mutually disjoint} from each other.
Then, we can organize the sets in $\mathcal{C}$ into hierarchies according to the numbers of elements covered by them.
Specifically, we put a set $S \in \mathcal{C}$ in a higher level if it covers more elements and vice versa.
We associate each level $\mathcal{L}_j$ ($j \in \mathbb{N}$) with a range of cover
number\footnote{Here, the base $2$ may be replaced by any constant greater than 1.}
$[2^{j},2^{j+1})$.
Each set $S \in \mathcal{C}$ is assigned to a level $\mathcal{L}_j$ if $2^{j} \leq |\mathtt{cov}(S)| < 2^{j+1}$.
We use $A_j$ to denote the set of elements assigned to any set in $\mathcal{L}_j$,
i.e., $A_j = \{u \in \mathcal{U} \,:\, \phi(u) \in \mathcal{L}_j\}$.
Moreover, the notations $\mathcal{L}$ with subscripts, i.e.,
$\mathcal{L}_{>j}$ or $\mathcal{L}_{\geq j}$ and $\mathcal{L}_{<j}$ or $\mathcal{L}_{\leq j}$,
represent the sets in all the levels above (excl.~or incl.) and below $\mathcal{L}_j$ (excl.~or incl.), respectively.
The same subscripts are also used for $A$.
Based on the above notions, we formally define the \emph{stability} of a set-cover solution in Definition~\ref{def:stable}
and give its approximation ratio in Theorem~\ref{thm:approx:ratio}.

\begin{definition}[Stable Set-Cover Solution]\label{def:stable}
  A solution $\mathcal{C}$ for \emph{set cover} on $\Sigma=(\mathcal{U},\mathcal{S})$
  is stable if:
  \begin{enumerate}
    \item For each set $S \in \mathcal{L}_j$, $2^{j} \leq |\mathtt{cov}(S)| < 2^{j+1}$;
    \item For each level $\mathcal{L}_j$, there is no $S \in \mathcal{S}$ s.t. $|S \cap A_j| \geq 2^{j+1}$.
  \end{enumerate}
\end{definition}
\begin{theorem}\label{thm:approx:ratio}
  If a set-cover solution $\mathcal{C}$ is stable,
  then it satisfies that $|\mathcal{C}| \leq O(\log m) \cdot |\mathcal{C}^{*}|$.
\end{theorem}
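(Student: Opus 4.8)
The plan is to prove the bound by a level-by-level charging argument: within each level $\mathcal{L}_j$ I compare the number of sets $\mathcal{C}$ uses against a lower bound on $|\mathcal{C}^{*}|$ that comes from the stability conditions, and then sum over the $O(\log m)$ nonempty levels. Throughout I assume w.l.o.g.\ that every $S \in \mathcal{C}$ has $\mathtt{cov}(S) \neq \varnothing$ (any set with an empty cover set can simply be discarded). Then $\{\mathtt{cov}(S) : S \in \mathcal{C}\}$ is a partition of $\mathcal{U}$, so $\sum_{j}|A_j| = m$, and every $S \in \mathcal{C}$ belongs to exactly one level $\mathcal{L}_j$, whence $|\mathcal{C}| = \sum_{j}|\mathcal{L}_j|$.

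First I would bound $|\mathcal{L}_j|$ from above using stability condition~1. Every $S \in \mathcal{L}_j$ satisfies $|\mathtt{cov}(S)| \geq 2^{j}$, and the cover sets of the sets in $\mathcal{L}_j$ are mutually disjoint subsets of $A_j$; hence $|\mathcal{L}_j| \cdot 2^{j} \leq |A_j|$, i.e.\ $|\mathcal{L}_j| \leq |A_j| / 2^{j}$. Next I would lower-bound $|\mathcal{C}^{*}|$ using stability condition~2. The optimal cover $\mathcal{C}^{*} \subseteq \mathcal{S}$ must cover all of $A_j$, and by condition~2 each set of $\mathcal{S}$ — and therefore each set of $\mathcal{C}^{*}$ — contains at most $2^{j+1}-1$ elements of $A_j$. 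Consequently $|A_j| \leq (2^{j+1}-1)\cdot|\mathcal{C}^{*}| < 2^{j+1}\cdot|\mathcal{C}^{*}|$. Combining the two inequalities gives, for every level $j$, $|\mathcal{L}_j| \leq |A_j|/2^{j} < 2\cdot|\mathcal{C}^{*}|$.

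Finally I would count the levels. Since $|\mathtt{cov}(S)| \leq |\mathcal{U}| = m$ for every $S \in \mathcal{C}$, a set can lie in $\mathcal{L}_j$ only for $0 \leq j \leq \lfloor \log_2 m \rfloor$, so there are at most $\lfloor \log_2 m \rfloor + 1$ nonempty levels. Summing the per-level bound over them yields $|\mathcal{C}| = \sum_{j}|\mathcal{L}_j| < 2\,(\lfloor \log_2 m \rfloor + 1)\cdot|\mathcal{C}^{*}| = O(\log m)\cdot|\mathcal{C}^{*}|$, which is the claim.

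The \emph{main obstacle} — or rather the one genuinely nontrivial idea — is recognizing that stability condition~2 is precisely what supplies the matching lower bound on $|\mathcal{C}^{*}|$ inside each level: without a cap on how many elements of $A_j$ a single set may cover, one optimal set could absorb all of $A_j$ and the level-wise comparison would break down. Once that is seen, the rest is routine packing/counting, and the base $2$ in the level definition only influences the constant hidden in the $O(\log m)$ (replacing it by any constant $c>1$ gives an $O(\log_c m)$ bound).
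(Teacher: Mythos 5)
Your proof is correct and follows essentially the same level-wise charging argument as the paper: Condition~(1) packs disjoint cover sets of size at least $2^j$ into $A_j$ to bound $|\mathcal{L}_j|$ from above, Condition~(2) caps $|S\cap A_j|$ to force $|\mathcal{C}^*|\geq |A_j|/2^{j+1}$, and there are $O(\log m)$ levels. The only cosmetic difference is that the paper singles out a threshold level $j^*$ (with $2^{j^*}\leq m/|\mathcal{C}^*|<2^{j^*+1}$) and bounds all levels above it in aggregate using only Condition~(1), whereas you apply the per-level bound uniformly to every level; both yield the same $2+2\log_2 m$ constant.
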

\begin{proof}
  Let $\mathtt{OPT}=|\mathcal{C}^{*}|$, $\rho^{*}=\frac{m}{\mathtt{OPT}}$,
  and $j^{*}$ be the level index such that $2^{j^{*}} \leq \rho^{*} < 2^{j^{*}+1}$.
  According to Condition (1) of Definition~\ref{def:stable},
  we have $|\mathtt{cov}(S)| \geq 2^{j^{*}}$ for any $S \in \mathcal{L}_{\geq j^{*}}$.
  Thus, it holds that
  $|\mathcal{L}_{\geq j^{*}}| \leq \frac{|A_{\geq j^{*}}|}{2^{j^{*}}} \leq \frac{m}{2^{j^{*}}}
  \leq \frac{\rho^{*}}{2^{j^{*}}} \cdot \mathtt{OPT} \leq 2 \cdot \mathtt{OPT}$.
  For some level $\mathcal{L}_j$ with $j < j^{*}$,
  according to Condition (2) of Definition~\ref{def:stable}, any $S \in \mathcal{S}$
  covers at most $2^{j+1}$ elements in $A_j$.
  Hence, $\mathcal{S}^{*}$ needs at least $\frac{|A_j|}{2^{j+1}}$ sets
  to cover $A_j$, i.e., $\mathtt{OPT} \geq \frac{|A_j|}{2^{j+1}}$.
  Since $|\mathtt{cov}(S)| \geq 2^{j}$ for each $S \in \mathcal{L}_j$, it holds that
  $|\mathcal{L}_j| \leq \frac{|A_j|}{2^{j}} \leq 2 \cdot \mathtt{OPT}$.
  As $1 \leq |\mathtt{cov}(S)| \leq m$, the range of level index is $[0,\log_{2}{m}]$.
  Thus, the number of levels below $\mathcal{L}_{j^{*}}$ is at most $\log_{2}{m}$.
  To sum up, we prove that
  \begin{equation*}\textstyle
    |\mathcal{C}| = |\mathcal{L}_{\geq j^{*}}| + |\mathcal{L}_{< j^{*}}|
    \leq (2 + 2 \log_{2}{m}) \cdot \mathtt{OPT}
  \end{equation*}
  and conclude the proof.
\end{proof}

\begin{algorithm}[t]
  \caption{\textsc{Dynamic Set Cover}}\label{alg:set:cover}
  \Input{Set system $\Sigma$, set operation $\sigma$}
  \Output{Stable set-cover solution $\mathcal{C}$}
  \tcc{compute an initial solution $\mathcal{C}$ on $\Sigma$}
  $ \mathcal{C} \gets$ \textsc{Greedy}$(\Sigma)$\;\label{ln:cover:init}
  \tcc{update $\mathcal{C}$ for $\sigma=(u,S,\pm)$ or $(u,\mathcal{U},\pm)$}
  \uIf{$\sigma=(u,S,-)$ and $u \in \mathtt{cov}(S)$\label{ln:sigma:s}}{
    $ \mathtt{cov}(S) \gets \mathtt{cov}(S) \setminus \{u\} $\;
    $ \mathtt{cov}(S^+) \gets \mathtt{cov}(S^+) \cup \{u\} $ for $S^+ \in \mathcal{S}$
    s.t.~$u \in S^+$\;
    \textsc{ReLevel}$(S)$ and \textsc{ReLevel}$(S^+)$\;\label{ln:move:1}
  }
  \uElseIf{$\sigma=(u,\mathcal{U},+)$}{
    $ \mathtt{cov}(S^+) \gets \mathtt{cov}(S^+) \cup \{u\} $ for $S^+ \in \mathcal{S}$
    s.t.~$u \in S^+$\;
    \textsc{ReLevel}$(S^+)$\;\label{ln:move:2}
  }
  \ElseIf{$\sigma=(u,\mathcal{U},-)$}{
    $ \mathtt{cov}(S^-) \gets \mathtt{cov}(S^-) \setminus \{u\} $ if $u \in \mathtt{cov}(S^-)$\;
    \textsc{ReLevel}$(S^-)$\;\label{ln:move:3}
  }
  \textsc{Stabilize}$(\mathcal{C})$\;\label{ln:sigma:t}
  \BlankLine
  \Fn{\textnormal{\textsc{Greedy}$(\Sigma)$}\label{ln:greedy:s}}{
    $I \gets \mathcal{U}$, $\mathcal{L}_j \gets \varnothing$ for every $j \geq 0$\;
    \While{$I \neq \varnothing$}{
      $S^* \gets \arg\max_{S \in \mathcal{S}} |I \cap S|$,
      $\mathtt{cov}(S^*) \gets I \cap S^*$\;\label{ln:cover:max}
      Add $S^*$ to $\mathcal{L}_j$ s.t.~$2^{j} \leq |\mathtt{cov}(S^*)| < 2^{j+1}$\;
      $I \gets I \setminus \mathtt{cov}(S^*)$\;
    }
    \Return{$\mathcal{C} \gets \bigcup_{j \geq 0} \mathcal{L}_j$}\;\label{ln:greedy:t}
  }
  \Fn{\textnormal{\textsc{ReLevel}$(S)$}\label{ln:move:s}}{
    \uIf{$\mathtt{cov}(S) = \varnothing$}{
      $\mathcal{C} \gets \mathcal{C} \setminus \{S\}$\;
    }
    \Else{
      Let $\mathcal{L}_{j}$ be the current level of $S$\;
      \If{$|\mathtt{cov}(S)| < 2^{j}$ or $|\mathtt{cov}(S)| \geq 2^{j+1}$}{
        Let $j'$ be the index s.t. $2^{j'} \leq |\mathtt{cov}(S)| < 2^{j'+1}$\;
        Move $S$ from $\mathcal{L}_{j}$ to $\mathcal{L}_{j'}$\;
      }  
    }
    \label{ln:move:t}
  }
  \Fn{\textnormal{\textsc{Stabilize}$(\mathcal{C})$}\label{ln:stable:s}}{
    \While{$\exists S \in \mathcal{S}$ and $\mathcal{L}_j$ s.t. $|S \cap A_{j}|
    \geq 2^{j+1}$}{
      $\mathtt{cov}(S) \gets \mathtt{cov}(S) \cup (S \cap A_{j})$, \textsc{ReLevel}($S$)\;
      \While{$\exists S' \in \mathcal{C} : \mathtt{cov}(S) \cap \mathtt{cov}(S') \neq
      \varnothing$}{
        $\mathtt{cov}(S') \gets \mathtt{cov}(S') \setminus \mathtt{cov}(S)$,
        \textsc{ReLevel}($S'$)\;
      }
      \label{ln:stable:t}
    }
  }
\end{algorithm}
We then describe our method for \emph{dynamic set cover} in Algorithm~\ref{alg:set:cover}.
First of all, we use \textsc{Greedy} to initialize a set-cover solution $\mathcal{C}$
on $\Sigma$ (Line~\ref{ln:cover:init}).
As shown in Lines~\ref{ln:greedy:s}--\ref{ln:greedy:t},
\textsc{Greedy} follows the classic greedy algorithm for set cover, and
the only difference is that all the sets in $\mathcal{C}$ are assigned to different levels
according to the sizes of their cover sets.
Then, the procedure of updating $\mathcal{C}$ for set operation $\sigma$ is shown
in Lines~\ref{ln:sigma:s}--\ref{ln:sigma:t}.
Our method supports four types of set operations to update $\Sigma$ as follows:
$\sigma=(u,S,\pm)$, i.e., to add/remove an element $u$ to/from a set $S \in \mathcal{S}$;
$\sigma=(u,\mathcal{U},\pm)$,
i.e., to add/remove an element $u$ to/from the universe $\mathcal{U}$.
We identify three cases in which the assignment of $u$ must be changed
for $\sigma$. When $\sigma=(u,S,-)$ and $\phi(u)=S$, it will reassign
$u$ to another set containing $u$; For $\sigma=(u,\mathcal{U},\pm)$,
it will add or delete the assignment of $u$ accordingly.
After that, for each set with some change in its cover set,
it calls \textsc{ReLevel} (e.g., Lines~\ref{ln:move:1},~\ref{ln:move:2}, and~\ref{ln:move:3})
to check whether the set should be moved to a new level based on the updated size
of its cover set.
The detailed procedure of \textsc{ReLevel} is given
in Lines~\ref{ln:move:s}--\ref{ln:move:t}.
Finally, \textsc{Stabilize} (Line~\ref{ln:sigma:t}) is always called
for every $\sigma$ to guarantee the stability of $\mathcal{C}$ since
$\mathcal{C}$ may become unstable due to the changes in $\Sigma$ and $\phi(u)$.
The procedure of stabilization is presented in Lines~\ref{ln:stable:s}--\ref{ln:stable:t}.
It finds all sets that violate
Condition~(2) of Definition~\ref{def:stable} and adjust $\mathcal{C}$
for these sets until no set should be adjusted anymore.

\noindent\textbf{Theoretical Analysis:}
Next, we will analyze Algorithm~\ref{alg:set:cover} theoretically.
We first show that a set-cover solution returned by \textsc{Greedy} is stable.
Then, we prove that \textsc{Stabilize} converges to a stable solution in finite steps.

\begin{lemma}\label{lm:stability}
  The solution $\mathcal{C}$ returned by \textnormal{\textsc{Greedy}} is stable.
\end{lemma}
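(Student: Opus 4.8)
The plan is to verify the two conditions of Definition~\ref{def:stable} directly from the behaviour of \textsc{Greedy}. Condition~(1) will be immediate: whenever \textsc{Greedy} selects a set $S^{*}$ on Line~\ref{ln:cover:max} it fixes $\mathtt{cov}(S^{*}) \gets I \cap S^{*}$ once and for all and then inserts $S^{*}$ into the unique level $\mathcal{L}_{j}$ with $2^{j} \le |\mathtt{cov}(S^{*})| < 2^{j+1}$; since the cover sets are never touched again inside \textsc{Greedy}, every $S \in \mathcal{L}_{j}$ satisfies $2^{j} \le |\mathtt{cov}(S)| < 2^{j+1}$ at termination.

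The real content is Condition~(2). Here I would fix a level $\mathcal{L}_{j}$; if $\mathcal{L}_{j} = \varnothing$ then $A_{j} = \varnothing$ and there is nothing to prove, so assume $\mathcal{L}_{j} \neq \varnothing$ and let $T$ be the set that is inserted into $\mathcal{L}_{j}$ at the earliest iteration of \textsc{Greedy}. Let $I$ be the collection of still-uncovered elements at the moment just before $T$ is picked. The crux is to show $A_{j} \subseteq I$: by definition each $u \in A_{j}$ has $\phi(u) \in \mathcal{L}_{j}$, hence $u$ is covered by some set of $\mathcal{L}_{j}$; but every set of $\mathcal{L}_{j}$ is selected no earlier than $T$, so at the instant before $T$ is chosen no element of $A_{j}$ has been covered yet, i.e.\ $A_{j} \subseteq I$.

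With $A_{j} \subseteq I$ in hand, the lemma follows from one chain of inequalities, valid for every $S \in \mathcal{S}$:
\[
  |S \cap A_{j}| \le |S \cap I| \le \max_{S' \in \mathcal{S}} |S' \cap I| = |T \cap I| = |\mathtt{cov}(T)| < 2^{j+1},
\]
where the middle inequality is exactly the greedy selection rule on Line~\ref{ln:cover:max}, the first equality is how \textsc{Greedy} defines $\mathtt{cov}(T)$, and the last inequality is Condition~(1) applied to $T \in \mathcal{L}_{j}$. Thus no $S \in \mathcal{S}$ can have $|S \cap A_{j}| \ge 2^{j+1}$, which is precisely Condition~(2), and the two conditions together give stability. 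I expect the only delicate point to be stating carefully why $A_{j} \subseteq I$ --- i.e.\ that $A_{j}$ equals $\bigcup_{T' \in \mathcal{L}_{j}} \mathtt{cov}(T')$ and that the \emph{first} set entering $\mathcal{L}_{j}$ is the correct pivot for the argument --- together with remembering the empty-level case; the remaining arithmetic is routine.
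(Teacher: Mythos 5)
Your proof is correct and follows essentially the same route as the paper's: verify Condition~(1) from the level-assignment rule, then for Condition~(2) pivot on the first set entering level $\mathcal{L}_j$, observe that $A_j$ is still uncovered at that moment, and invoke the greedy maximality of Line~\ref{ln:cover:max} to bound $|S\cap A_j|$ by $|\mathtt{cov}(T)|<2^{j+1}$. Your version is in fact slightly tidier, since it makes the inclusion $A_j\subseteq I$ explicit and covers the empty-level case, both of which the paper leaves implicit.
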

\begin{proof}
  First of all, it is obvious that each set $S \in \mathcal{C}$ is assigned to the correct level
  according to the size of its cover set and Condition (1) of Definition~\ref{def:stable} is satisfied.
  Then, we sort the sets in $\mathcal{C}$ as $S^*_1,\ldots,S^*_{|\mathcal{C}|}$
  by the order in which they are added.
  Let $S^*_i$ be the set s.t.~$|\mathtt{cov}(S^*_i)| < 2^{j+1}$
  and $|\mathtt{cov}(S^*_{i'})| \geq 2^{j+1}$ for any $i'<i$,
  i.e., $S^*_i$ is the first set added to level $\mathcal{L}_j$.
  We have $|I \cap S^*_i| = |\mathtt{cov}(S^*_i)| < 2^{j+1}$
  where $I$ is the set of uncovered elements before $S^*_i$ is added to $\mathcal{C}$.
  If there were a set $S \in \mathcal{S}$ such that $|S \cap A_j| > 2^{j+1}$,
  we would acquire $|I \cap S| \geq |S \cap A_j| > 2^{j+1}$ and $|I \cap S|>|I \cap S^*_i|$,
  which contradicts with Line~\ref{ln:cover:max} of Algorithm~\ref{alg:rms:init}.
  Thus, $\mathcal{C}$ must satisfy Condition (2) of Definition~\ref{def:stable}.
  To sum up, $\mathcal{C}$ is a stable solution.
\end{proof}

\begin{lemma}\label{lm:termination}
  The procedure \textnormal{\textsc{Stabilize}} converges to a stable solution in $O(m\log{m})$ steps.
\end{lemma}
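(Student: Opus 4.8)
The plan is to track a single non-negative integer potential on the current state $(\mathcal{C},\phi)$, namely $\Phi=\sum_{u\in\mathcal{U}}\ell(u)$ where $\ell(u)$ denotes the index $j$ of the level $\mathcal{L}_j$ containing $\phi(u)$, and to establish two facts: (i) $\Phi$ never exceeds $m\lfloor\log_{2}m\rfloor$, and (ii) every iteration of the outer loop of \textsc{Stabilize} strictly increases $\Phi$. Fact (i) is immediate, since $1\le|\mathtt{cov}(S)|\le m$ for each $S\in\mathcal{C}$ forces every level index, and therefore every $\ell(u)$, to lie in $[0,\lfloor\log_{2}m\rfloor]$. Given (i) and (ii), the outer loop — hence \textsc{Stabilize} — terminates after at most $m\log_{2}m$ iterations, which is the claimed $O(m\log m)$ bound.

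For (ii), I would first isolate the gain produced by one iteration, say the one fired by a pair $(S,\mathcal{L}_j)$ with $|S\cap A_j|\ge 2^{j+1}$. After the update $\mathtt{cov}(S)\gets\mathtt{cov}(S)\cup(S\cap A_j)$ we have $|\mathtt{cov}(S)|\ge 2^{j+1}$, so \textsc{ReLevel} moves $S$ to a level $\mathcal{L}_{j'}$ with $j'\ge j+1$; since the algorithm keeps the cover sets pairwise disjoint, every element of $S\cap A_j$ that is (re)assigned to $S$ previously had $\phi(u)\in\mathcal{L}_j$, so its contribution to $\Phi$ rises from $j$ to $j'\ge j+1$. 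Summing over the at least $2^{j+1}$ reassigned elements, this \emph{absorption} step increases $\Phi$ by at least $2^{j+1}$. It then remains to show that the subsequent inner loop — in which every other set $S'$ with $\mathtt{cov}(S')\cap\mathtt{cov}(S)\neq\varnothing$ loses $\mathtt{cov}(S')\cap(S\cap A_j)$ and is re-levelled, possibly downward, thereby pulling the $\ell$-values of its retained elements down — cannot cancel this gain.

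Reconciling those downward re-levellings with the absorption gain is the step I expect to be the genuine obstacle. The structural facts to exploit are that any stripped $S'$ lay in $\mathcal{L}_j$ immediately beforehand (it still owned an element of $A_j$), so $2^{j}\le|\mathtt{cov}(S')|<2^{j+1}$, and that for $S'$ to drop $\delta\ge 1$ levels it must shed more than $2^{j}-2^{j-\delta+1}$ elements while retaining at most $2^{j-\delta+1}$ of them; each shed element is promoted by at least one level and each retained element is demoted by exactly $\delta$, which suggests charging every retained-element loss to a shed element of the same set. The awkward case — where this bare charging is too weak — is a set sitting exactly on the boundary $|\mathtt{cov}(S')|=2^{j}$, which a single removed element already demotes; here I would strengthen the argument either by replacing the level index $\ell(\cdot)$ by a strictly convex weight of it, or by invoking that \textsc{Stabilize} is only ever run on a solution that was stable before one set operation, so that the cascade of promotions and demotions it produces is short and its aggregate demotion mass is itself $O(m\log m)$. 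Once this point is settled, (i) and (ii) finish the proof.
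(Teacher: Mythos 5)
Your overall strategy---bounding the number of outer iterations by the total upward displacement of elements across the $O(\log m)$ levels, which is at most $m\lfloor\log_2 m\rfloor$---is exactly the route the paper takes, and your step (i) and the absorption gain of at least $2^{j+1}$ are both correct. The gap is the one you flag yourself: step (ii) is never established, and in fact the unweighted potential $\Phi=\sum_{u}\ell(u)$ does \emph{not} increase at every outer iteration. Concretely, let the violating pair be $(S,\mathcal{L}_j)$ with $j\geq 1$ and let the $2^{j+1}$ absorbed elements be drawn one apiece from $2^{j+1}$ distinct sets $S'$ each sitting on the boundary $|\mathtt{cov}(S')|=2^{j}$; each such $S'$ retains $2^{j}-1$ elements, drops to $\mathcal{L}_{j-1}$, and demotes all of them by one level, for a total loss of $2^{j+1}(2^{j}-1)$ against a gain of only $2^{j+1}(j'-j)$ with $j'=j+1$. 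So the ``awkward case'' you describe is not a corner case to be patched but a genuine failure of monotonicity, and neither of your proposed repairs is carried out: the convex-reweighting route forces weight ratios of order $m/2^{j}$ between consecutive levels and hence a potential of size $m^{\Theta(\log m)}$ rather than $O(m\log m)$, while the ``the cascade after a single operation is short'' route is an amortized claim that would need its own induction over the operation sequence and is in any case not what the lemma, as stated, asserts. Note that this only invalidates the proof, not necessarily the statement.

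For what it is worth, the paper's own proof follows your outline and simply asserts the monotonicity you doubt (``no element in $A_{\geq j'}$ is moved to lower levels''), which is true for levels above $j'$ but silently ignores the retained elements of the stripped level-$j$ sets---precisely the demotions you identified. So your diagnosis of where the difficulty lies is accurate; what is missing is the closing of that gap, which is the entire content of the lemma. One fact that does follow cleanly from your observations is plain termination: no set above level $j'$ is touched and $|A_{j'}|$ strictly grows, so the vector $\bigl(|A_{\lfloor\log_2 m\rfloor}|,\ldots,|A_0|\bigr)$ increases lexicographically at every iteration; but this yields only a finite bound, a priori far larger than $O(m\log m)$.
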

\begin{proof}
  For an iteration of the \texttt{while} loop (i.e., Lines~\ref{ln:stable:s}--\ref{ln:stable:t})
  that picks a set $S$ and a level $\mathcal{L}_j$,
  the new level $\mathcal{L}_{j'}$ of $S$ always satisfies $j' > j$.
  Accordingly, all the elements in $\mathtt{cov}(S)$ are moved
  from $A_{\leq j}$ to $A_{j'}$. At the same time, no element in $A_{\geq j'}$
  is moved to lower levels. Since each level contains at most $m$ elements ($|A_j| \leq m$),
  \textsc{Stabilize} moves at most $m$ elements across $O(\log m)$ levels.
  Therefore, it must terminate in $O(m \log m)$ steps.
  Furthermore, after termination, the set-cover solution $\mathcal{C}$ must satisfy
  both conditions in Definition~\ref{def:stable}.
  Thus, we conclude the proof.
\end{proof}

The above two lemmas can guarantee that the set-cover solution provided by
Algorithm~\ref{alg:set:cover} is always stable after any change in the set system.
In the next subsection, we will present how to use it for fully-dynamic $k$-RMS.

\subsection{Algorithmic Description}\label{subsec:alg}

Next, we will present how FD-RMS maintains the $k$-RMS result
by always keeping a \emph{stable} set-cover solution on a dynamic set system
built from the approximate top-$k$ results over tuple insertions and deletions.

\begin{algorithm}[t]
  \caption{\textsc{Initialization}}\label{alg:rms:init}
  \Input{Query $\mathtt{RMS}(k,r)$, initial database $P_0$,
  parameters $\varepsilon \in (0,1)$ and $M \in \mathbb{Z}^+$ ($M>r$)}
  \Output{Result $Q_0$ of $\mathtt{RMS}(k,r)$ on $P_0$}
  Draw $M$ vectors $\{u_i \in \mathbb{U} : i \in [1,M]\}$
  where the first $d$ are the standard basis of $\mathbb{R}^d_+$
  and the remaining are uniformly sampled from $\mathbb{U}$\;
  Compute $\Phi_{k,\varepsilon}(u_i,P_0)$ of every $u_i$ where $i \in [1,M]$\;
  \label{ln:rms:init:topk}
  $L \gets r$, $H \gets M$, $m \gets (L+H)/2$\;\label{ln:rms:search:s}
  \While{$\mathtt{true}$}{
    \ForEach{$p \in P_0$\label{ln:rms:set:s}}{
      $S(p) \gets \{u_i \,:\, i \in [1,m] \wedge p \in \Phi_{k,\varepsilon}(u_i,P_0) \}$\;
    }
    $\Sigma = (\mathcal{U},\mathcal{S})$ where $\mathcal{U}=\{u_i : i \in [1,m]\}$
    and $\mathcal{S}=\{S(p) : p \in P_0\})$\;\label{ln:rms:set:t}
    $\mathcal{C} \gets$ \textsc{Greedy}$(\Sigma)$\;
    \uIf{$|\mathcal{C}| < r$}{
      $L \gets m+1$, $m \gets (L+H)/2$\;
    }
    \uElseIf{$|\mathcal{C}| > r$}{
      $H \gets m-1$, $m \gets (L+H)/2$\;
    }
    \ElseIf{$|\mathcal{C}| = r$ or $m = M$}{
      \textbf{break}\;\label{ln:rms:search:t}
    }
  }
  \Return{$Q_0 \gets \{ p \in P_0 \,:\, S(p) \in \mathcal{C} \}$}\;
  \label{ln:rms:init:return}
\end{algorithm}

\noindent\textbf{Initialization:}
We first present how FD-RMS computes an initial result $Q_0$ for
$\mathtt{RMS}(k,r)$ on $P_0$ from scratch in Algorithm~\ref{alg:rms:init}.
There are two parameters in FD-RMS:
the approximation factor of top-$k$ results $\varepsilon$
and the upper bound of sample size $M$. The lower bound of sample size is set to $r$
because we can always find a set-cover solution of size equal to the size of the universe
(i.e., $m$ in FD-RMS).
First of all, it draws $M$ utility vectors $\{u_1,\ldots,u_M\}$, where the first $d$ vectors
are the standard basis of $\mathbb{R}^d_+$
and the remaining are uniformly sampled from $\mathbb{U}$,
and computes the $\varepsilon$-approximate top-$k$ result of each vector.
Subsequently, it finds an appropriate $m \in [r,M]$
so that the size of the set-cover solution on the set system $\Sigma$ built on
$\mathcal{U} = \{u_1,\ldots,u_m\}$ is exactly $r$.
The detailed procedure is as presented in Lines~\ref{ln:rms:search:s}--\ref{ln:rms:search:t}.
Specifically, it performs a binary search on range $[r,M]$ to determine the value of $m$.
For a given $m$, it first constructs a set system $\Sigma$ according to
Lines~\ref{ln:rms:set:s}--\ref{ln:rms:set:t}.
Next, it runs \textsc{Greedy} in Algorithm~\ref{alg:set:cover}
to compute a set-cover solution $\mathcal{C}$ on $\Sigma$.
After that, if $|\mathcal{C}| \neq r$ and $m < M$,
it will refresh the value of $m$ and rerun the above procedures;
Otherwise, $m$ is determined and the current set-cover solution $\mathcal{C}$
will be used to compute $Q_0$ for $\mathtt{RMS}(k,r)$.
Finally, it returns all the tuples whose corresponding sets are included in $\mathcal{C}$
as the result $Q_0$ for $\mathtt{RMS}(k,r)$ on $P_0$ (Line~\ref{ln:rms:init:return}).

\begin{algorithm}[t]
  \caption{\textsc{Update}}\label{alg:rms:update}
  \Input{Query $\mathtt{RMS}(k,r)$, database $P_{t-1}$, operation $\Delta_t$,
  set-cover solution $\mathcal{C}$}
  \Output{Result $Q_t$ for $\mathtt{RMS}(k,r)$ on $P_t$}
  Update $P_{t-1}$ to $P_t$ w.r.t.~$\Delta_t$\;\label{ln:update:database}
  \For{$i \gets 1,\ldots,M$}{
    Update $\Phi_{k,\varepsilon}(u_i,P_{t-1})$ to $\Phi_{k,\varepsilon}(u_i,P_{t})$ w.r.t.~$\Delta_t$\;\label{ln:update:topk}
  }
  Maintain $\Sigma$ based on $\Phi_{k,\varepsilon}(u_i,P_{t})$\;
  \label{ln:update:sets}
  \uIf{$\Delta_t=\langle p,+ \rangle$\label{ln:insert:s}}{
    \ForEach{$u \in S(p)$}{
      \If{$u \in \mathtt{cov}(S(p'))$ and $u \notin S(p')$}{
        Update $\mathcal{C}$ for $\sigma=(u,S(p'),-)$\;\label{ln:insert:t}
      }
    }
  }
  \ElseIf{$\Delta_t=\langle p,- \rangle$\label{ln:delete:s}}{
    Delete $S(p)$ from $\mathcal{C}$ if $S(p) \in \mathcal{C}$\;
    \ForEach{$u \in \mathtt{cov}(S(p))$}{
      Update $\mathcal{C}$ for $\sigma=(u,S(p),-)$\;\label{ln:delete:t}
    }
  }
  \If{$|\mathcal{C}| \neq r$}{
    $m, \mathcal{C} \gets$ \textsc{UpdateM}$(\Sigma)$\;
  }
  \Return{$Q_t \gets \{ p \in P_t \,:\, S(p) \in \mathcal{C} \}$}\;
\end{algorithm}

\begin{figure*}[t]
  \centering
  \begin{subfigure}{0.3\textwidth}
    \centering
    \includegraphics[width=\textwidth]{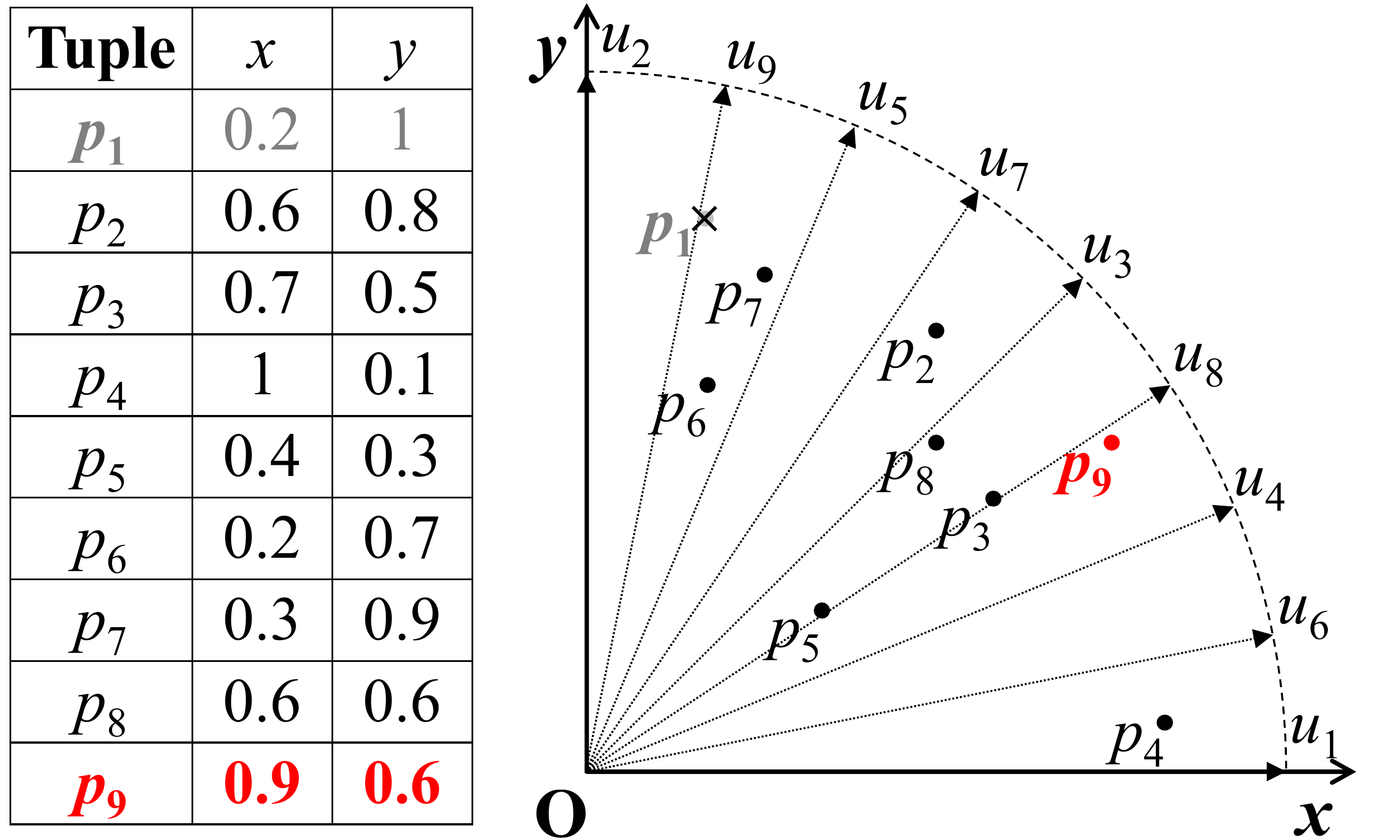}
    \caption{Dataset}\label{fig:example:database}
  \end{subfigure}
  \hfill
  \begin{subfigure}{0.225\textwidth}
    \centering
    \includegraphics[width=\textwidth]{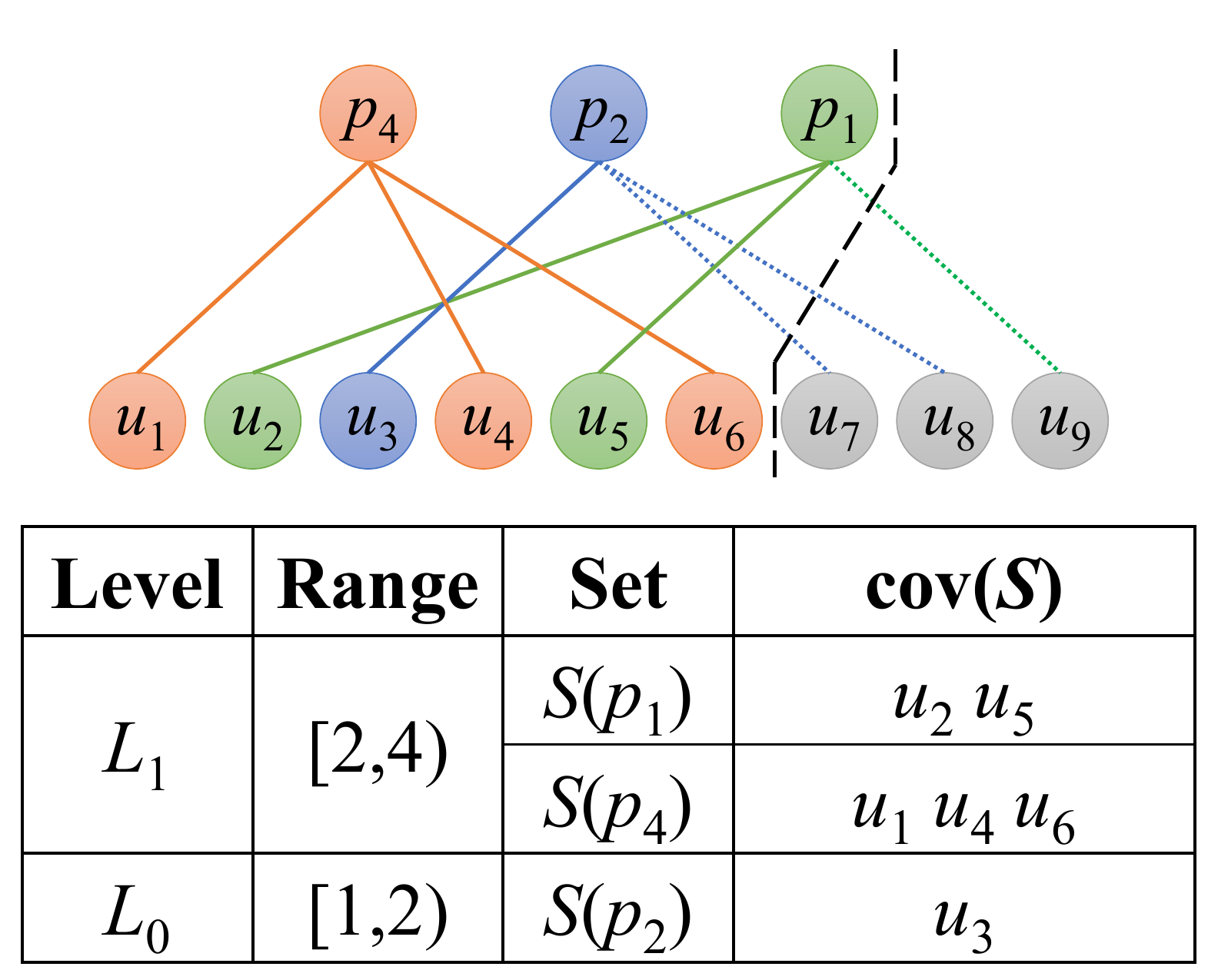}
    \caption{Initial construction}\label{fig:example:fdrms:P0}
  \end{subfigure}
  \hfill
  \begin{subfigure}{0.225\textwidth}
    \centering
    \includegraphics[width=\textwidth]{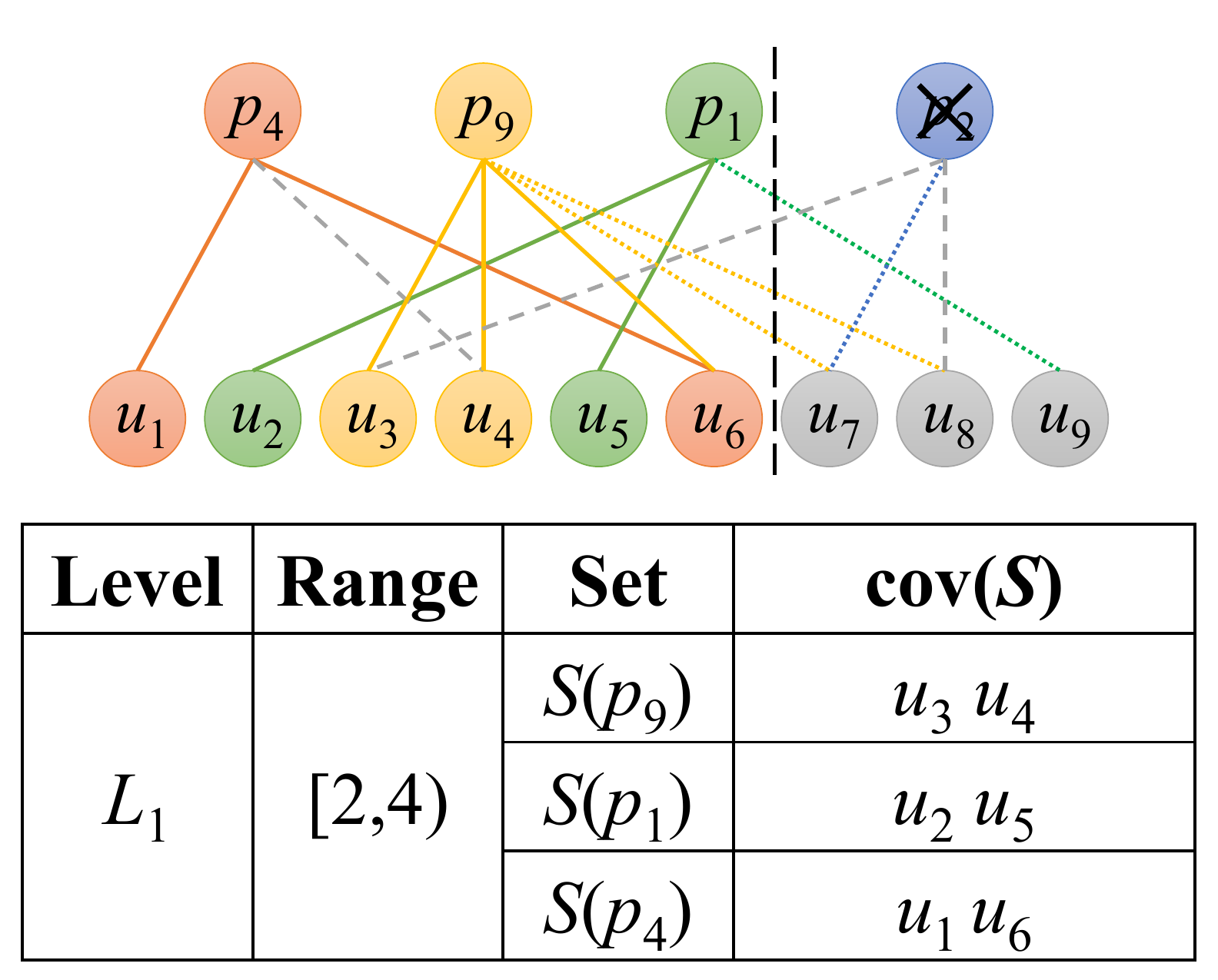}
    \caption{Add tuple $p_9$}\label{fig:example:fdrms:P1}
  \end{subfigure}
  \hfill
  \begin{subfigure}{0.225\textwidth}
    \centering
    \includegraphics[width=\textwidth]{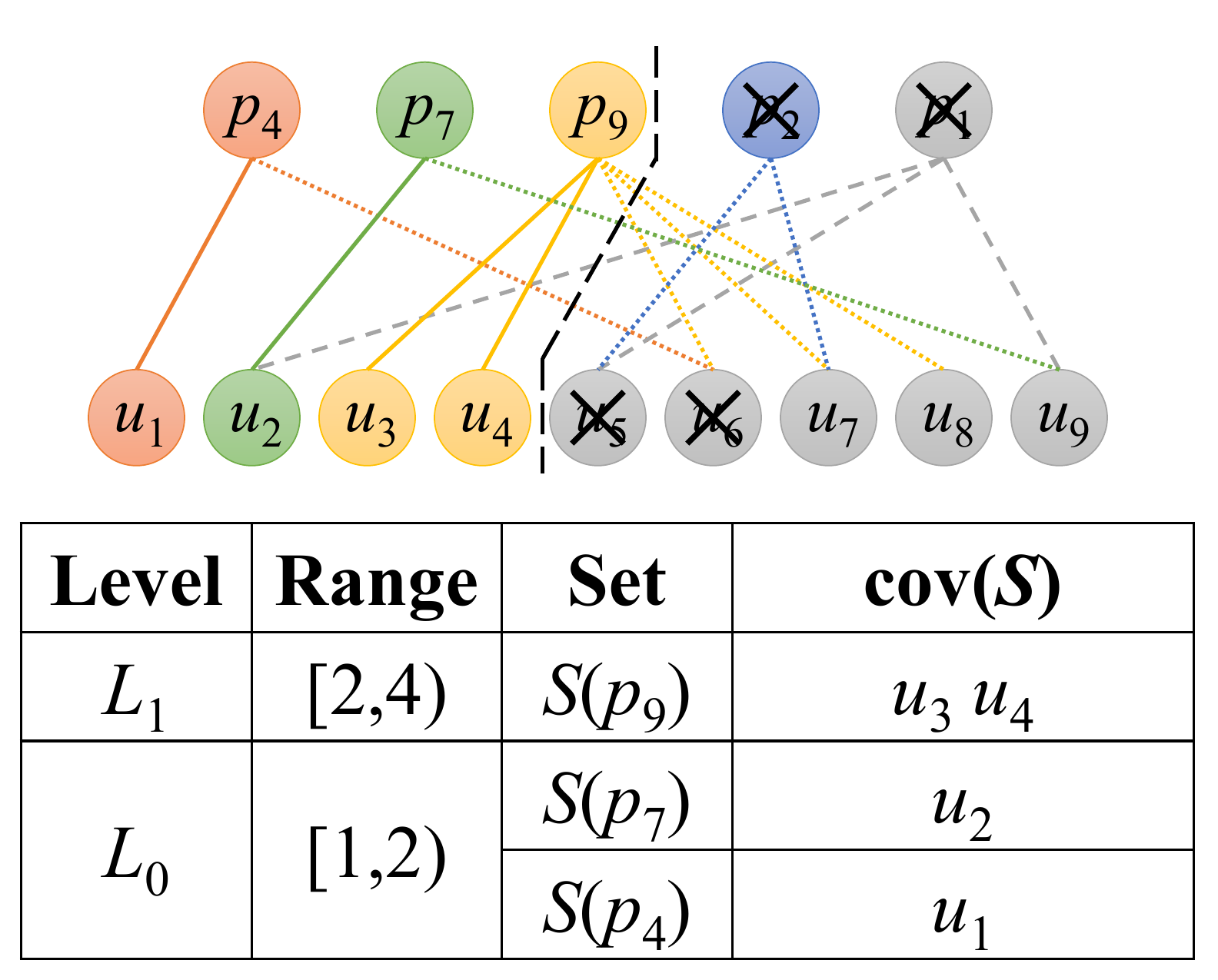}
    \caption{Delete tuple $p_1$}\label{fig:example:fdrms:P2}
  \end{subfigure}
  \caption{An example of using FD-RMS to process a $k$-RMS with $k=1$ and $r=3$}
  \label{fig:example:fdrms}
\end{figure*}

\noindent\textbf{Update:}
The procedure of updating the result of $\mathtt{RMS}(k,r)$
w.r.t.~$\Delta_t$ is shown in Algorithm~\ref{alg:rms:update}.
First, it updates the database from $P_{t-1}$ to $P_t$ and
the approximate top-$k$ result from $\Phi_{k,\varepsilon}(u_i,P_{t-1})$
to $\Phi_{k,\varepsilon}(u_i,P_t)$ for each $u_i$ w.r.t.~$\Delta_t$
(Lines~\ref{ln:update:database}--\ref{ln:update:topk}).
Then, it also maintains the set system $\Sigma$
according to the changes in approximate top-$k$ results
(Line~\ref{ln:update:sets}).
Next, it updates the set-cover solution $\mathcal{C}$
for the changes in $\Sigma$ as follows.
\begin{itemize}
\item \textbf{Insertion:} The procedure of updating $\mathcal{C}$
w.r.t.~an insertion $\Delta_t=\langle p,+ \rangle$
is presented in Lines~\ref{ln:insert:s}--\ref{ln:insert:t}.
The changes in top-$k$ results lead to two updates in $\Sigma$:
(1) the insertion of $S(p)$ to $\mathcal{S}$
and (2) a series of deletions each of which
represents a tuple $p'$ is deleted from $\Phi_{k,\varepsilon}(u,P_t)$
due to the insertion of $p$.
For each deletion, it needs to check whether $u$ is previously
assigned to $S(p')$.
If so, it will update $\mathcal{C}$ by reassigning $u$ to a new set according to
Algorithm~\ref{alg:set:cover} because $u$ has been deleted from $S(p')$.
\item \textbf{Deletion:} The procedure of updating $\mathcal{C}$
w.r.t.~a deletion $ \Delta_t=\langle p,- \rangle $
is shown in Lines~\ref{ln:delete:s}--\ref{ln:delete:t}.
In contrast to an insertion, the deletion of $p$
leads to the removal of $S(p)$ from $\mathcal{S}$
and a series of insertions.
Thus, it must delete $S(p)$ from $\mathcal{C}$.
Next, it will reassign each $u \in \mathtt{cov}(S(p))$ to a new set
according to Algorithm~\ref{alg:set:cover}.
\end{itemize}
Then, it checks whether the size of $\mathcal{C}$ is still $r$.
If not, it will update the sample size $m$ and the universe $\mathcal{U}$ so that
the set-cover solution $\mathcal{C}$ consists of $r$ sets.
The procedure of updating $m$ and $\mathcal{U}$ as well as maintaining $\mathcal{C}$
on the updated $\mathcal{U}$ is shown in Algorithm~\ref{alg:rms:update:m}.
When $|\mathcal{C}|<r$, it will add new utility vectors from $u_{m+1}$, and so on,
to the universe and maintain $\mathcal{C}$ until $|\mathcal{C}|=r$ or $m=M$.
On the contrary, if $|\mathcal{C}| > r$,
it will drop existing utility vectors from $u_{m}$, and so on, from the universe
and maintain $\mathcal{C}$ until $|\mathcal{C}|=r$.
Finally, the updated $m$ and $\mathcal{C}$ are returned.
After all above procedures, it also returns $Q_t$
corresponding to the set-cover solution $\mathcal{C}$ on the updated $\Sigma$
as the result of $\mathtt{RMS}(k,r)$ on $P_t$.

\begin{algorithm}[t]
  \caption{\textsc{UpdateM}$(\Sigma)$}\label{alg:rms:update:m}
  \Output{Updated sample size $m$ and solution $\mathcal{C}$ on $\Sigma$}
  \uIf{$|\mathcal{C}| < r$}{
    \While{$m < M$ and $|\mathcal{C}| < r$}{
      $m \gets m+1$, $\mathcal{U} \gets \mathcal{U} \cup \{u_m\}$\;
      \ForEach{$p \in \Phi_{k,\varepsilon}(u_m,P_t)$}{
        $S(p) \gets S(p) \cup \{u_m\}$\;
      }
      Update $\mathcal{C}$ for $\sigma=(u_m,\mathcal{U},+)$\;
    }
  }
  \ElseIf{$|\mathcal{C}| > r$}{
    \While{$|\mathcal{C}| > r$}{
      $\mathcal{U} \gets \mathcal{U} \setminus \{u_m\}$\;
      \ForEach{$p \in \Phi_{k,\varepsilon}(u_m,P_t)$}{
        $S(p) \gets S(p) \setminus \{u_m\}$\;
      }
      Update $\mathcal{C}$ for $\sigma=(u_m,\mathcal{U},-)$\;
      $m \gets m-1$\;
    }
  }
  \Return{$m, \mathcal{C}$}\;
\end{algorithm}

\begin{example}
  Fig.~\ref{fig:example:fdrms} illustrates an example of using FD-RMS to process a $k$-RMS
  with $k=1$ and $r=3$. Here, we set $\varepsilon=0.002$ and $M=9$.
  In Fig.~\ref{fig:example:fdrms:P0}, we show how to compute $Q_0$ for $\mathtt{RMS}(1,3)$
  on $P_0=\{p_1,\ldots,p_8\}$. It first uses $m=(3+9)/2=6$ and runs
  \textnormal{\textsc{Greedy}} to get
  a set-cover solution $\mathcal{C}=\{S(p_1), S(p_2), S(p_4)\}$.
  Since $|\mathcal{C}|=3$, it does not change $m$ anymore and returns
  $Q_0=\{p_1,p_2,p_4\}$ for $\mathtt{RMS}(1,3)$ on $P_0$. Then,
  the result of FD-RMS after the update procedures for~$\Delta_1 = \langle p_9,+ \rangle$ as
  Algorithm~\ref{alg:rms:update} is shown in Fig.~\ref{fig:example:fdrms:P1}.
  For $\mathtt{RMS}(1,3)$ on $P_1=\{p_1,\ldots,p_9\}$,
  the result $Q_1$ is updated to $\{p_1,p_4,p_9\}$.
  Finally, after the update procedures for $ \Delta_2 = \langle p_1,- \rangle $,
  as shown in Fig.~\ref{fig:example:fdrms:P2}, $m$ is updated to $4$ and the result
  $Q_2$ for $\mathtt{RMS}(1,3)$ on $P_2$ is $\{p_4,p_7,p_9\}$.
\end{example}

\noindent\textbf{Theoretical Bound:}
The theoretical bound of FD-RMS is analyzed as follows.
First of all, we need to verify the set-cover solution $\mathcal{C}$ maintained by
Algorithms~\ref{alg:rms:init}--\ref{alg:rms:update:m} is always stable.
According to Lemma~\ref{lm:stability}, it is guaranteed that the set-cover
solution $\mathcal{C}$ returned by Algorithm~\ref{alg:rms:init} is stable.
Then, we need to show it remains stable
after the update procedures of Algorithms~\ref{alg:rms:update} and~\ref{alg:rms:update:m}.
In fact, both algorithms use Algorithm~\ref{alg:set:cover} to maintain
the set-cover solution $\mathcal{C}$. Hence, the stability of $\mathcal{C}$
can be guaranteed by Lemma~\ref{lm:termination} since \textsc{Stabilize}
is always called after every update in Algorithm~\ref{alg:set:cover}.

Next, we indicate the relationship between the result of $k$-RMS
and the \emph{set-cover} solution
and provide the bound on the maximum-$k$ regret ratio of $Q_{t}$
returned by FD-RMS on $P_t$.
\begin{theorem}\label{thm:rms:ratio}
  The result $Q_t$ returned by FD-RMS is
  a $\big(k,O(\varepsilon^{*}_{k,r'}+\delta)\big)$-regret set of $P_t$ with high probability
  where $r'=O(\frac{r}{\log{m}})$ and $\delta=O(m^{-\frac{1}{d}})$.
\end{theorem}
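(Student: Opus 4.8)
The plan is to decompose the maximum $k$-regret ratio of $Q_t$ into two sources of error and bound each separately. Recall that $Q_t = \{p \in P_t : S(p) \in \mathcal{C}\}$ where $\mathcal{C}$ is a stable set-cover solution on the set system $\Sigma = (\mathcal{U},\mathcal{S})$ built from the $\varepsilon$-approximate top-$k$ results of the $m$ sampled utility vectors $u_1,\dots,u_m$. The first error source is \emph{discretization}: the finite sample $\mathcal{U}$ only ``witnesses'' the regret on $m$ directions rather than all of $\mathbb{U}$. The second is \emph{approximation}: we use $\varepsilon$-approximate top-$k$ sets rather than exact ones, and we return $r = O(r'\log m)$ tuples rather than $r'$, where $r'$ is the size for which the optimal $k$-RMS has regret $\varepsilon^*_{k,r'}$.

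First I would handle the sampling/discretization term $\delta = O(m^{-1/d})$. Since $\mathcal{U}$ contains the $d$ standard basis vectors plus $m-d$ i.i.d.\ uniform samples from the unit sphere $\mathbb{U}$, a standard covering argument shows that with high probability every point of $\mathbb{U}$ is within angular (equivalently, Euclidean on the sphere) distance $O(m^{-1/d})$ of some sampled $u_i$; this is the usual ``random points form an $\eta$-net'' bound, where $\eta^{d-1} \cdot m = \tilde\Omega(1)$ forces $\eta = O(m^{-1/d})$ up to logarithmic factors absorbed into ``with high probability.'' Then I would argue Lipschitz-continuity of $\mathtt{rr}_k(\cdot,Q_t)$ in the utility vector: because all tuples lie in $[0,1]^d$ and all vectors are unit, $|\langle u,p\rangle - \langle u',p\rangle| \le \|u-u'\|\cdot\|p\| \le \sqrt{d}\,\|u-u'\|$, and $\omega_k(u,P_t)$ is bounded below (using the basis vectors to guarantee a non-trivial score), so the regret ratio changes by at most $O(\|u-u'\|) = O(m^{-1/d})$ between neighboring directions. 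Hence if $\mathtt{rr}_k(u_i,Q_t)$ is small for every sampled $u_i$, then $\mathtt{mrr}_k(Q_t) \le \max_i \mathtt{rr}_k(u_i,Q_t) + O(m^{-1/d})$.

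Next I would control $\max_i \mathtt{rr}_k(u_i,Q_t)$ by the set-cover structure. By construction, $\mathcal{C}$ covers $\mathcal{U}$, so for each $u_i$ there is some $p \in Q_t$ with $u_i \in S(p)$, i.e.\ $p \in \Phi_{k,\varepsilon}(u_i,P_t)$, which means $\omega(u_i,Q_t) \ge \langle u_i,p\rangle \ge (1-\varepsilon)\omega_k(u_i,P_t)$ and therefore $\mathtt{rr}_k(u_i,Q_t) \le \varepsilon$. Combined with the previous paragraph this already gives $\mathtt{mrr}_k(Q_t) = O(\varepsilon + \delta)$. To bring in $\varepsilon^*_{k,r'}$, I would instead run the argument at the \emph{feasibility} level: the key point is that the optimal $k$-RMS solution $Q^*$ of size $r'$ with $\mathtt{mrr}_k(Q^*) = \varepsilon^*_{k,r'}$ induces, after a slight slackening of $\varepsilon$, a set cover of $\mathcal{U}$ of size at most $r'$ (each $u_i$ is $\varepsilon^*_{k,r'}$-approximately covered by its representative in $Q^*$, and for $\varepsilon \ge \varepsilon^*_{k,r'}$ this means $u_i \in S(p)$ for that $p$). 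Hence $|\mathcal{C}^*| \le r'$, and by Theorem~\ref{thm:approx:ratio} stability gives $|\mathcal{C}| \le O(\log m)\cdot r' $, which is exactly why the binary search over $m$ can achieve $|\mathcal{C}| = r$ with $r' = O(r/\log m)$. Conversely, once $m$ is tuned so $|\mathcal{C}| = r$, the same covering relation run in reverse shows $\mathtt{rr}_k(u_i,Q_t) = O(\varepsilon^*_{k,r'} + \varepsilon)$ for every sampled direction (since $\varepsilon$ is chosen comparably to $\varepsilon^*_{k,r'}$, or is itself a lower-order input parameter), and adding the $\delta$ term from the Lipschitz argument finishes the bound $\mathtt{mrr}_k(Q_t) = O(\varepsilon^*_{k,r'} + \delta)$.

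The main obstacle I expect is the clean translation between ``$Q^*$ has small $\mathtt{mrr}_k$ on all of $\mathbb{U}$'' and ``$Q^*$ yields a small set cover of the sampled $\mathcal{U}$,'' because the set system is defined via $\varepsilon$-\emph{approximate} top-$k$ membership while $\mathtt{mrr}_k(Q^*)$ is defined via the relative score of the single top tuple of $Q^*$; one must check that $\varphi(u,Q^*) \in \Phi_{k,\varepsilon^*_{k,r'}}(u,P_t) \subseteq \Phi_{k,\varepsilon}(u,P_t)$ provided $\varepsilon \ge \varepsilon^*_{k,r'}$, and handle the boundary case where this inclusion requires the extra $O(\varepsilon)$ slack. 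The second delicate point is making the ``with high probability'' quantification in the net argument precise enough that it composes with the deterministic set-cover guarantee without blowing up constants; this is routine but is where the $O(\cdot)$ hides logarithmic factors, and I would state it as: with probability at least $1 - m^{-\Omega(1)}$, the sample is an $O(m^{-1/d})$-net, and on that event the deterministic bound above applies.
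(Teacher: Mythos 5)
Your proposal follows essentially the same route as the paper's proof: a random $\delta$-net of $\mathbb{U}$ with $\delta = O(m^{-1/d})$, a Cauchy--Schwarz transfer from the sampled directions to all of $\mathbb{U}$ anchored by the standard basis vectors, coverage of $\mathcal{U}$ by $\mathcal{C}$ giving regret at most $\varepsilon$ on every sampled direction, and the correspondence between $(k,\varepsilon)$-regret sets and set covers combined with Theorem~\ref{thm:approx:ratio} to conclude $\varepsilon \le \varepsilon^{*}_{k,r'}$ for $r' = O(r/\log{m})$. The one place the paper is more explicit is in handling directions where $\omega_k(u,P_t)$ is small: it splits on whether $\omega_k(u,P_t)$ exceeds $c/\sqrt{d}$ (with $c$ the minimum $k$\textsuperscript{th} score over the basis vectors), which is needed because your Lipschitz bound on the regret \emph{ratio} requires a lower bound on the denominator rather than on $\omega(u,Q_t)$ alone.
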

\begin{proof}
  Given a parameter $\delta > 0$, a $\delta$-net~\cite{DBLP:conf/wea/AgarwalKSS17}
  of $\mathbb{U}$ is a finite set $U \subset \mathbb{U}$ where there
  exists a vector $\overline{u}$ with $\| u-\overline{u} \| \leq \delta$ for any $u \in \mathbb{U}$.
  Since a random set of $O(\frac{1}{\delta^{d-1}}\log\frac{1}{\delta})$ vectors in $\mathbb{U}$
  is a $\delta$-net with probability at least $\frac{1}{2}$~\cite{DBLP:conf/wea/AgarwalKSS17},
  one can generate a $\delta$-net of size $O(\frac{1}{\delta^{d-1}}\log\frac{1}{\delta})$
  with high probability by random sampling from $\mathbb{U}$ in $O(1)$ repeated trials.
  
  Let $B$ be the standard basis of $\mathbb{R}^d_+$
  and $U$ be a $\delta$-net of $\mathbb{U}$
  where $B=\{u_1,\ldots,u_d\} \subset U$.
  Since $p \in P_t$ is scaled to $p[i] \leq 1$ for $i \in [1,d]$,
  we have $\| p \| \leq \sqrt{d}$.
  According to the definition of $\delta$-net,
  there exists a vector $\overline{u} \in U$
  such that $\| \overline{u} - u \| \leq \delta$ for every $u \in \mathbb{U}$.
  Hence, for any tuple $p \in P_t$,
  \begin{equation}\label{eq:1}
  | \langle \overline{u},p \rangle - \langle u,p \rangle |
  = | \langle \overline{u} - u,p \rangle |
  \leq \| \overline{u} - u \| \cdot \| p \| \leq \delta \cdot \sqrt{d}
  \end{equation}
  Moreover, as $Q_{t}$ corresponds to a set-cover solution $\mathcal{C}$
  on $\Sigma$, there exists a tuple $q \in Q_{t}$ such that
  $\langle \overline{u},q \rangle \geq (1-\varepsilon) \cdot \omega_{k}(\overline{u},P_t)$
  for any $\overline{u} \in U$.
  We first consider a basis vector $u_i \in U$ for some $i \in [1,d]$.
  We have $\omega(u_i,Q_t) \geq (1-\varepsilon) \cdot \omega_{k}(u_i,P_t)$ and
  thus $ \omega(u_i,Q_t) \geq (1-\varepsilon) \cdot c $ where
  $c = \min_{i \in [1,d]} \omega_{k}(u_i,P_t)$.
  Since $\| u \| = 1$, there must exist some $i$ with $u[i] \geq \frac{1}{\sqrt{d}}$
  for any $u \in \mathbb{U}$.
  Therefore, it holds that $\omega(u,Q_{t}) \geq \omega(u_i,Q_{t}) \cdot \frac{1}{\sqrt{d}}
  \geq (1-\varepsilon)\cdot\frac{c}{\sqrt{d}}$
  for any $u \in \mathbb{U}$.

  Next, we discuss two cases for $u \in \mathbb{U}$ separately.
  \begin{itemize}
    \item \textbf{Case 1 ($\omega_k(u,P_t) \leq \frac{c}{\sqrt{d}}$):}
    In this case, there always exists $q \in Q_{t}$ such that
    $\langle u,q \rangle \geq (1-\varepsilon)\cdot\omega_k(u,P_t)$.
    \item \textbf{Case 2 ($\omega_k(u,P_t) > \frac{c}{\sqrt{d}}$):}
    Let $\overline{u} \in U$ be the utility vector such that
    $\| \overline{u} - u \| \leq \delta$.
    Let $\Phi_{k}(u,P_t)=\{p_1,\ldots,p_k\}$ be the top-$k$ results of $u$ on $P_t$.
    According to Equation~\ref{eq:1}, we have
    $\langle \overline{u},p_i \rangle \geq \langle u,p_i \rangle 
    - \delta\cdot\sqrt{d}$ for all $i \in [1,k]$ and thus
    $\langle \overline{u},p_i \rangle \geq \omega_{k}(u,P_t)
    - \delta\cdot\sqrt{d}$. Thus, there exists $k$ tuples in $P_t$
    with scores at least $\omega_{k}(u,P_t) - \delta\cdot\sqrt{d}$ for $\overline{u}$.
    We can acquire
    $\omega_{k}(\overline{u},P_t) \geq \omega_{k}(u,P_t) - \delta\cdot\sqrt{d}$.
    Therefore, there exists $q \in Q_{t}$ such that
    \begin{align*}
    \langle u,q \rangle & \geq \langle \overline{u},q \rangle - \delta\cdot\sqrt{d}
    \geq (1-\varepsilon) \cdot \omega_k(\overline{u},P_t) - \delta\cdot\sqrt{d} \\
    & \geq (1-\varepsilon) \cdot \big(\omega_{k}(u,P_t) - \delta\cdot\sqrt{d}\big)
    - \delta\cdot\sqrt{d} \\
    & \geq \big(1-\varepsilon-\frac{(1-\varepsilon)d\delta}{c}-\frac{d\delta}{c}\big)
    \cdot \omega_{k}(u,P_t) \\
    & \geq (1-\varepsilon-\frac{2d\delta}{c})\cdot\omega_{k}(u,P_t)
    \end{align*}
  \end{itemize}
  Considering both cases, we have $\omega(u,Q_t) \geq
  (1-\varepsilon-\frac{2d\delta}{c})\cdot\omega_{k}(u,P_t)$ for any $u \in \mathbb{U}$
  and thus $\mathtt{mrr}_k(Q_{t})$ over $P_t$ is at most $\varepsilon+\frac{2d\delta}{c}$.
  In all of our experiments, the value of $c$ is always between $0.5$ and $1$,
  and thus we regard $c$ as a constant in this proof.
  Therefore, $Q_t$ is a $\big(k,O(\varepsilon+\delta)\big)$-regret set of $P_t$
  with high probability for any $c,d=O(1)$.
  Moreover, since FD-RMS uses $m$ utility vectors including $B$ to compute $Q_t$
  and $m=O(\frac{1}{\delta^{d-1}}\log\frac{1}{\delta})$,
  we can acquire $\delta=O(m^{-\frac{1}{d}})$.
  
  Finally, because any $(k,\varepsilon)$-regret set of $P_t$ corresponds to a set-cover
  solution on $\Sigma$
  (otherwise, the regret ratio is larger than $\varepsilon$ for some utility vector)
  and the size of the optimal set-cover solution on $\Sigma$ is $O(\frac{r}{\log{m}})$
  according to Theorem~\ref{thm:approx:ratio},
  the maximum $k$-regret ratio of any size-$r'$ subset of $P_t$ is at least $\varepsilon$
  where $r'=O(\frac{r}{\log{m}})$, i.e., $\varepsilon^{*}_{k,r'} \geq \varepsilon$.
  Therefore, we conclude that $Q_t$ is a $\big(k,O(\varepsilon^{*}_{k,r'}+\delta)\big)$-regret set of $P_t$
  with high probability.
\end{proof}

Finally, the upper bound of the maximum $k$-regret ratio of $Q_{t}$ returned by FD-RMS
on $P_t$ is analyzed in the following corollary derived from the result of Theorem~\ref{thm:rms:ratio}.
\begin{corollary}\label{col:rms:bound}
  It satisfies that $\mathtt{mrr}_k(Q_{t}) = O(r^{-\frac{1}{d}})$ with high probability
  if we assume $\varepsilon = O(m^{-\frac{1}{d}})$.
\end{corollary}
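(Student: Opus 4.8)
The plan is to read the corollary straight off the bound already proved inside Theorem~\ref{thm:rms:ratio}. There it was shown that, with high probability, $\mathtt{mrr}_k(Q_t) \le \varepsilon + \frac{2d\delta}{c}$, where $\delta = O(m^{-1/d})$ is the resolution of the $\delta$-net and $c = \min_{i \in [1,d]} \omega_k(u_i,P_t)$ is regarded as an absolute constant (it lies in $[0.5,1]$ in all the experiments). Treating $c$ and $d$ as constants, this says $\mathtt{mrr}_k(Q_t) = O(\varepsilon + m^{-1/d})$; under the stated hypothesis $\varepsilon = O(m^{-1/d})$ the two error terms merge, giving $\mathtt{mrr}_k(Q_t) = O(m^{-1/d})$.

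The only remaining step is to pass from $m$ to $r$, for which I would show that FD-RMS never lets the sample size drop below $r$. In \textsc{Initialization} the binary search for $m$ is confined to the range $[r,M]$, so $m \ge r$ after Algorithm~\ref{alg:rms:init}. In the dynamic phase, \textsc{UpdateM} only shrinks $m$ while $|\mathcal{C}| > r$; but the cover sets of the members of $\mathcal{C}$ are nonempty (empty ones are discarded by \textsc{ReLevel}), pairwise disjoint, and exhaust $\mathcal{U}$, so $|\mathcal{C}| \le |\mathcal{U}| = m$ at all times, and hence as soon as $m$ reaches $r$ the shrinking condition $|\mathcal{C}| > r$ fails. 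Thus $m \ge r$ holds at every timestamp, so $m^{-1/d} \le r^{-1/d}$, and $\mathtt{mrr}_k(Q_t) = O(m^{-1/d}) = O(r^{-1/d})$ with high probability, the failure probability and the hidden constant being inherited unchanged from Theorem~\ref{thm:rms:ratio}.

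The argument is essentially a substitution, so the main obstacle is not analytic but bookkeeping: one must verify that neither the binary search in Algorithm~\ref{alg:rms:init} nor the loops of Algorithm~\ref{alg:rms:update:m} can ever force $m < r$, since that invariant is exactly what makes $m^{-1/d} = O(r^{-1/d})$ legitimate. (If a sharper dependence were wanted, one could instead feed the $\varepsilon^{*}_{k,r'}$ form of Theorem~\ref{thm:rms:ratio}, with $r' = O(r/\log m)$, into the same computation, but that refinement is not needed for the $O(r^{-1/d})$ bound stated here.)
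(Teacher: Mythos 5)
Your proposal is correct and follows essentially the same route as the paper's own proof: substitute $\delta = O(m^{-1/d})$ and $\varepsilon = O(m^{-1/d})$ into the bound $\mathtt{mrr}_k(Q_t) = O(\varepsilon + \delta)$ from Theorem~\ref{thm:rms:ratio}, then invoke the invariant $m \geq r$ to pass to $O(r^{-1/d})$. The only difference is that you justify the invariant $m \geq r$ explicitly (via the binary-search range and the fact that $|\mathcal{C}| \leq |\mathcal{U}| = m$ halts the shrinking loop in \textsc{UpdateM}), whereas the paper simply asserts it; your bookkeeping is a welcome tightening but not a different argument.
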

\begin{proof}
  As indicated in the proof of Theorem~\ref{thm:rms:ratio},
  $\mathcal{U}=\{u_1,u_2,$ $\ldots,u_{m}\}$ is a $\delta$-net of $\mathbb{U}$
  where $\delta=O(m^{-\frac{1}{d}})$ with high probability.
  Moreover, we have $\mathtt{mrr}_k(Q_{t}) = O(\varepsilon + \delta)$
  and thus $\mathtt{mrr}_k(Q_{t}) = O(m^{-\frac{1}{d}})$
  if $\varepsilon = O(m^{-\frac{1}{d}})$.
  In addition, at any time, $\mathcal{U}$ must have at least $r$ utility vectors,
  i.e., $m \geq r$.
  Thus, we have $\mathtt{mrr}_k(Q_{t}) = O(r^{-\frac{1}{d}})$
  since $m^{-\frac{1}{d}} \leq r^{-\frac{1}{d}}$ for any $d>1$
  and conclude the proof.
\end{proof}
Since $\varepsilon$ is tunable in FD-RMS, by trying different values of $\varepsilon$,
we can always find an appropriate one such that $\varepsilon = O(m^{-\frac{1}{d}})$.
Hence, from Corollary~\ref{col:rms:bound}, we show that
the upper bound of FD-RMS is slightly higher than that of
\textsc{Cube}~\cite{DBLP:journals/pvldb/NanongkaiSLLX10} and
\textsc{Sphere}~\cite{DBLP:conf/sigmod/XieW0LL18} (i.e., $O(r^{-\frac{1}{d-1}}$))
under a mild assumption.

\noindent\textbf{Complexity Analysis:}
First, we use tree-based methods to maintain
the approximate top-$k$ results for FD-RMS (see Section~\ref{subsec:impl} for details).
Here, the time complexity of each top-$k$ query is $O(n_0)$ where $n_0=|P_0|$
because the size of $\varepsilon$-approximate top-$k$ tuples can be $O(n_0)$.
Hence, it takes $O(M \cdot n_0)$ time to compute the top-$k$ results.
Then, \textsc{Greedy} runs $O(r)$ iterations to get a set-cover solution.
At each iteration, it evaluates $O(n_0)$ sets to find $S^*$ in Line~\ref{ln:cover:max}
of Algorithm~\ref{alg:set:cover}.
Thus, the time complexity of \textsc{Greedy} is $O(r \cdot n_0)$.
FD-RMS calls \textsc{Greedy} $O(\log{M})$ times to determine the value of $m$.
Therefore, the time complexity of computing $Q_{0}$ on $P_0$
is $O\big((M + r\log{M}) \cdot n_0\big)$.
In Algorithm~\ref{alg:rms:update},
the time complexity of updating the top-$k$ results and set system $\Sigma$
is $O\big(\mathtt{u}(\Delta_t)\cdot n_t \big)$ where $\mathtt{u}(\Delta_t)$
is the number of utility vectors whose top-$k$ results are changed by $\Delta_t$.
Then, the maximum number of reassignments in cover sets is
$|S(p)|$ for $\Delta_t$, which is bounded by $O(\mathtt{u}(\Delta_t))$.
In addition, the time complexity of \textsc{Stabilize} is $O(m \log{m})$ according to Lemma~\ref{lm:termination}.
Moreover, the maximum difference between the old and new values of $m$ is bounded by $O(m)$.
Hence, the total time complexity of updating $Q_t$ w.r.t.~$\Delta_t$
is $O\big(\mathtt{u}(\Delta_t) \cdot n_t + m^2 \log{m}\big)$.

\subsection{Implementation Issues}\label{subsec:impl}
\noindent\textbf{Index Structures:}
As indicated in Line~\ref{ln:rms:init:topk} of Algorithm~\ref{alg:rms:init}
and Line~\ref{ln:update:topk} of Algorithm~\ref{alg:rms:update},
FD-RMS should compute the $\varepsilon$-approximate top-$k$ result of each $u_i$ ($i \in [1,M]$)
on $P_0$ and update it w.r.t.~$\Delta_t$.
Here, we elaborate on our implementation for top-$k$ maintenance.
In order to process a large number of (approximate) top-$k$ queries with frequent updates in the database,
we implement a \emph{dual-tree}~\cite{DBLP:conf/kdd/RamG12,DBLP:conf/sigmod/YuAY12,DBLP:conf/sdm/CurtinGR13}
that comprises a tuple index $\mathtt{TI}$ and a utility index $\mathtt{UI}$.

The goal of $\mathtt{TI}$ is to efficiently retrieve the $\varepsilon$-approximate
top-$k$ result $\Phi_{k,\varepsilon}(u,P_t)$ of any utility vector $u$ on the up-to-date $P_t$.
Hence, any space-partitioning index, e.g.,
\emph{k-d tree}~\cite{DBLP:journals/cacm/Bentley75}
and \emph{Quadtree}~\cite{DBLP:journals/acta/FinkelB74},
can serve as $\mathtt{TI}$ for top-$k$ query processing.
In practice, we use \emph{k-d tree} as $\mathtt{TI}$.
We adopt the scheme of~\cite{DBLP:conf/recsys/BachrachFGKKNP14}
to transform a top-$k$ query in $\mathbb{R}^d$
into a $k$NN query in $\mathbb{R}^{d+1}$.
Then, we implement the standard top-down methods to construct $\mathtt{TI}$ on $P_0$
and update it w.r.t.~$\Delta_t$.
The branch-and-bound algorithm is used for top-$k$ queries on $\mathtt{TI}$.

The goal of $\mathtt{UI}$ is to cluster the sampled utility vectors
so as to efficiently find each vector whose $\varepsilon$-approximate
top-$k$ result is updated by~$\Delta_t$.
Since the top-$k$ results of linear functions are merely determined by directions,
the basic idea of $\mathtt{UI}$ is to cluster the utilities with high \emph{cosine similarities} together.
Therefore, we adopt an angular-based binary space partitioning tree
called \emph{cone tree}~\cite{DBLP:conf/kdd/RamG12} as $\mathtt{UI}$.
We generally follow Algorithms 8--9 in~\cite{DBLP:conf/kdd/RamG12} to build $\mathtt{UI}$
for $\{u_1,\ldots,u_M\}$.
We implement a top-down approach based on Section 3.2 of~\cite{DBLP:conf/sigmod/YuAY12}
to update the top-$k$ results affected by~$\Delta_t$.

\noindent\textbf{Parameter Tuning:}
Now, we discuss how to specify the values of $\varepsilon$, i.e., the approximation factor of
top-$k$ queries, and $M$, i.e., the upper bound of $m$, in FD-RMS.
In general, the value of $\varepsilon$ has direct effect on $m$
as well as the efficiency and quality of results of FD-RMS.
In particular, if $\varepsilon$ is larger, the $\varepsilon$-approximate
top-$k$ result of each utility vector will include more tuples and the set system
built on top-$k$ results will be more dense. As a result, to guarantee
the result size to be exactly $r$, FD-RMS will use more utility vectors (i.e., a larger $m$)
for a larger $\varepsilon$.
Therefore, a smaller $\varepsilon$ leads to higher efficiency and lower solution quality
due to smaller $m$ and larger $\delta$, and vice versa.
In our implementation, we use a trial-and-error method to find
appropriate values of $\varepsilon$ and $M$:
For each query $\mathtt{RMS}(k,r)$ on a dataset, we test different values of
$\varepsilon$ chosen from $[0.0001, \ldots, 0.1024]$ and,
for each value of $\varepsilon$, $M$ is set to the smallest one
chosen from $[2^{10},\ldots,2^{20}]$ that always guarantees $m < M$.
If the result size is still smaller than $r$ when $m=M=2^{20}$,
we will not use larger $M$ anymore due to efficiency issue.
The values of $\varepsilon$ and $M$ that strike
the best balance between efficiency and quality of results will be used.
In Fig.~\ref{fig:eps}, we present how the value of $\varepsilon$
affects the performance of FD-RMS empirically.

%!TEX root = ../main.tex
\section{Experiments}\label{sec:exp}

In this section, we evaluate the performance of FD-RMS on real-world and synthetic datasets.
We first introduce the experimental setup in Section~\ref{subsec:setup}.
Then, we present the experimental results in Section~\ref{subsec:results}.

\subsection{Experimental Setup}\label{subsec:setup}

\noindent\textbf{Algorithms:} The algorithms compared are listed as follows.
\begin{itemize}
  \item \textsc{Greedy}: the greedy algorithm for $1$-RMS in~\cite{DBLP:journals/pvldb/NanongkaiSLLX10}.
  \item \textsc{Greedy}$^*$: the randomized greedy algorithm for $k$-RMS when $k>1$
  proposed in~\cite{DBLP:journals/pvldb/ChesterTVW14}.
  \item \textsc{GeoGreedy}: a variation of \textsc{Greedy} for $1$-RMS in~\cite{DBLP:conf/icde/PengW14}.
  \item DMM-RRMS: a discretized matrix min-max based algorithm for $1$-RMS
  in~\cite{DBLP:conf/sigmod/AsudehN0D17}.
  \item $\varepsilon$-\textsc{Kernel}: computing an $\varepsilon$-kernel coreset
  as the $k$-RMS result~\cite{DBLP:conf/wea/AgarwalKSS17,DBLP:conf/icdt/CaoLWWWWZ17} directly.
  \item HS: a hitting-set based algorithm for $k$-RMS in~\cite{DBLP:conf/wea/AgarwalKSS17}.
  \item \textsc{Sphere}: an algorithm that combines $\varepsilon$-\textsc{Kernel}
  with \textsc{Greedy} for $1$-RMS in~\cite{DBLP:conf/sigmod/XieW0LL18}.
  \item URM: a $k$-\textsc{medoid} clustering based algorithm for $1$-RMS in~\cite{Shetiya2019}.
  Following~\cite{Shetiya2019}, we use DMM-RRMS to compute an initial solution for URM.
  \item FD-RMS: our fully-dynamic $k$-RMS algorithm proposed in Section~\ref{sec:framework}.
\end{itemize}
The algorithms that only work in two dimensions are not compared.
All the above algorithms except FD-RMS and URM\footnote{
The original URM in~\cite{Shetiya2019} is also a static algorithm.
We extend URM to support dynamic updates as described in
Algorithm~\ref{alg:urm} of Appendix~\ref{appendix:dynamic:urm}.}
cannot directly work in a fully dynamic setting.
In our experiments, they rerun from scratch to compute the up-to-date $k$-RMS result
once the skyline is updated by any insertion or deletion.
In addition, the algorithms that are not applicable when $k>1$
are not compared for the experiments with varying $k$.
Since $\varepsilon$-\textsc{Kernel} and HS are proposed for min-size
$k$-RMS that returns the smallest subset whose maximum $k$-regret ratio is at most $\varepsilon$,
we adapt them to our problem by performing a binary search on $\varepsilon$ in the range
$(0,1)$ to find the smallest value of $\varepsilon$
that guarantees the result size is at most $r$.

Our implementation of FD-RMS and URM is in Java 8 and published on
GitHub\footnote{https://github.com/yhwang1990/dynamic-rms}.
We used the C++ implementations of baseline algorithms published by authors
and followed the default parameter settings as described in the original papers.
All the experiments were conducted on a server running
Ubuntu 18.04.1 with a 2.3GHz processor and 256GB memory.

\noindent\textbf{Datasets:}
The datasets we use are listed as follows.
\begin{itemize}
  \item \textbf{BB}\footnote{\url{www.basketball-reference.com}}
  is a basketball dataset that contains $21,961$ tuples, each of which represents one player/season combination
  with $5$ attributes such as \emph{points} and \emph{rebounds}.
  \item \textbf{AQ}\footnote{\url{archive.ics.uci.edu/ml/datasets/Beijing+Multi-Site+Air-Quality+Data}}
  includes hourly air-pollution and weather data from 12 monitoring sites in Beijing.
  It has $382,168$ tuples and each tuple has $9$ attributes including the concentrations of $6$ air pollutants like
  $\text{PM}_{\text{2.5}}$, as well as $3$ meteorological parameters
  like \emph{temperature}.
  \item \textbf{CT}\footnote{\url{archive.ics.uci.edu/ml/datasets/covertype}}
  contains the cartographic data of forest covers in the Roosevelt National Forest of northern Colorado.
  It has $581,012$ tuples and we choose $8$ numerical attributes,
  e.g., \emph{elevation} and \emph{slope}, for evaluation.
  \item \textbf{Movie}\footnote{\url{grouplens.org/datasets/movielens}}
  is the tag genome dataset published by MovieLens.
  We extract the relevance scores of $13,176$ movies and $12$ tags for evaluation.
  Each tuple represents the relevance scores of $12$ tags to a movie.
  \item \textbf{Indep} is generated as described in~\cite{DBLP:conf/icde/BorzsonyiKS01}.
  It is a set of uniform points on the unit hypercube where different attributes are
  independent of each other.
  \item \textbf{AntiCor} is also generated as described in~\cite{DBLP:conf/icde/BorzsonyiKS01}.
  It is a set of random points with anti-correlated attributes.
\end{itemize}
\begin{table}[t]
  \centering
  \footnotesize
  \caption{Statistics of datasets}\label{tbl:stats}
  \begin{tabular}{|c|c|c|c|c|}
  \hline
  \textbf{Dataset}    & $n$          & $d$       & \emph{\#skylines} & \emph{updates} (\%) \\ \hline
  \textbf{BB}         & $21,961$     & $5$       & $200$             & $1.07$                \\ \hline
  \textbf{AQ}         & $382,168$    & $9$       & $21,065$          & $5.60$                \\ \hline
  \textbf{CT}         & $581,012$    & $8$       & $77,217$          & $13.4$                \\ \hline
  \textbf{Movie}      & $13,176$     & $12$      & $3,293$           & $26.5$                \\ \hline
  \textbf{Indep}      & $100$K--$1$M & $2$--$10$ & \multicolumn{2}{c|}{see Fig.~\ref{fig:stats}} \\ \hline
  \textbf{AntiCor}    & $100$K--$1$M & $2$--$10$ & \multicolumn{2}{c|}{see Fig.~\ref{fig:stats}} \\ \hline
  \end{tabular}
\end{table}
\begin{figure}[t]
  \centering
  \begin{subfigure}{0.3\textwidth}
    \centering
    \includegraphics[width=\textwidth]{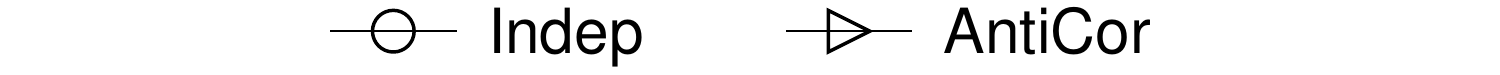}
  \end{subfigure}
  \\
  \begin{subfigure}{0.24\textwidth}
    \centering
    \includegraphics[width=\textwidth]{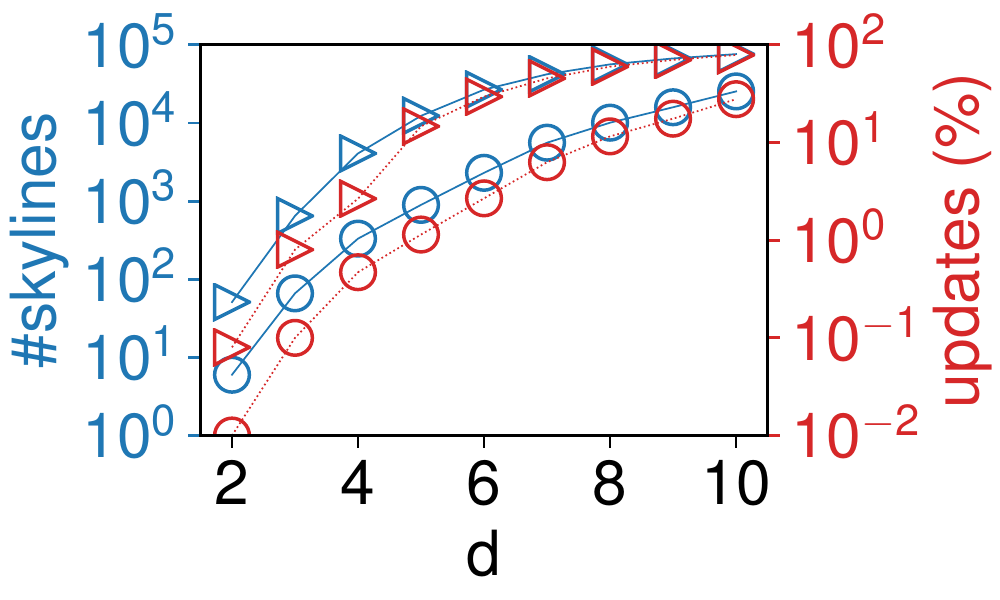}
  \end{subfigure}
  \begin{subfigure}{0.24\textwidth}
    \centering
    \includegraphics[width=\textwidth]{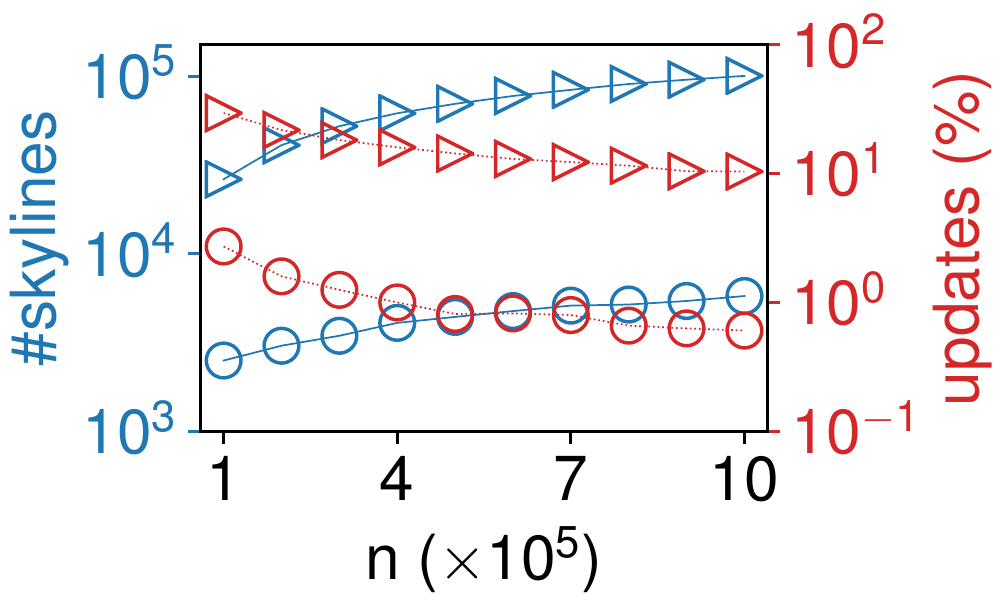}
  \end{subfigure}
  \caption{Sizes and update rates of the skylines of synthetic datasets}
  \label{fig:stats}
\end{figure}
The statistics of datasets are reported in Table~\ref{tbl:stats}.
Here, $n$ is the number of tuples; $d$ is the dimensionality; \emph{\#skylines}
is the number of tuples on the skyline; and \emph{updates} (\%) is the percentage
of tuple operations that trigger any update on the skyline.
Note that we generated several \textbf{Indep} and \textbf{AntiCor} datasets
by varying $n$ from $100$K to $1$M and $d$ from $2$ to $10$ for scalability tests.
By default, we used the ones with $n=100$K and $d=6$.
The sizes and update rates of the skylines of synthetic datasets
are shown in Fig.~\ref{fig:stats}.

\noindent\textbf{Workloads:} The workload of each experiment was generated as follows:
First, we randomly picked $50\%$ of tuples as the initial dataset $P_0$;
Second, we inserted the remaining $50\%$ of tuples one by one into the dataset
to test the performances for insertions;
Third, we randomly deleted $50\%$ of tuples one by one from the dataset to test the performances for deletions.
It is guaranteed that the orders of operations kept the same for all algorithms.
The $k$-RMS results were recorded $10$ times when $10\%, 20\%, \ldots, 100\%$ of the operations were performed.

\noindent\textbf{Performance Measures:}
The efficiency of each algorithm was measured by \emph{average update time}, i.e.,
the average wall-clock time of an algorithm to update the result of $k$-RMS for each operation.
For the static algorithms, we only took the time for $k$-RMS computation into account
and ignored the time for skyline maintenance for fair comparison.
The quality of results was measured by the \emph{maximum $k$-regret ratio} ($\mathtt{mrr}_k$)
for a given size constraint $r$, and, of course, the smaller $\mathtt{mrr}_k$ the better.
To compute $\mathtt{mrr}_k(Q)$ of a result $Q$, we generated a test
set of $500$K random utility vectors and used the maximum regret value found as our estimate.
Since the $k$-RMS results were recorded $10$ times for each query,
we reported the average of the maximum $k$-regret ratios of $10$ results for evaluation.

\subsection{Experimental Results}\label{subsec:results}

\begin{figure*}[t]
  \centering
  \begin{subfigure}{.18\textwidth}
    \centering
    \includegraphics[width=\textwidth]{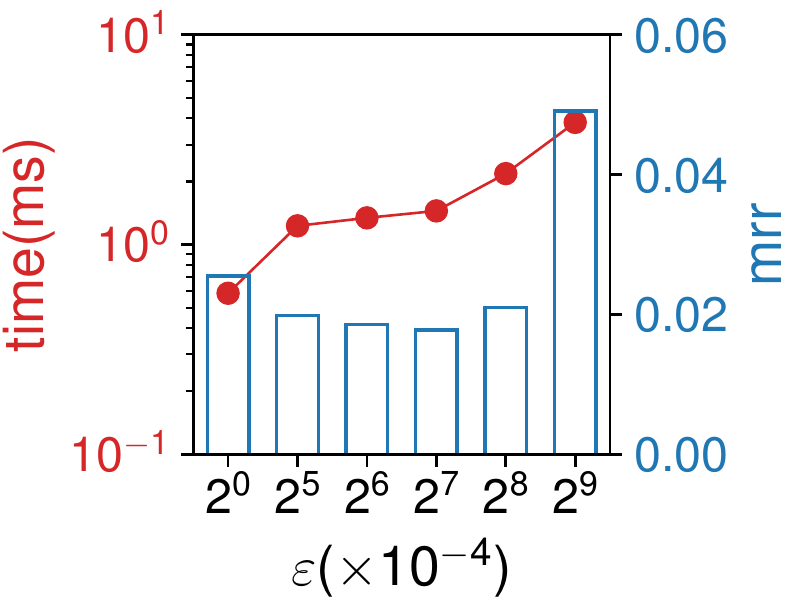}
    \caption{BB}
  \end{subfigure}
  \begin{subfigure}{.18\textwidth}
    \centering
    \includegraphics[width=\textwidth]{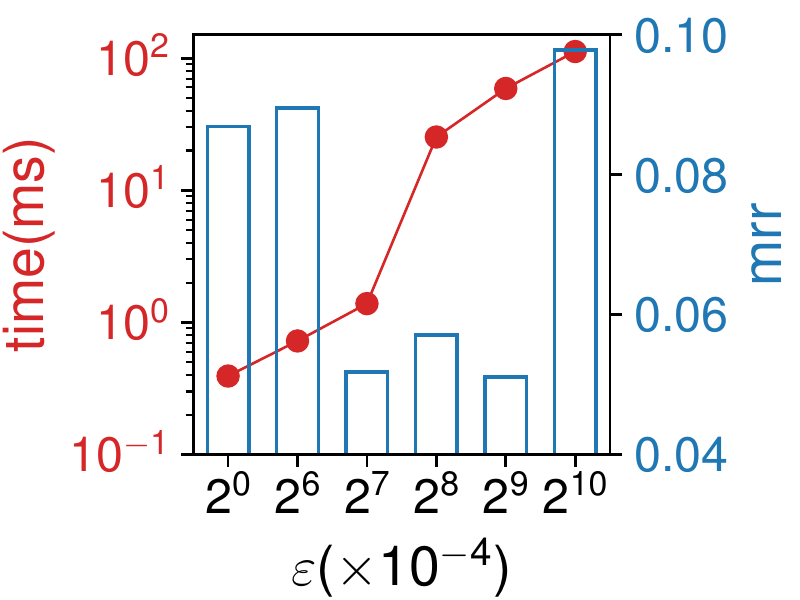}
    \caption{AQ}
  \end{subfigure}
  \begin{subfigure}{.18\textwidth}
    \centering
    \includegraphics[width=\textwidth]{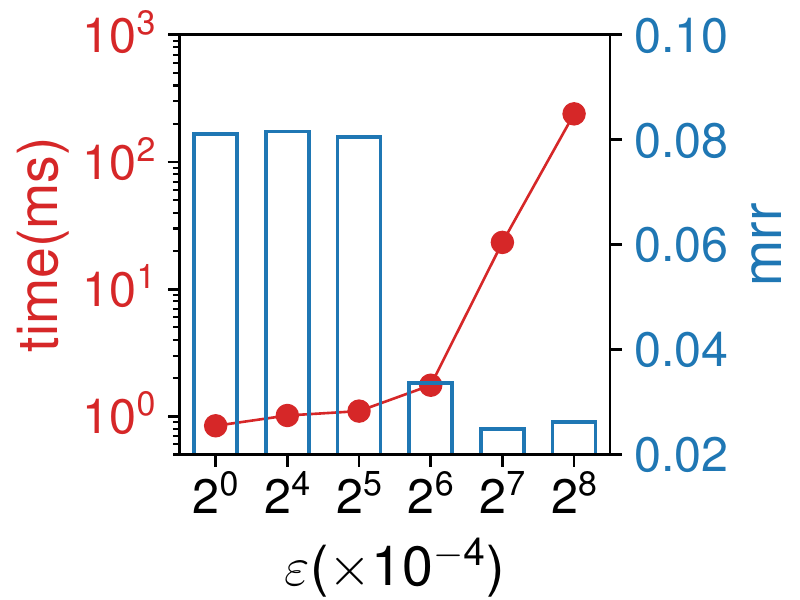}
    \caption{CT}
  \end{subfigure}
  \\
  \begin{subfigure}{.18\textwidth}
    \centering
    \includegraphics[width=\textwidth]{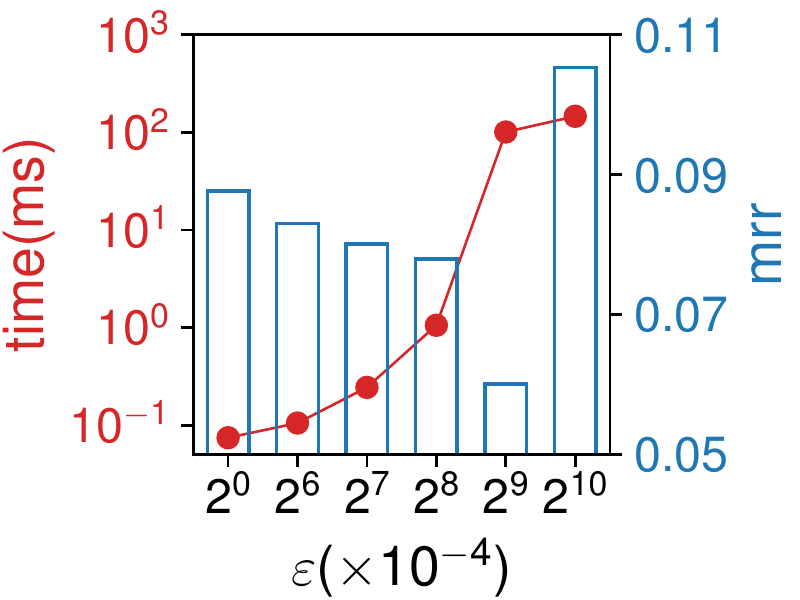}
    \caption{Movie}
  \end{subfigure}
  \begin{subfigure}{.18\textwidth}
    \centering
    \includegraphics[width=\textwidth]{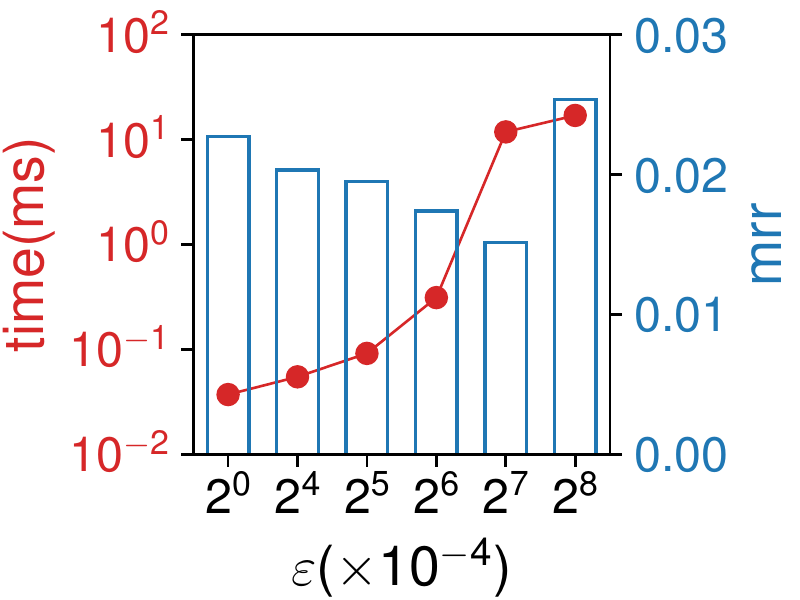}
    \caption{Indep}
  \end{subfigure}
  \begin{subfigure}{.18\textwidth}
    \centering
    \includegraphics[width=\textwidth]{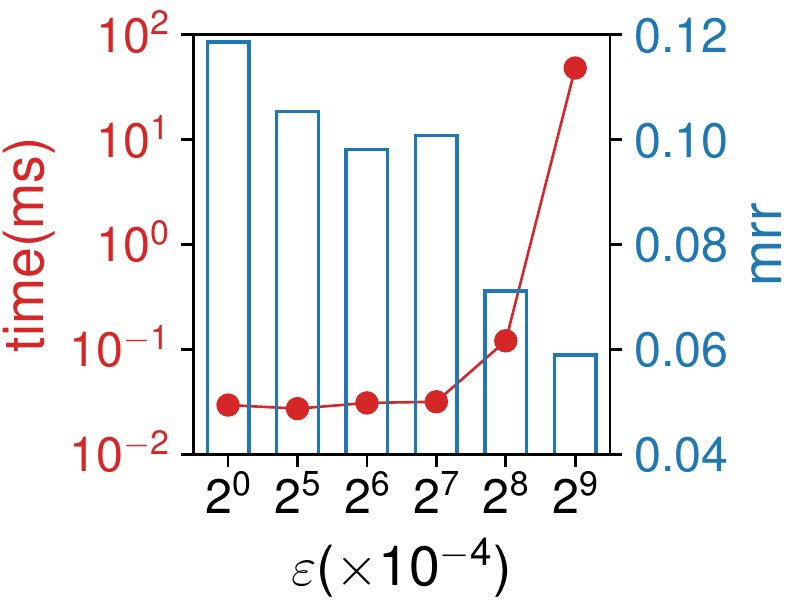}
    \caption{AntiCor}
  \end{subfigure}
  \caption{Performance of FD-RMS with varying $\varepsilon$
  ($k=1$; $r=20$ for BB and $r=50$ for other datasets).
  Note that the red line represents the update time and the blue bars denote
  the maximum regret ratios.}
  \label{fig:eps}
\end{figure*}
\begin{figure*}[t]
  \centering
  \begin{subfigure}{.96\textwidth}
    \centering
    \includegraphics[width=\textwidth]{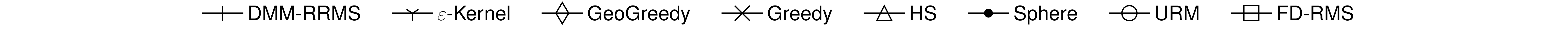}
  \end{subfigure}
  \begin{subfigure}{.36\textwidth}
    \centering
    \includegraphics[width=.48\textwidth]{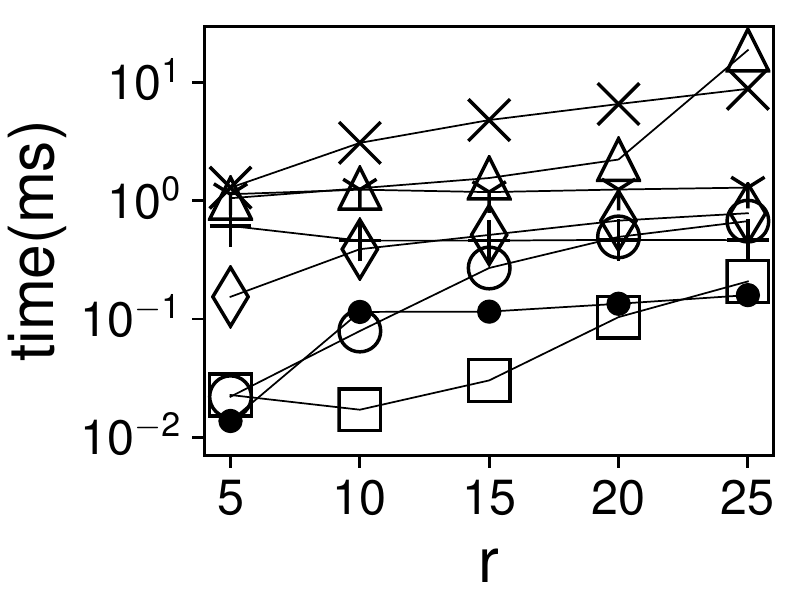}
    \includegraphics[width=.48\textwidth]{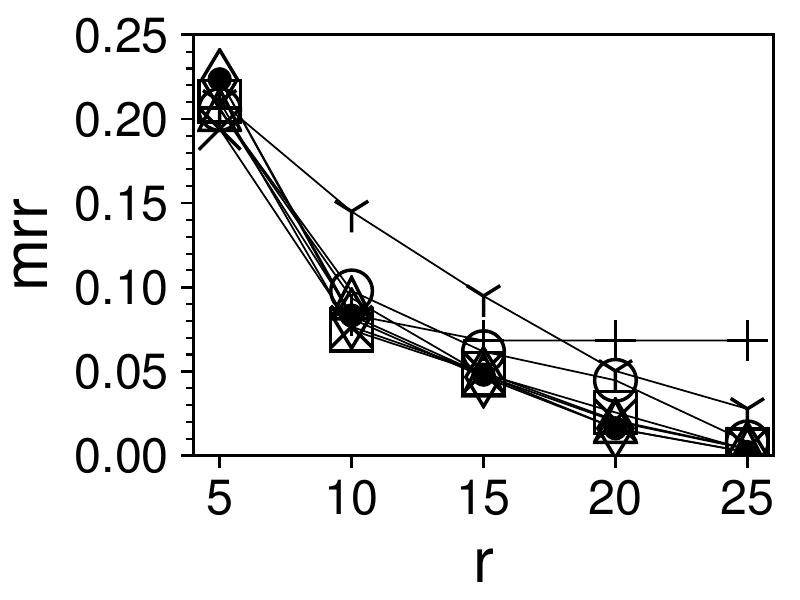}
    \caption{BB}
  \end{subfigure}
  \begin{subfigure}{.36\textwidth}
    \centering
    \includegraphics[width=.48\textwidth]{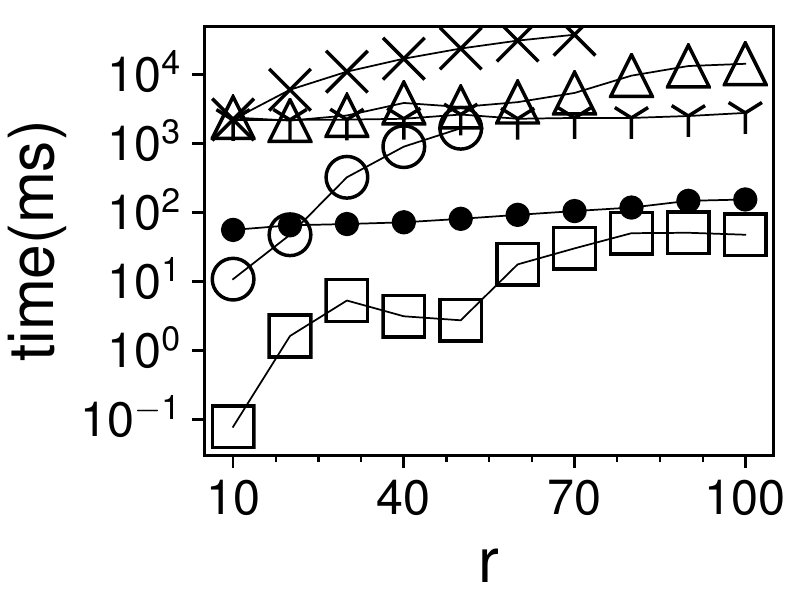}
    \includegraphics[width=.48\textwidth]{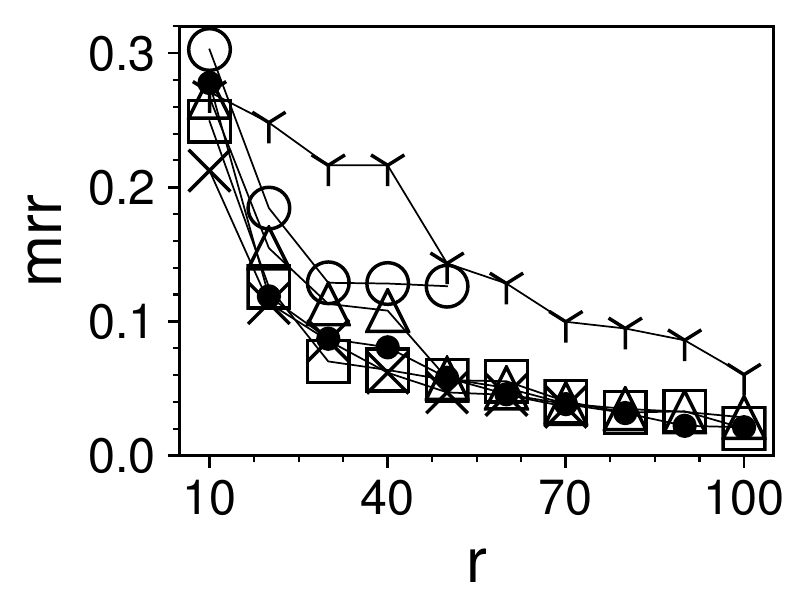}
    \caption{AQ}
  \end{subfigure}
  \\
  \begin{subfigure}{.36\textwidth}
    \centering
    \includegraphics[width=.48\textwidth]{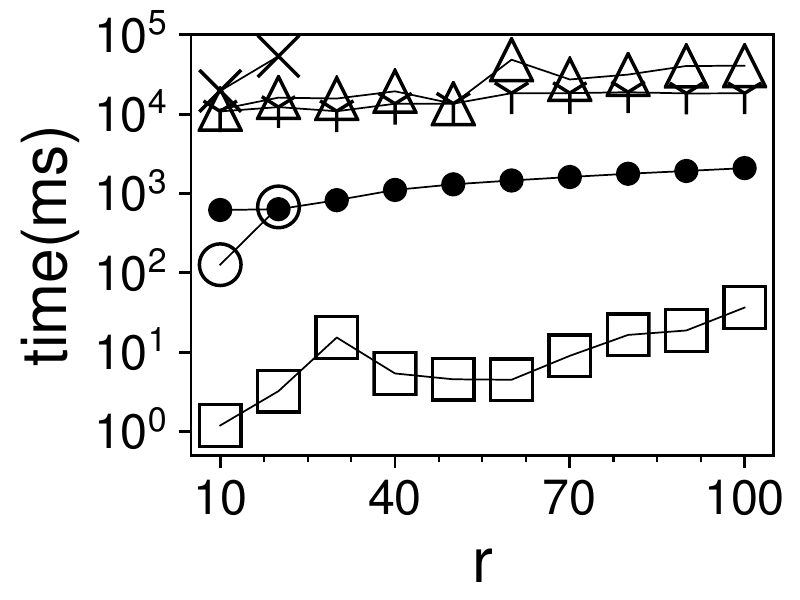}
    \includegraphics[width=.48\textwidth]{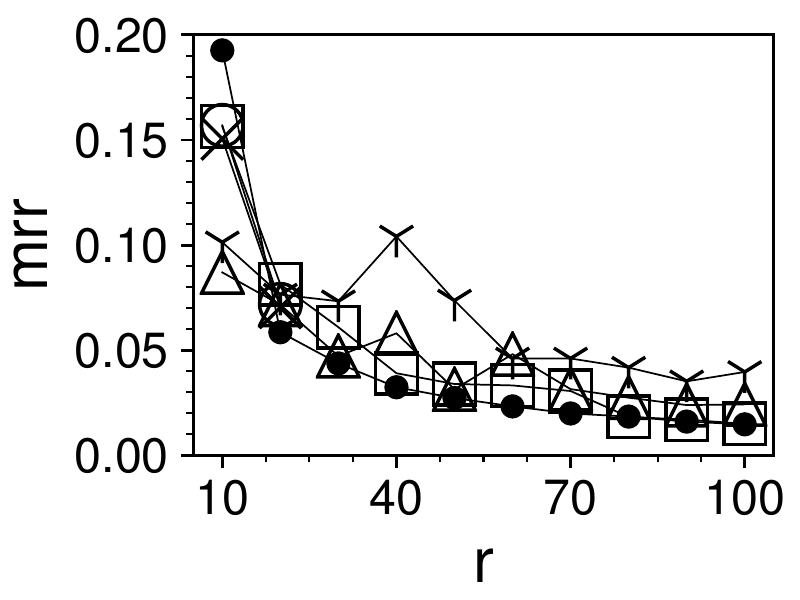}
    \caption{CT}
  \end{subfigure}
  \begin{subfigure}{.36\textwidth}
    \centering
    \includegraphics[width=.48\textwidth]{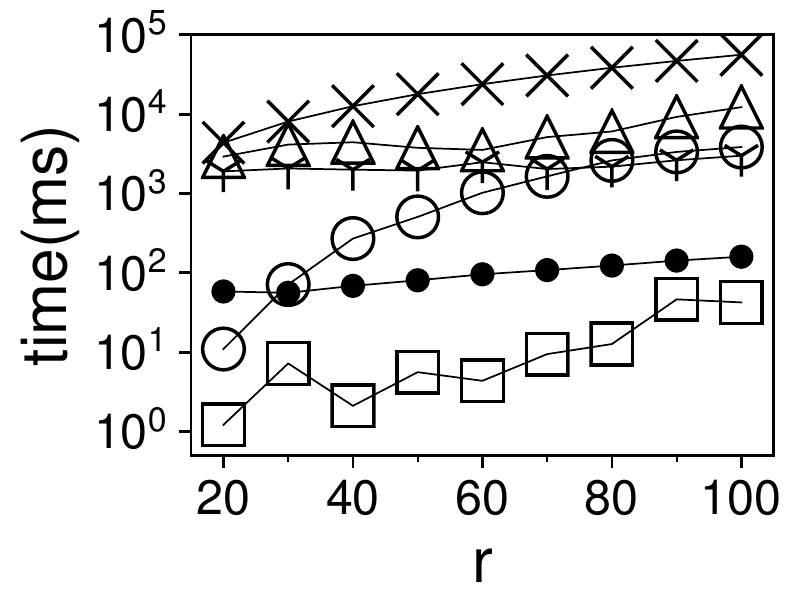}
    \includegraphics[width=.48\textwidth]{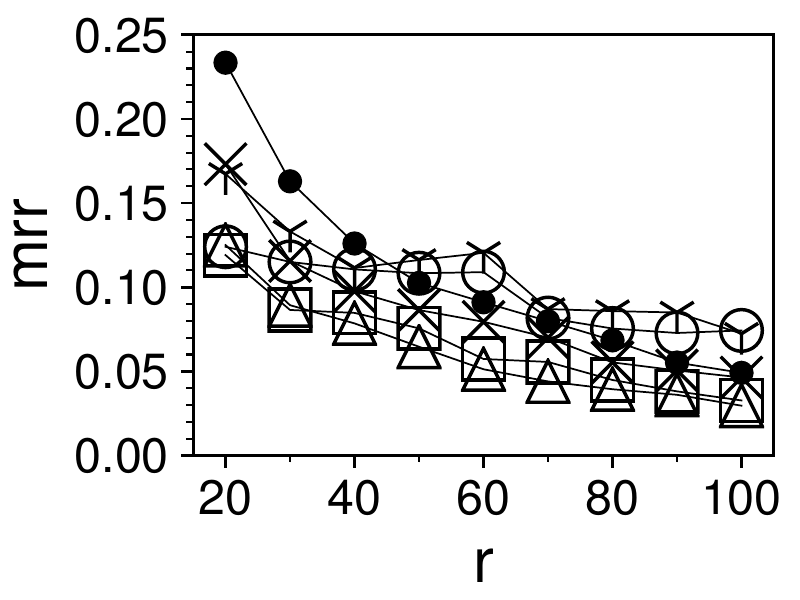}
    \caption{Movie}
  \end{subfigure}
  \\
  \begin{subfigure}{.36\textwidth}
    \centering
    \includegraphics[width=.48\textwidth]{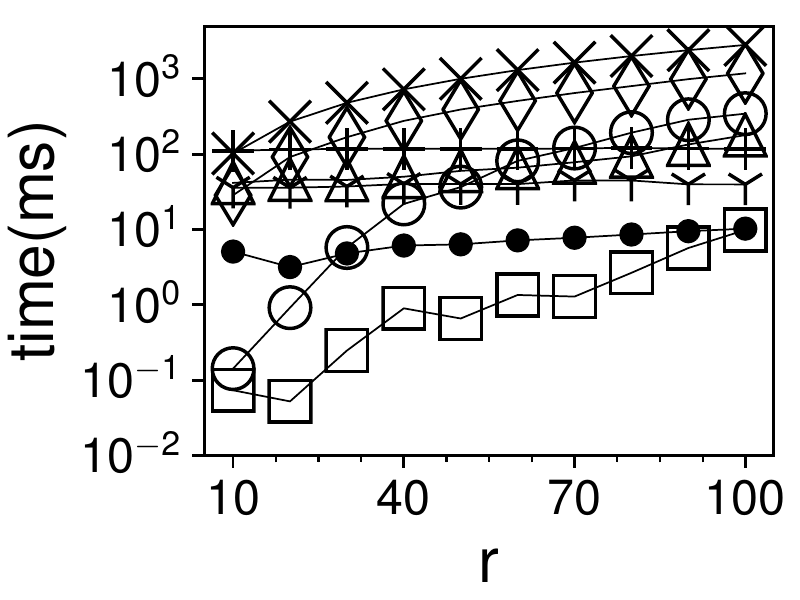}
    \includegraphics[width=.48\textwidth]{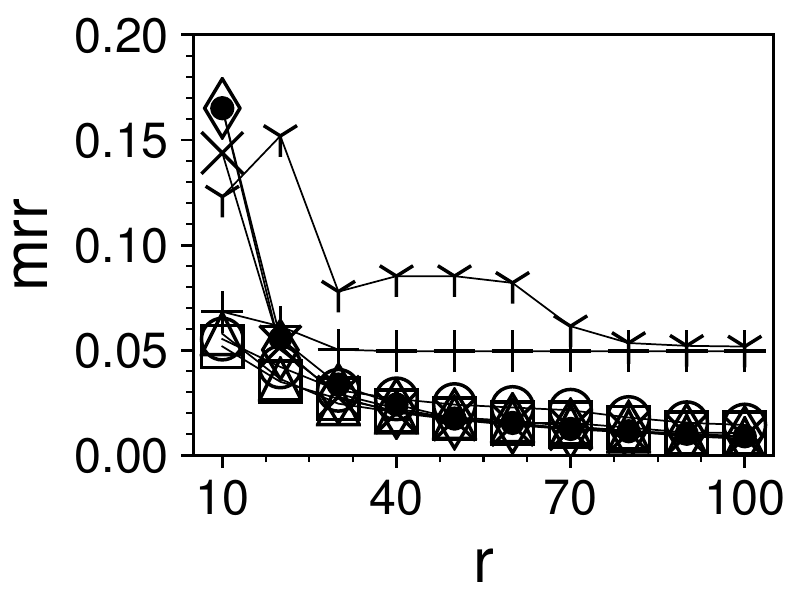}
    \caption{Indep}
  \end{subfigure}
  \begin{subfigure}{.36\textwidth}
    \centering
    \includegraphics[width=.48\textwidth]{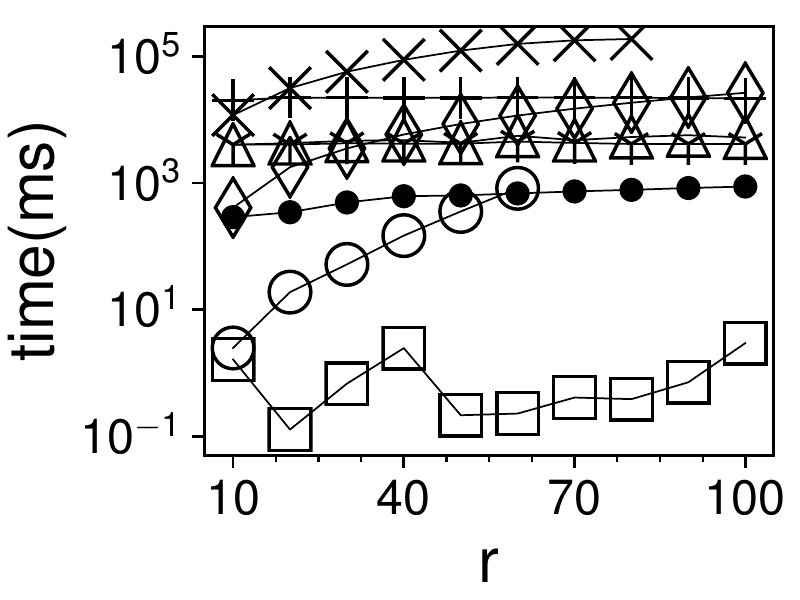}
    \includegraphics[width=.48\textwidth]{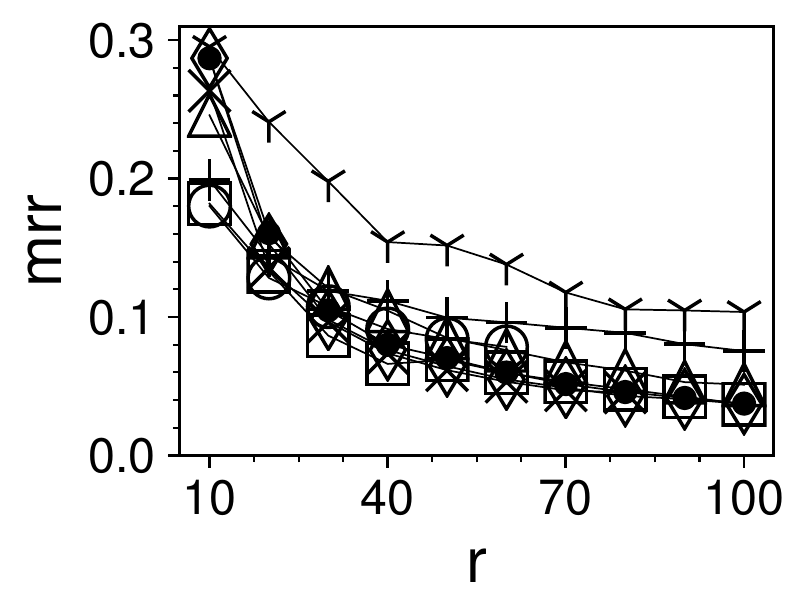}
    \caption{AntiCor}
  \end{subfigure}
  \caption{Update time and maximum regret ratios with varying the result size $r$ ($k=1$)}
  \label{fig:r}
\end{figure*}

\noindent\textbf{Effect of parameter $\varepsilon$ on FD-RMS:}
In Fig.~\ref{fig:eps}, we present the effect of the parameter $\varepsilon$
on the performance of FD-RMS.
We report the update time and maximum regret ratios
of FD-RMS for $k=1$ and $r=50$ on each dataset (except $r=20$ on \textbf{BB})
with varying $\varepsilon$.
We use the method described in Section~\ref{subsec:impl} to set the value of $M$
for each value of $\varepsilon$.
First of all, the update time of FD-RMS increases significantly
with $\varepsilon$. This is because both the time to process an $\varepsilon$-approximate
top-$k$ query and the number of top-$k$ queries (i.e.,~$M$) grow with $\varepsilon$,
which requires a larger overhead to maintain both top-$k$ results and set-cover solutions.
Meanwhile, the quality of results first becomes better when
$\varepsilon$ is larger but then could degrade if $\varepsilon$ is too large.
The improvement in quality with increasing $\varepsilon$ is attributed to
larger $m$ and thus smaller $\delta$.
However, once $\varepsilon$ is greater than the maximum regret ratio $\varepsilon^*_{k,r}$
of the optimal result (whose upper bound can be inferred from practical results),
e.g.,~$\varepsilon=0.0512$ on \textbf{BB},
the result of FD-RMS will contain less than $r$ tuples and
its maximum regret ratio will be close to $\varepsilon$ no matter how large $m$ is.
To sum up, by setting $\varepsilon$ to the one that is slightly lower
than $\varepsilon^*_{k,r}$ among $[0.0001,\ldots,0.1024]$, FD-RMS performs better
in terms of both efficiency and solution quality, and the values of $\varepsilon$
in FD-RMS are decided in this way for the remaining experiments.

\noindent\textbf{Effect of result size $r$:}
In Fig.~\ref{fig:r}, we present the performance of different algorithms for $1$-RMS
(a.k.a.~$r$-regret query) with varying $r$.
In particular, $r$ is ranged from $10$ to $100$ on each dataset
(except \textbf{BB} where $r$ is ranged from $5$ to $25$).
In general, the update time of each algorithm grows while the maximum regret ratios
drop with increasing $r$. But, for FD-RMS, it could take less update time
when $r$ is larger in some cases. The efficiency of FD-RMS is positively correlated
with $m$ but negatively correlated with $\varepsilon$. On a specific dataset, FD-RMS
typically chooses a smaller $\varepsilon$ when $r$ is large, and vice versa.
When $\varepsilon$ is smaller, $m$ may decrease even though $r$ is larger.
Therefore, the update time of FD-RMS decreases with $r$ in some cases
because a smaller $\varepsilon$ is used.
Among all algorithms tested, \textsc{Greedy} is the slowest and fails to
provide results within one day on \textbf{AQ}, \textbf{CT}, and \textbf{AntiCor} when $r>80$.
\textsc{GeoGreedy} runs much faster than \textsc{Greedy}
while having equivalent quality on low-dimensional data.
However, it cannot scale up to high dimensions (i.e., $d>7$) because the cost of
finding \emph{happy points} grows significantly with $d$.
DMM-RRMS suffers from two drawbacks:
(1) it also cannot scale up to $d>7$ due to huge memory consumption;
(2) its solution quality is not competitive when $r \geq 50$
because of the sparsity of space discretization.
The solution quality of $\varepsilon$-\textsc{Kernel} is generally inferior to
any other algorithm because the size of an $\varepsilon$-kernel coreset is much larger
than the size of the minimum $(1,\varepsilon)$-regret set.
Although HS provides results of good quality in most cases,
it runs several orders of magnitude slower than FD-RMS.
\textsc{Sphere} demonstrates better performance than other static algorithms.
Nevertheless, its efficiency is still much lower than FD-RMS,
especially on datasets with large skyline sizes, e.g., \textbf{CT} and \textbf{AntiCor},
where FD-RMS runs up to three orders of magnitude faster.
URM shows good performance in both efficiency and solution quality
for small $r$ (e.g., $r \leq 20$) and skyline sizes (e.g., on \textbf{BB} and \textbf{Indep}).
However, it scales poorly to large $r$ and skyline sizes because
the convergence of k-\textsc{medoid} becomes very slow and
the number of linear programs for regret computation grows rapidly
when $r$ and and skyline sizes increase.
In addition, URM does not provide high-quality results in many cases
since k-\textsc{medoid} cannot escape from local optima.
To sum up, FD-RMS outperforms all other algorithms for fully-dynamic $1$-RMS
in terms of efficiency.
Meanwhile, the maximum regret ratios of the results of FD-RMS are very close
(the differences are less than $0.01$ in almost all cases) to the best of static algorithms.

\begin{figure*}[t]
  \centering
  \begin{subfigure}{.75\textwidth}
    \centering
    \includegraphics[width=\textwidth]{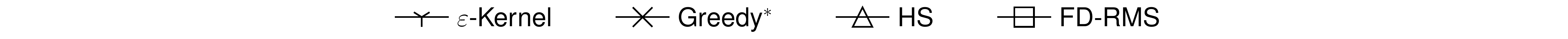}
  \end{subfigure}
  \begin{subfigure}{.36\textwidth}
    \centering
    \includegraphics[width=.48\textwidth]{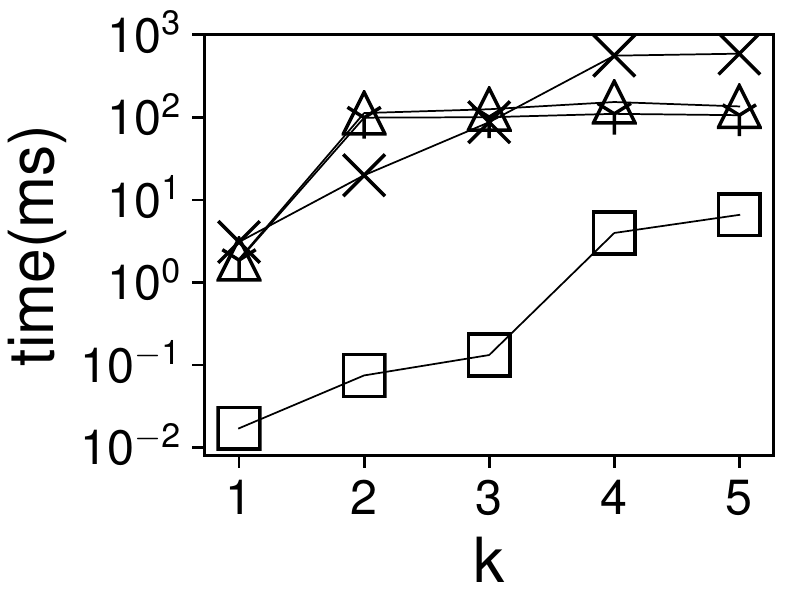}
    \includegraphics[width=.48\textwidth]{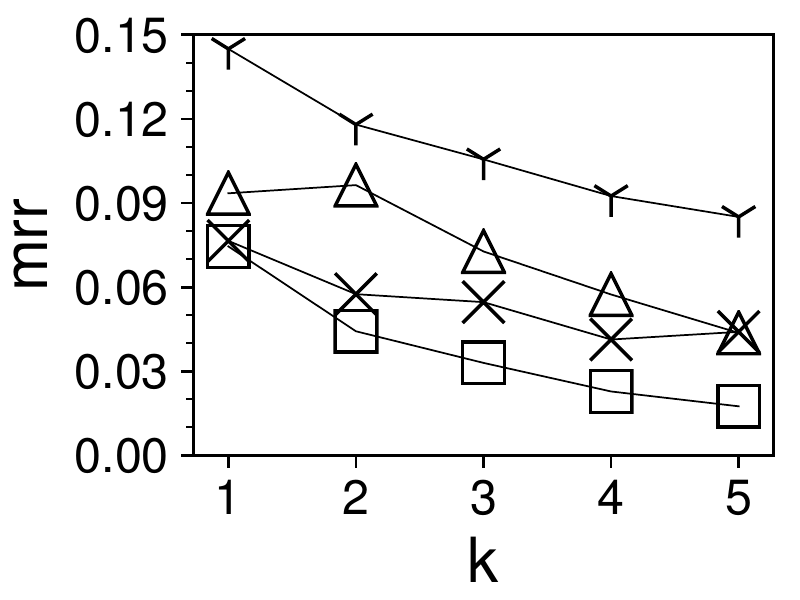}
    \caption{BB}
  \end{subfigure}
  \begin{subfigure}{.36\textwidth}
    \centering
    \includegraphics[width=.48\textwidth]{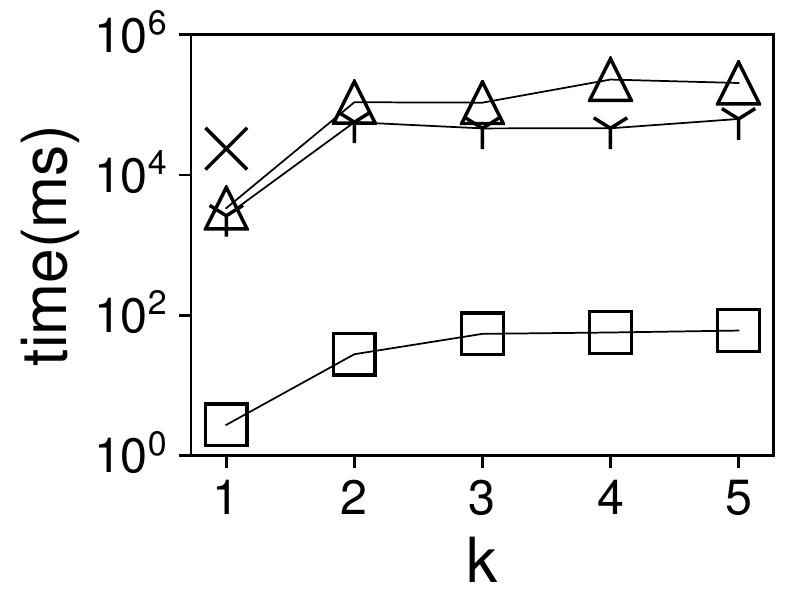}
    \includegraphics[width=.48\textwidth]{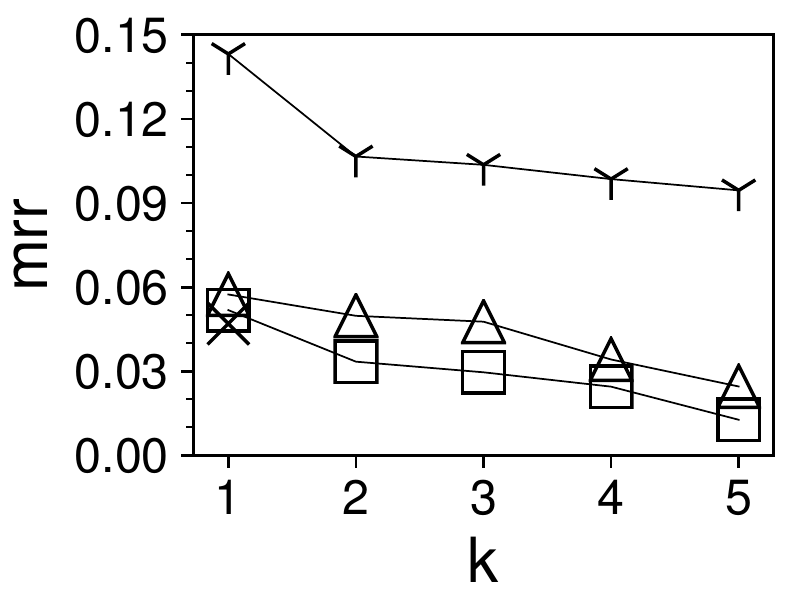}
    \caption{AQ}
  \end{subfigure}
  \\
  \begin{subfigure}{.36\textwidth}
    \centering
    \includegraphics[width=.48\textwidth]{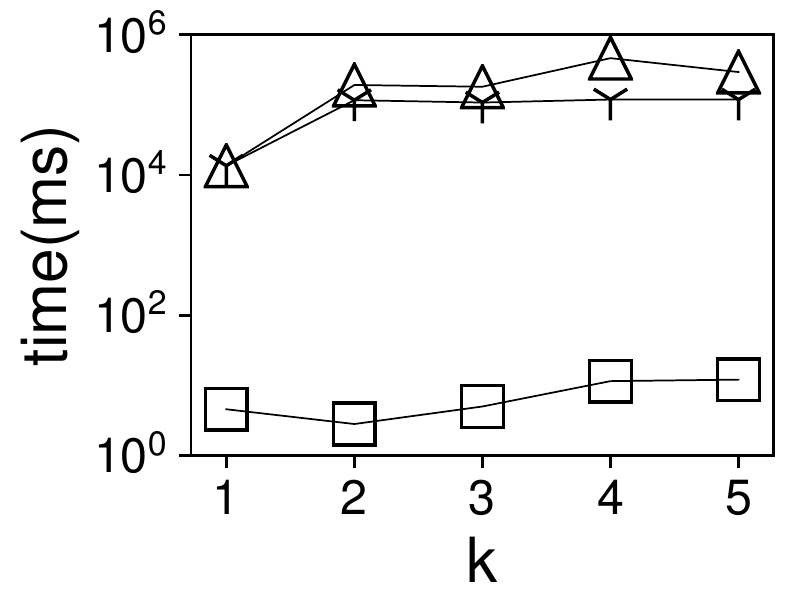}
    \includegraphics[width=.48\textwidth]{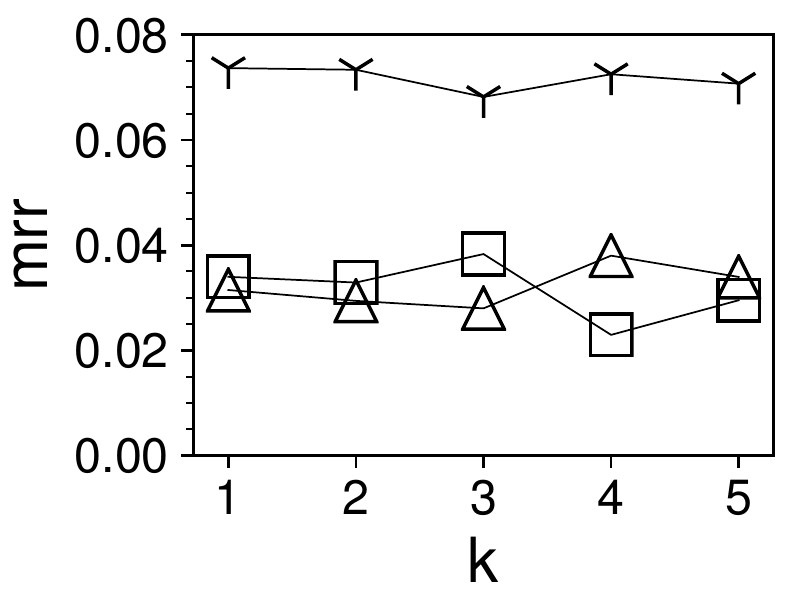}
    \caption{CT}
  \end{subfigure}
  \begin{subfigure}{.36\textwidth}
    \centering
    \includegraphics[width=.48\textwidth]{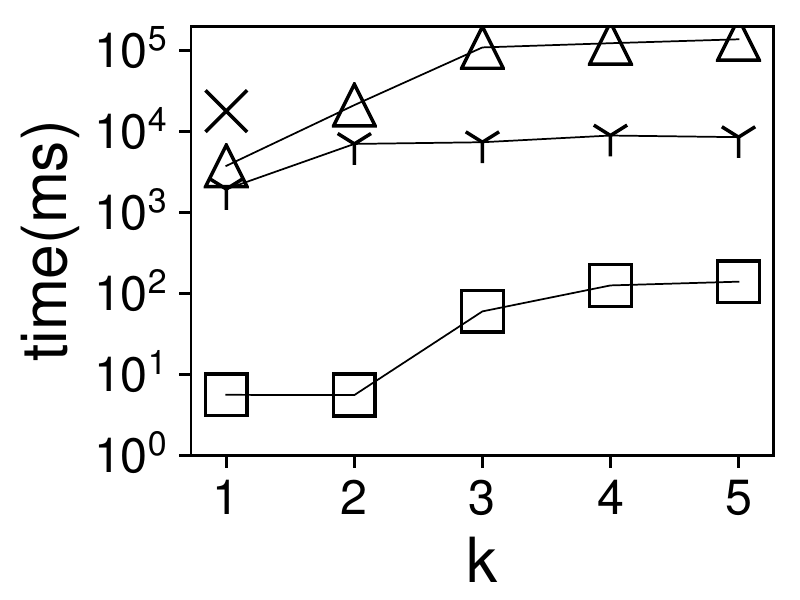}
    \includegraphics[width=.48\textwidth]{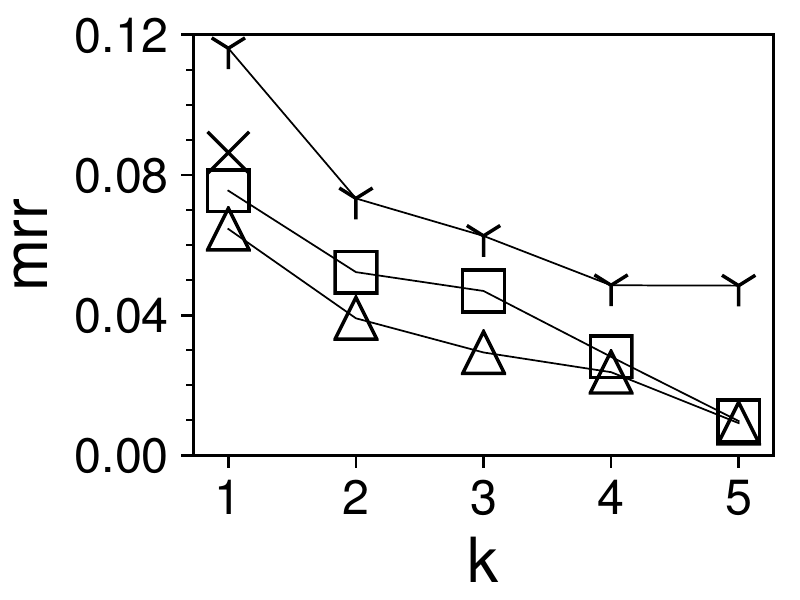}
    \caption{Movie}
  \end{subfigure}
  \\
  \begin{subfigure}{.36\textwidth}
    \centering
    \includegraphics[width=.48\textwidth]{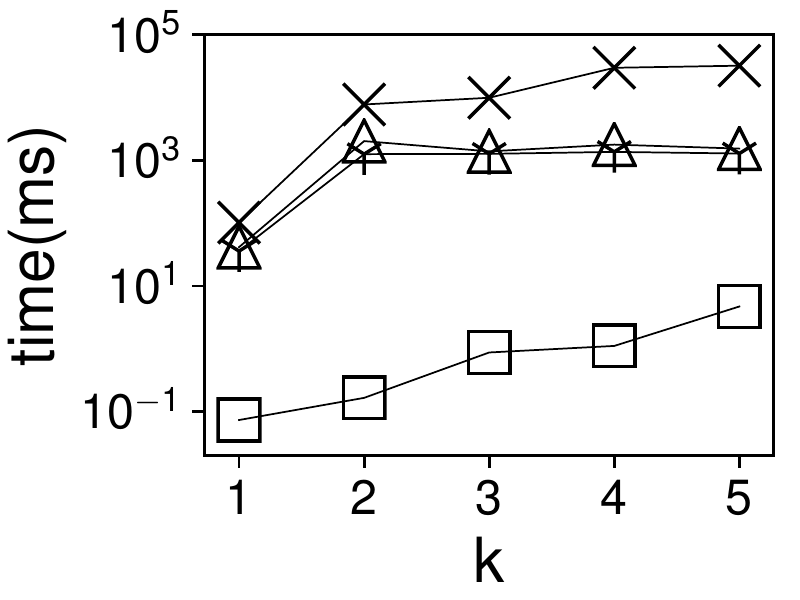}
    \includegraphics[width=.48\textwidth]{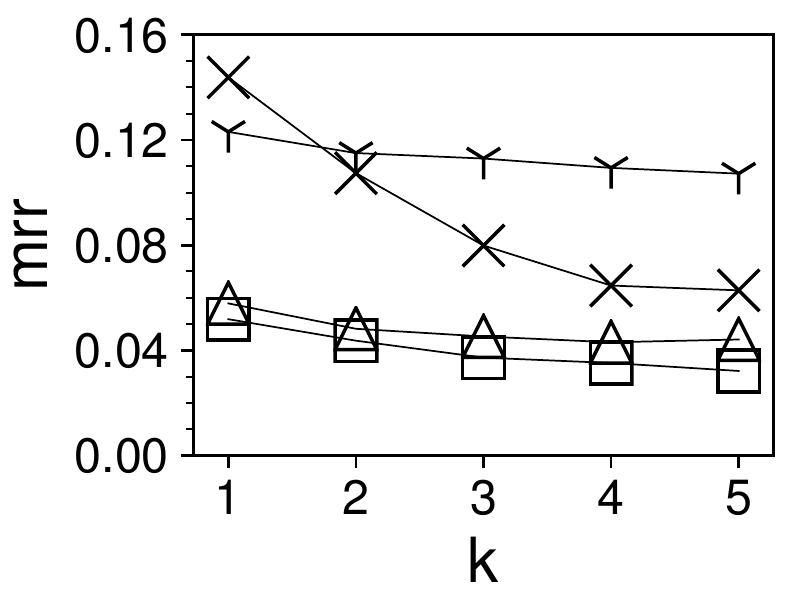}
    \caption{Indep}
  \end{subfigure}
  \begin{subfigure}{.36\textwidth}
    \centering
    \includegraphics[width=.48\textwidth]{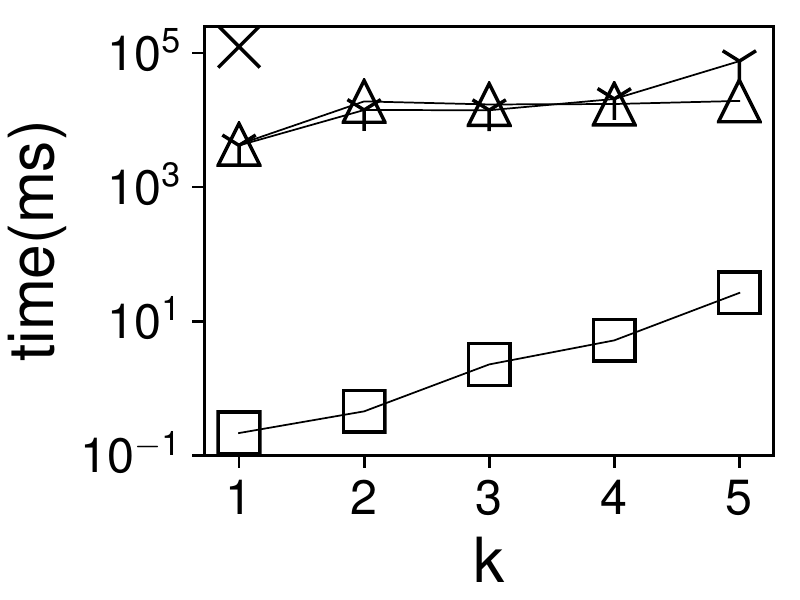}
    \includegraphics[width=.48\textwidth]{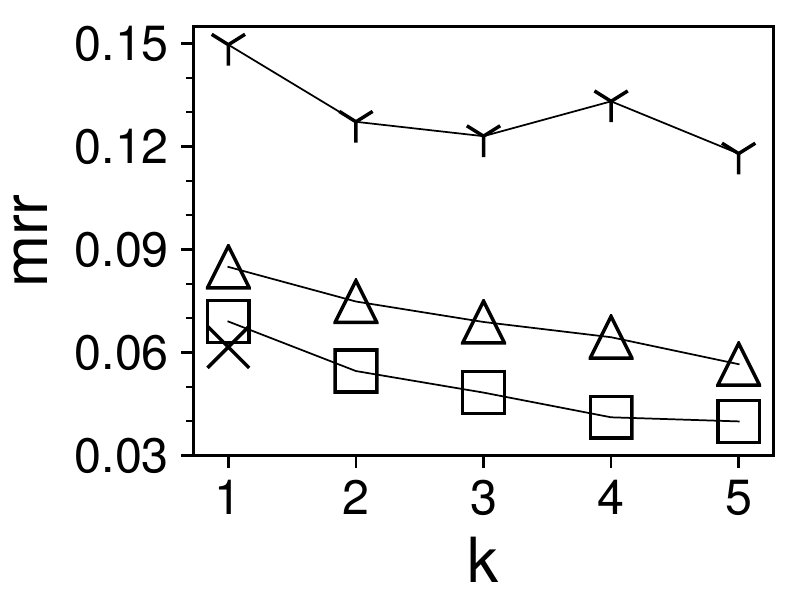}
    \caption{AntiCor}
  \end{subfigure}
  \caption{Update time and maximum regret ratios with varying $k$
  ($r=10$ for BB and Indep; $r=50$ for other datasets)}
  \label{fig:k}
\end{figure*}

\begin{figure*}[t]
  \centering
  \begin{subfigure}{.96\textwidth}
    \centering
    \includegraphics[width=\textwidth]{legend-k1.pdf}
  \end{subfigure}
  \begin{subfigure}{.36\textwidth}
    \centering
    \includegraphics[width=.48\textwidth]{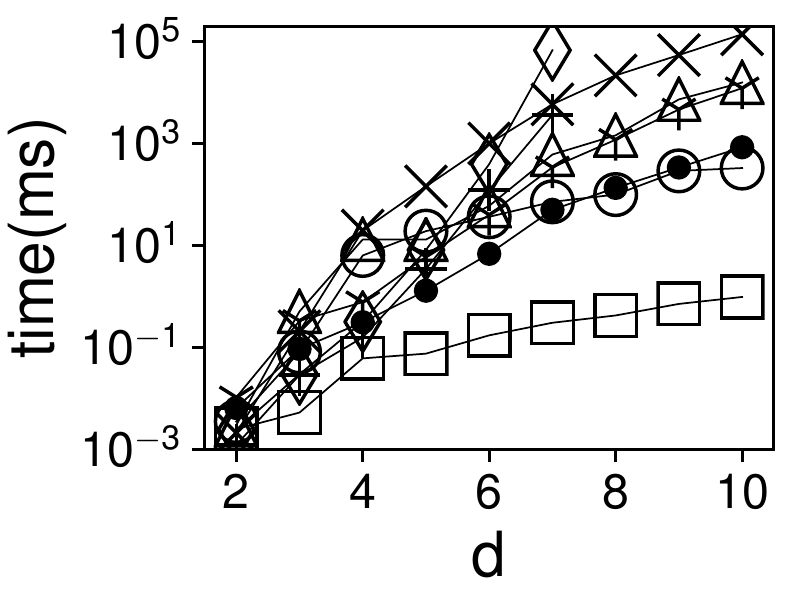}
    \includegraphics[width=.48\textwidth]{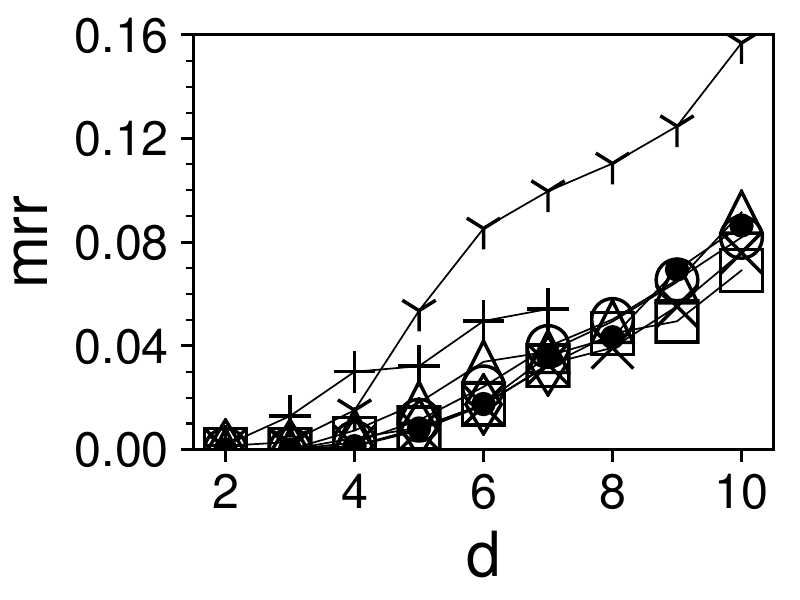}
    \caption{Indep, varying $d$}\label{fig:nd:a}
  \end{subfigure}
  \begin{subfigure}{.36\textwidth}
    \centering
    \includegraphics[width=.48\textwidth]{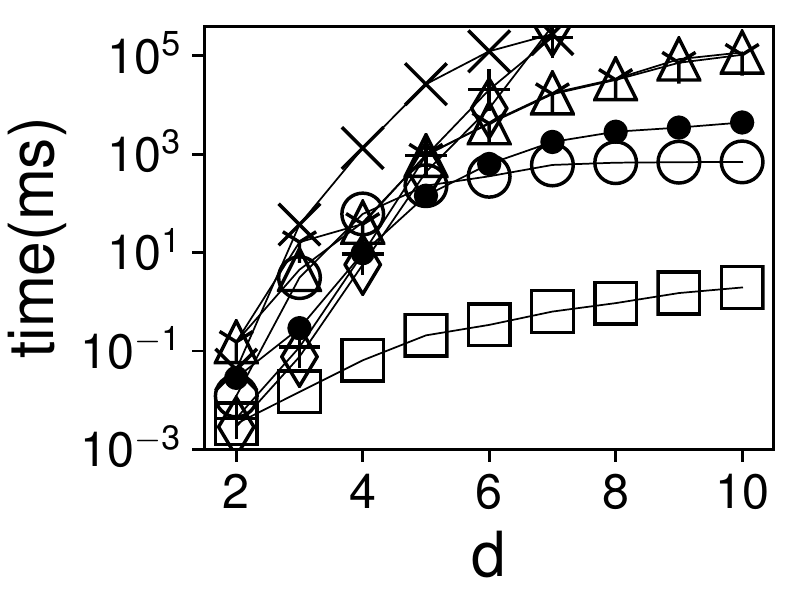}
    \includegraphics[width=.48\textwidth]{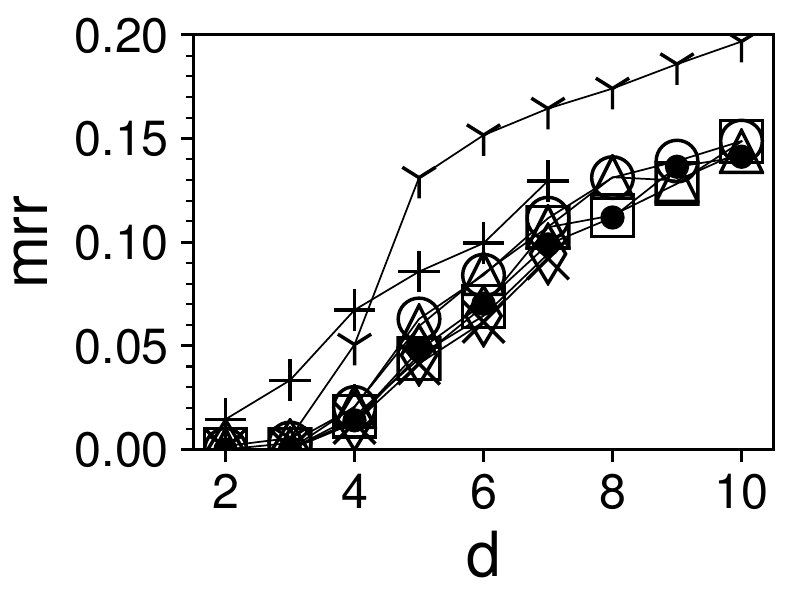}
    \caption{AntiCor, varying $d$}\label{fig:nd:b}
  \end{subfigure}
  \\
  \begin{subfigure}{.36\textwidth}
    \centering
    \includegraphics[width=.48\textwidth]{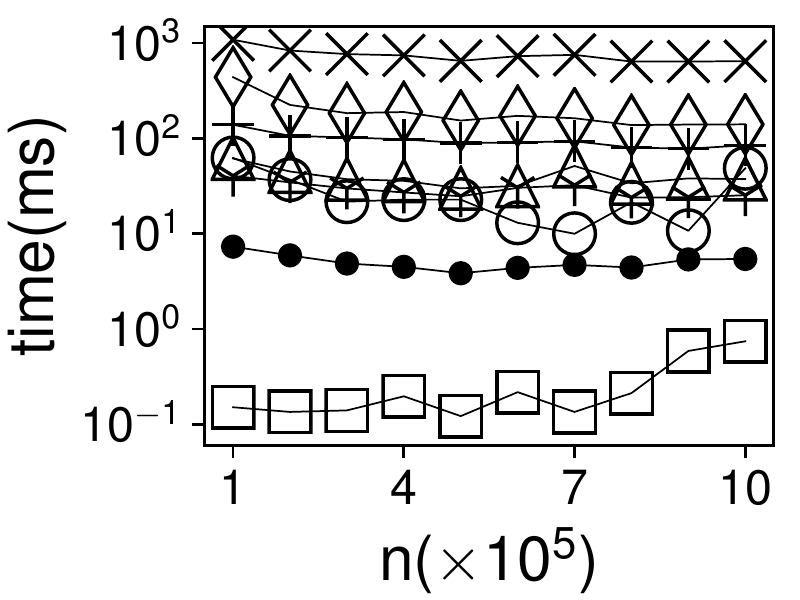}
    \includegraphics[width=.48\textwidth]{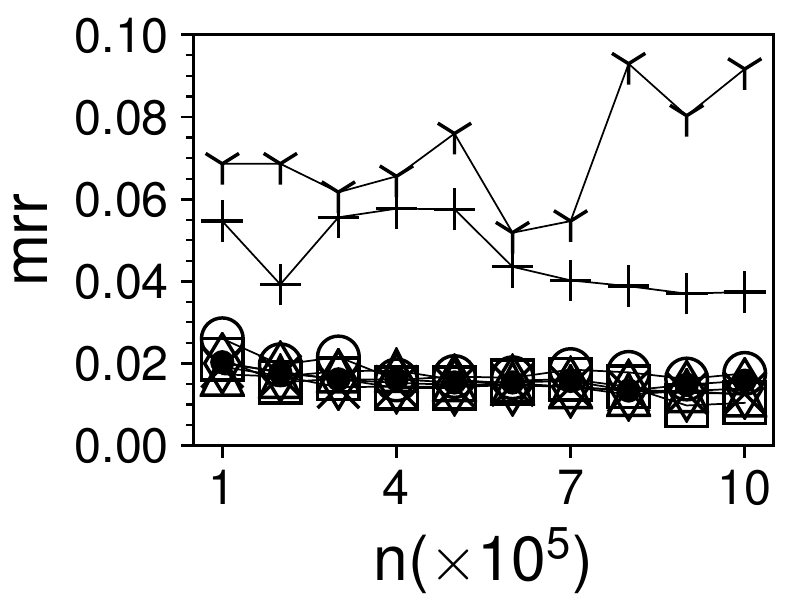}
    \caption{Indep, varying $n$}\label{fig:nd:c}
  \end{subfigure}
  \begin{subfigure}{.36\textwidth}
    \centering
    \includegraphics[width=.48\textwidth]{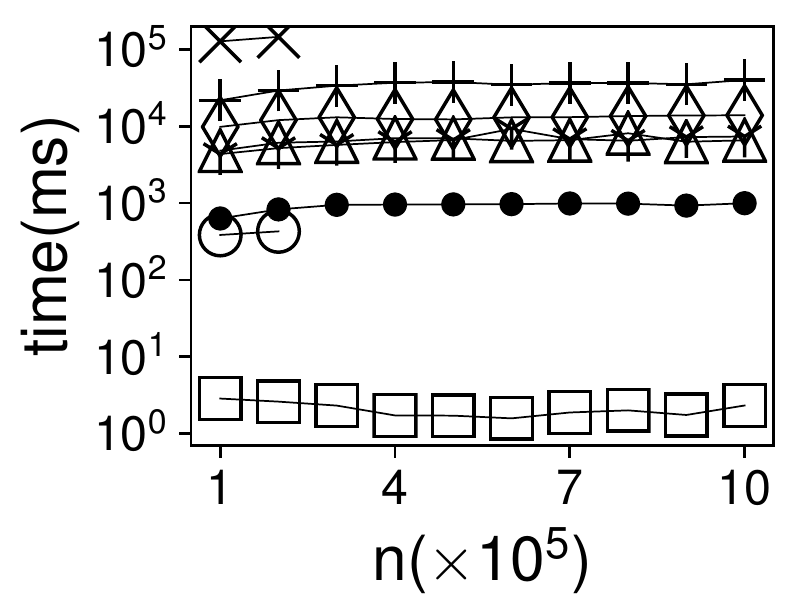}
    \includegraphics[width=.48\textwidth]{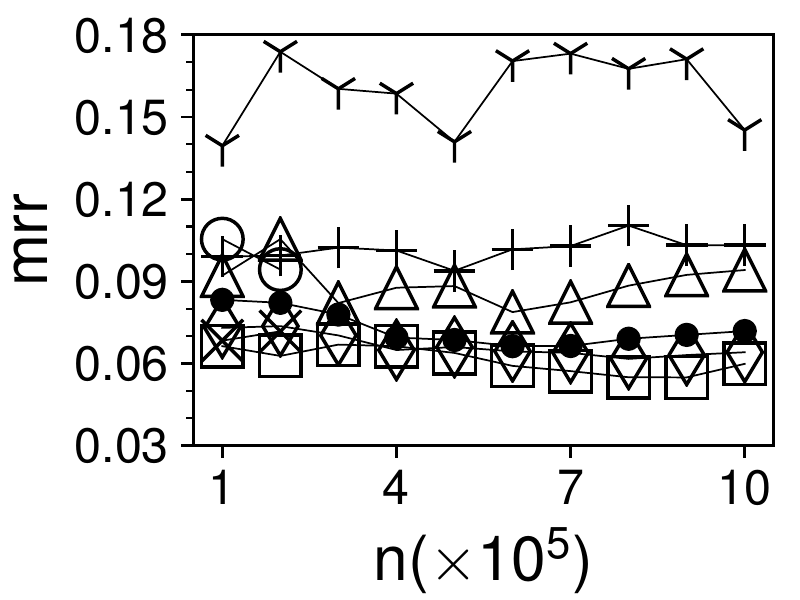}
    \caption{AntiCor, varying $n$}\label{fig:nd:d}
  \end{subfigure}
  \caption{Scalability with varying the dimensionality $d$ and dataset size $n$
  ($k=1, \; r=50$)}
  \label{fig:nd}
\end{figure*}

\noindent\textbf{Effect of $k$:}
The results for $k$-RMS with varying $k$ from $1$ to $5$ are illustrated
in Fig.~\ref{fig:k}.
We only compare FD-RMS with \textsc{Greedy}$^*$,
$\varepsilon$-\textsc{Kernel}, and HS
because other algorithms are not applicable to the case when $k>1$.
We set $r=10$ for \textbf{BB} and \textbf{Indep}
and $r=50$ for the other datasets.
The results of \textsc{Greedy}$^*$ for $k>1$ are only available
on \textbf{BB} and \textbf{Indep}. For the other datasets,
\textsc{Greedy}$^*$ fails to return any result within one day when $k>1$.
We can see all algorithms run much slower when $k$ increases.
For FD-RMS, lower efficiencies are caused by higher cost of maintaining top-$k$ results.
HS and $\varepsilon$-\textsc{Kernel} must consider all tuples in the datasets
instead of only skylines to validate that the maximum $k$-regret ratio is at most
$\varepsilon$ when $k>1$.
For \textsc{Greedy}$^*$, the number of linear programs to compute $k$-regret ratios
increases drastically with $k$.
Meanwhile, the maximum $k$-regret ratios drop with $k$,
which is obvious according to its definition.
FD-RMS achieves speedups of up to four orders of magnitude
than the baselines on all datasets.
At the same time, the solution quality of FD-RMS is also better
on all datasets except \textbf{Movie} and \textbf{CT},
where the results of HS are of slightly
higher quality in some cases.

\noindent\textbf{Scalability:}
Finally, we evaluate the scalability of different algorithms w.r.t.~the dimensionality $d$
and dataset size $n$.
To test the impact of $d$, we fix $n=100$K, $k=1$, $r=50$, and vary $d$ from $2$ to $10$.
The performance with varying $d$ is shown in Fig.~\ref{fig:nd:a}--\ref{fig:nd:b}.
Both the update time and maximum regret ratios of all algorithms increase dramatically with $d$.
Although almost all algorithms show good performance when $d=2,3$,
most of them quickly become very inefficient in high dimensions.
Nevertheless, FD-RMS has a significantly better scalability w.r.t.~$d$:
It achieves speedups of at least $100$ times over any other algorithm
while providing results of equivalent quality when $d \geq 7$.

To test the impact of $n$, we fix $d=6$, $k=1$, $r=50$, and vary $n$ from $100$K to $1$M.
The update time with varying $n$ is shown in Fig.~\ref{fig:nd:c}--\ref{fig:nd:d}.
For static algorithms, we observe different trends in efficiency on two datasets:
The update time slightly drops on \textbf{Indep} but keeps steady on \textbf{AntiCor}.
The efficiencies are determined by two factors, i.e.,
\emph{the number of tuples on the skyline} and
\emph{the frequency of skyline updates}.
As shown in Fig.~\ref{fig:stats}, when $n$ is larger, the number of tuples on the skyline
increases but the frequency of skyline updates decreases.
On \textbf{Indep}, the benefits of lower update frequencies outweigh the cost of more
skyline tuples; on \textbf{AntiCor}, two factors cancel each other.
FD-RMS runs slower when $n$ increases due to
higher cost of maintaining top-$k$ results on \textbf{Indep}.
But, on \textbf{AntiCor}, the update time keeps steady with $n$
because of smaller values of $\varepsilon$
and $m$, which cancel the higher cost of maintaining top-$k$ results.
We can observe that the maximum regret ratios of most algorithms
are not not significantly affected by $n$.
The solution quality of FD-RMS is close to the best of static algorithms.
Generally, FD-RMS always outperforms all baselines for different values of $n$.

%!TEX root = ../main.tex
\section{Related Work}
\label{sec:related:work}

There have been extensive studies on the $k$-regret minimizing set ($k$-RMS) problem
(see~\cite{DBLP:journals/vldb/XieWL20} for a survey).
Nanongkai et al.~\cite{DBLP:journals/pvldb/NanongkaiSLLX10}
first introduced the notions of \emph{maximum regret ratio} and \emph{$r$-regret query}
(i.e., maximum $1$-regret ratio and $1$-RMS in this paper).
They proposed the \textsc{Cube} algorithm to provide an upper-bound guarantee
for the maximum regret ratio of the optimal solution of $1$-RMS.
They also proposed the \textsc{Greedy} heuristic for $1$-RMS, which always picked a tuple that
maximally reduced the maximum regret ratio at each iteration.
Peng and Wong~\cite{DBLP:conf/icde/PengW14} proposed the \textsc{GeoGreedy} algorithm to
improve the efficiency of \textsc{Greedy} by utilizing the geometric properties of $1$-RMS.
Asudeh et al.~\cite{DBLP:conf/sigmod/AsudehN0D17} proposed two discretized matrix min-max (DMM)
based algorithms for $1$-RMS. Xie et al.~\cite{DBLP:conf/sigmod/XieW0LL18}
designed the \textsc{Sphere} algorithm
for $1$-RMS based on the notion of $\varepsilon$-kernel~\cite{DBLP:journals/jacm/AgarwalHV04}.
Shetiya et al.~\cite{Shetiya2019} proposed a unified algorithm called URM
based on $k$-\textsc{Medoid} clustering for $l^p$-norm RMS
problems, of which $1$-RMS was a special case when $p=\infty$.
The aforementioned algorithms cannot be used for $k$-RMS when $k > 1$.
Chester et al.~\cite{DBLP:journals/pvldb/ChesterTVW14} first extended
the notion of $1$-RMS to $k$-RMS.
They also proposed a randomized \textsc{Greedy}$^*$ algorithm
that extended the \textsc{Greedy} heuristic to support $k$-RMS when $k > 1$.
The min-size version of $k$-RMS that returned the minimum subset whose maximum $k$-regret ratio
was at most $\varepsilon$ for a given $\varepsilon \in (0,1)$ was studied
in~\cite{DBLP:conf/wea/AgarwalKSS17,DBLP:conf/alenex/KumarS18}.
They proposed two algorithms for min-size $k$-RMS based on the notion of
$\varepsilon$-kernel~\cite{DBLP:journals/jacm/AgarwalHV04} and hitting-set,
respectively. However, all above algorithms are designed for the static setting and
very inefficient to process database updates.
To the best of our knowledge, FD-RMS is the first $k$-RMS algorithm that is optimized for
the fully dynamic setting and efficiently maintains the result for dynamic updates.

Different variations of regret minimizing set problems were also studied recently.
The $1$-RMS problem with nonlinear utility functions were studied
in~\cite{DBLP:journals/pvldb/FaulknerBL15,DBLP:journals/tods/QiZSY18,DBLP:conf/aaai/SomaY17a}.
Specifically, they generalized the class of utility functions to
\emph{convex functions}~\cite{DBLP:journals/pvldb/FaulknerBL15},
\emph{multiplicative functions}~\cite{DBLP:journals/tods/QiZSY18},
and \emph{submodular functions}~\cite{DBLP:conf/aaai/SomaY17a}, respectively.
Asudeh et al.~\cite{DBLP:conf/sigmod/AsudehN00J19} proposed the
rank-regret representative (RRR) problem.
The difference between RRR and RMS is that
the regret in RRR is defined by ranking while the regret in RMS is defined by score.
Several studies~\cite{DBLP:conf/icde/ZeighamiW19,DBLP:conf/aaai/StorandtF19,Shetiya2019}
investigated the \emph{average regret minimization} (ARM) problem.
Instead of minimizing the maximum regret ratio,
ARM returns a subset of $r$ tuples such that the average regret of all possible users is minimized.
The problem of \emph{interactive regret minimization} that aimed to enhance the regret minimization
problem with user interactions was studied
in~\cite{DBLP:conf/sigmod/NanongkaiLSM12,DBLP:conf/sigmod/XieWL19}.
Xie et al.~\cite{Xie2020} proposed a variation of min-size RMS
called $\alpha$-happiness query.
Since these variations have different formulations from the original $k$-RMS problem,
the algorithms proposed for them cannot be directly applied to the $k$-RMS problem.
Moreover, these algorithms are still proposed for the static setting
without considering database updates.

%!TEX root = ../main.tex
\section{Conclusion}\label{sec:conclusion}
In this paper, we studied the problem of maintaining $k$-regret minimizing sets
($k$-RMS) on dynamic datasets with arbitrary insertions and deletions of tuples.
We proposed the first fully-dynamic $k$-RMS algorithm called FD-RMS.
FD-RMS was based on transforming fully-dynamic $k$-RMS to a dynamic set cover problem,
and it could dynamically maintain the result of $k$-RMS with a theoretical guarantee.
Extensive experiments on real-world and synthetic datasets confirmed the efficiency,
effectiveness, and scalability of FD-RMS compared with existing static approaches to $k$-RMS.
For future work, it would be interesting to investigate whether our techniques
can be extended to $k$-RMS and related problems on higher dimensions (i.e.,~$d>10$)
or with nonlinear utility functions
(e.g.,~\cite{DBLP:journals/pvldb/FaulknerBL15,DBLP:journals/tods/QiZSY18,DBLP:conf/aaai/SomaY17a})
in dynamic settings.

\section*{Acknowledgment}
We thank anonymous reviewers for their helpful comments to improve this research.
Yanhao Wang has been supported by the MLDB project of Academy of Finland (decision number: 322046).
The research of Raymond Chi-Wing Wong is supported by HKRGC GRF 16219816.

\bibliographystyle{IEEEtran}
\bibliography{references}

% Generated by IEEEtran.bst, version: 1.14 (2015/08/26)
\begin{thebibliography}{10}
\providecommand{\url}[1]{#1}
\csname url@samestyle\endcsname
\providecommand{\newblock}{\relax}
\providecommand{\bibinfo}[2]{#2}
\providecommand{\BIBentrySTDinterwordspacing}{\spaceskip=0pt\relax}
\providecommand{\BIBentryALTinterwordstretchfactor}{4}
\providecommand{\BIBentryALTinterwordspacing}{\spaceskip=\fontdimen2\font plus
\BIBentryALTinterwordstretchfactor\fontdimen3\font minus
  \fontdimen4\font\relax}
\providecommand{\BIBforeignlanguage}[2]{{%
\expandafter\ifx\csname l@#1\endcsname\relax
\typeout{** WARNING: IEEEtran.bst: No hyphenation pattern has been}%
\typeout{** loaded for the language `#1'. Using the pattern for}%
\typeout{** the default language instead.}%
\else
\language=\csname l@#1\endcsname
\fi
#2}}
\providecommand{\BIBdecl}{\relax}
\BIBdecl

\bibitem{DBLP:journals/pvldb/NanongkaiSLLX10}
D.~Nanongkai, A.~D. Sarma, A.~Lall, R.~J. Lipton, and J.~Xu,
  ``Regret-minimizing representative databases,'' \emph{PVLDB}, vol.~3, no.~1,
  pp. 1114--1124, 2010.

\bibitem{DBLP:conf/edbt/StoyanovichYJ18}
J.~Stoyanovich, K.~Yang, and H.~V. Jagadish, ``Online set selection with
  fairness and diversity constraints,'' in \emph{{EDBT}}, 2018, pp. 241--252.

\bibitem{DBLP:journals/tkde/WangLT19}
Y.~Wang, Y.~Li, and K.~Tan, ``Efficient representative subset selection over
  sliding windows,'' \emph{{IEEE} Trans. Knowl. Data Eng.}, vol.~31, no.~7, pp.
  1327--1340, 2019.

\bibitem{DBLP:conf/recsys/LiuMLY11}
N.~N. Liu, X.~Meng, C.~Liu, and Q.~Yang, ``Wisdom of the better few: cold start
  recommendation via representative based rating elicitation,'' in
  \emph{RecSys}, 2011, pp. 37--44.

\bibitem{DBLP:conf/kdd/WangLT19}
Y.~Wang, Y.~Li, and K.~Tan, ``Coresets for minimum enclosing balls over sliding
  windows,'' in \emph{{KDD}}, 2019, pp. 314--323.

\bibitem{DBLP:conf/icde/BorzsonyiKS01}
S.~B\"orzs\"onyi, D.~Kossmann, and K.~Stocker, ``The skyline operator,'' in
  \emph{ICDE}, 2001, pp. 421--430.

\bibitem{DBLP:conf/icde/PengW14}
P.~Peng and R.~C. Wong, ``Geometry approach for k-regret query,'' in
  \emph{ICDE}, 2014, pp. 772--783.

\bibitem{DBLP:journals/pvldb/ChesterTVW14}
S.~Chester, A.~Thomo, S.~Venkatesh, and S.~Whitesides, ``Computing k-regret
  minimizing sets,'' \emph{PVLDB}, vol.~7, no.~5, pp. 389--400, 2014.

\bibitem{DBLP:conf/wea/AgarwalKSS17}
P.~K. Agarwal, N.~Kumar, S.~Sintos, and S.~Suri, ``Efficient algorithms for
  k-regret minimizing sets,'' in \emph{SEA}, 2017, pp. 7:1--7:23.

\bibitem{DBLP:conf/icdt/CaoLWWWWZ17}
W.~Cao, J.~Li, H.~Wang, K.~Wang, R.~Wang, R.~C. Wong, and W.~Zhan, ``K-regret
  minimizing set: Efficient algorithms and hardness,'' in \emph{ICDT}, 2017,
  pp. 11:1--11:19.

\bibitem{DBLP:conf/sigmod/AsudehN0D17}
A.~Asudeh, A.~Nazi, N.~Zhang, and G.~Das, ``Efficient computation of
  regret-ratio minimizing set: A compact maxima representative,'' in
  \emph{SIGMOD}, 2017, pp. 821--834.

\bibitem{DBLP:conf/alenex/KumarS18}
N.~Kumar and S.~Sintos, ``Faster approximation algorithm for the k-regret
  minimizing set and related problems,'' in \emph{ALENEX}, 2018, pp. 62--74.

\bibitem{DBLP:conf/sigmod/XieW0LL18}
M.~Xie, R.~C. Wong, J.~Li, C.~Long, and A.~Lall, ``Efficient k-regret query
  algorithm with restriction-free bound for any dimensionality,'' in
  \emph{SIGMOD}, 2018, pp. 959--974.

\bibitem{Shetiya2019}
S.~Shetiya, A.~Asudeh, S.~Ahmed, and G.~Das, ``A unified optimization algorithm
  for solving ``regret-minimizing representative'' problems,'' \emph{PVLDB},
  vol.~13, no.~3, pp. 239--251, 2019.

\bibitem{DBLP:conf/coco/Karp72}
R.~M. Karp, ``Reducibility among combinatorial problems,'' in \emph{Complexity
  of Computer Computations}, 1972, pp. 85--103.

\bibitem{DBLP:journals/jacm/Feige98}
U.~Feige, ``A threshold of ln \emph{n} for approximating set cover,'' \emph{J.
  ACM}, vol.~45, no.~4, pp. 634--652, 1998.

\bibitem{DBLP:journals/siamcomp/BhattacharyaHI18}
S.~Bhattacharya, M.~Henzinger, and G.~F. Italiano, ``Deterministic fully
  dynamic data structures for vertex cover and matching,'' \emph{SIAM J.
  Comput.}, vol.~47, no.~3, pp. 859--887, 2018.

\bibitem{DBLP:conf/stoc/GuptaK0P17}
A.~Gupta, R.~Krishnaswamy, A.~Kumar, and D.~Panigrahi, ``Online and dynamic
  algorithms for set cover,'' in \emph{STOC}, 2017, pp. 537--550.

\bibitem{DBLP:conf/stoc/AbboudA0PS19}
A.~Abboud, R.~Addanki, F.~Grandoni, D.~Panigrahi, and B.~Saha, ``Dynamic set
  cover: improved algorithms and lower bounds,'' in \emph{STOC}, 2019, pp.
  114--125.

\bibitem{DBLP:conf/stacs/HjulerIPS19}
N.~Hjuler, G.~F. Italiano, N.~Parotsidis, and D.~Saulpic, ``Dominating sets and
  connected dominating sets in dynamic graphs,'' in \emph{STACS}, 2019, pp.
  35:1--35:17.

\bibitem{DBLP:conf/kdd/RamG12}
P.~Ram and A.~G. Gray, ``Maximum inner-product search using cone trees,'' in
  \emph{KDD}, 2012, pp. 931--939.

\bibitem{DBLP:conf/sigmod/YuAY12}
A.~Yu, P.~K. Agarwal, and J.~Yang, ``Processing a large number of continuous
  preference top-k queries,'' in \emph{SIGMOD}, 2012, pp. 397--408.

\bibitem{DBLP:conf/sdm/CurtinGR13}
R.~R. Curtin, A.~G. Gray, and P.~Ram, ``Fast exact max-kernel search,'' in
  \emph{SDM}, 2013, pp. 1--9.

\bibitem{DBLP:journals/cacm/Bentley75}
J.~L. Bentley, ``Multidimensional binary search trees used for associative
  searching,'' \emph{Commun. ACM}, vol.~18, no.~9, pp. 509--517, 1975.

\bibitem{DBLP:journals/acta/FinkelB74}
R.~A. Finkel and J.~L. Bentley, ``Quad trees: A data structure for retrieval on
  composite keys,'' \emph{Acta Inf.}, vol.~4, pp. 1--9, 1974.

\bibitem{DBLP:conf/recsys/BachrachFGKKNP14}
Y.~Bachrach, Y.~Finkelstein, R.~Gilad-Bachrach, L.~Katzir, N.~Koenigstein,
  N.~Nice, and U.~Paquet, ``Speeding up the xbox recommender system using a
  euclidean transformation for inner-product spaces,'' in \emph{RecSys}, 2014,
  pp. 257--264.

\bibitem{DBLP:journals/vldb/XieWL20}
M.~Xie, R.~C. Wong, and A.~Lall, ``An experimental survey of regret
  minimization query and variants: bridging the best worlds between top-k query
  and skyline query,'' \emph{{VLDB} J.}, vol.~29, no.~1, pp. 147--175, 2020.

\bibitem{DBLP:journals/jacm/AgarwalHV04}
P.~K. Agarwal, S.~Har{-}Peled, and K.~R. Varadarajan, ``Approximating extent
  measures of points,'' \emph{J. ACM}, vol.~51, no.~4, pp. 606--635, 2004.

\bibitem{DBLP:journals/pvldb/FaulknerBL15}
T.~K. Faulkner, W.~Brackenbury, and A.~Lall, ``K-regret queries with nonlinear
  utilities,'' \emph{PVLDB}, vol.~8, no.~13, pp. 2098--2109, 2015.

\bibitem{DBLP:journals/tods/QiZSY18}
J.~Qi, F.~Zuo, H.~Samet, and J.~C. Yao, ``K-regret queries using multiplicative
  utility functions,'' \emph{ACM Trans. Database Syst.}, vol.~43, no.~2, pp.
  10:1--10:41, 2018.

\bibitem{DBLP:conf/aaai/SomaY17a}
T.~Soma and Y.~Yoshida, ``Regret ratio minimization in multi-objective
  submodular function maximization,'' in \emph{AAAI}, 2017, pp. 905--911.

\bibitem{DBLP:conf/sigmod/AsudehN00J19}
A.~Asudeh, A.~Nazi, N.~Zhang, G.~Das, and H.~V. Jagadish, ``{RRR}: Rank-regret
  representative,'' in \emph{SIGMOD}, 2019, pp. 263--280.

\bibitem{DBLP:conf/icde/ZeighamiW19}
S.~Zeighami and R.~C. Wong, ``Finding average regret ratio minimizing set in
  database,'' in \emph{ICDE}, 2019, pp. 1722--1725.

\bibitem{DBLP:conf/aaai/StorandtF19}
S.~Storandt and S.~Funke, ``Algorithms for average regret minimization,'' in
  \emph{AAAI}, 2019, pp. 1600--1607.

\bibitem{DBLP:conf/sigmod/NanongkaiLSM12}
D.~Nanongkai, A.~Lall, A.~D. Sarma, and K.~Makino, ``Interactive regret
  minimization,'' in \emph{SIGMOD}, 2012, pp. 109--120.

\bibitem{DBLP:conf/sigmod/XieWL19}
M.~Xie, R.~C. Wong, and A.~Lall, ``Strongly truthful interactive regret
  minimization,'' in \emph{SIGMOD}, 2019, pp. 281--298.

\bibitem{Xie2020}
M.~Xie, R.~C. Wong, P.~Peng, and V.~J. Tsotras, ``Being happy with the least:
  Achieving $\alpha$-happiness with minimum number of tuples,'' in \emph{ICDE},
  2020, pp. 1009--1020.

\end{thebibliography}

%!TEX root = ../main.tex
\appendices

\begin{table*}[t]
  \centering
  \caption{Frequently used notations}\label{tbl:notation}
  \begin{tabular}{|c|m{445pt}|}
  \hline
  \textbf{Symbol} & \textbf{Description} \\
  \hline
  $P_t$ & the database at time $t$ ($t \geq 0$) \\
  \hline
  $p$ & a tuple in database $P_t$ \\
  \hline
  $d$ & the dimensionality of $P_t$ \\
  \hline
  $n_t$ & the number of tuples in $P_t$ \\
  \hline
  $\Delta$ & a sequence $\langle \Delta_1, \Delta_2, \ldots \rangle$ of operations for database update \\
  \hline
  $\Delta_t$ & an operation $\langle p,+ \rangle$ or $\langle p,- \rangle$ at time $t$
               to update the database from $P_{t-1}$ to $P_t$ by adding/deleting tuple $p$ \\
  \hline
  $\mathbb{U}$ & the space of all nonnegative utility vectors \\
  \hline
  $u$ & a utility vector in $\mathbb{U}$ \\
  \hline
  $\varphi(u,P_t)$ & the top-ranked tuple in $P_t$ w.r.t.~$u$ \\
  \hline
  $\omega(u,P_t)$ & the score of $\varphi(u,P_t)$ w.r.t.~$u$ \\
  \hline
  $\varphi_j(u,P_t)$ & the $j$\textsuperscript{th}-ranked tuple in $P_t$ w.r.t.~$u$ \\
  \hline
  $\omega_j(u,P_t)$ & the score of $\varphi_j(u,P_t)$ w.r.t.~$u$ \\
  \hline
  $\Phi_k(u,P_t)$ & the set of top-$k$ results of $P_t$ w.r.t.~$u$, i.e.,
  $ \Phi_k(u,P_t) = \{ \varphi_j(u,P_t) : 1 \leq j \leq k \} $ \\
  \hline
  $\Phi_{k,\varepsilon}(u,P_t)$ & the set of $\varepsilon$-approximate top-$k$ results of $P_t$
  w.r.t.~$u$, i.e., $ \Phi_{k,\varepsilon}(u,P_t)
  = \{ p \in P_t : \langle u,p \rangle \geq (1-\varepsilon)\cdot\omega_{k}(u,P) \} $ \\
  \hline
  $\mathtt{rr}_k(u,Q)$ & the $k$-regret ratio of a subset $Q \subseteq P_t$ w.r.t.~$u$ \\
  \hline
  $\mathtt{mrr}_k(Q)$ & the maximum $k$-regret ratio of a subset $Q \subseteq P_t$ \\
  \hline
  $\mathtt{RMS}(k,r)$ & a $k$-RMS problem with size constraint
  $r \in \mathbb{Z}^+$ and $r \geq d$ \\
  \hline
  $Q^*_t$ & the optimal result of $\mathtt{RMS}(k,r)$ on $P_t$ \\
  \hline
  $\varepsilon^*_{k,r}$ & the (optimal) maximum $k$-regret ratio of $Q^*_t$ over $P_t$ \\
  \hline
  $Q_t$ & the result of $\mathtt{RMS}(k,r)$ on $P_t$ returned by FD-RMS \\
  \hline
  $\Sigma=(\mathcal{U},\mathcal{S})$ & a set system with the universe $\mathcal{U}$
  and the collection $\mathcal{S}$ of sets. When $\Sigma$ is built on the approximate
  top-$k$ results of $m$ utility vectors on $P_t$, we have $m=|\mathcal{U}|$
  and $n_t=|\mathcal{S}|$. \\
  \hline
  $\sigma$ & an operation $\langle u,S,\pm \rangle$ or $\langle u,\mathcal{U},\pm \rangle$
  to update the set system $\Sigma$ by adding/removing $u$ to/from $S \in \mathcal{S}$ or
  adding/removing $u$ to/from $\mathcal{U}$ \\
  \hline
  $S(p)$ & a set in $\mathcal{S}$ that contains all utility vectors in $\mathcal{U}$
  where $p$ is an approximate top-$k$ result on $P_t$ \\
  \hline
  $\mathcal{C}^*$ & the optimal solution for set cover on $\Sigma$ \\
  \hline
  $\mathcal{C}$ & an approximate solution for set cover on $\Sigma$ \\
  \hline
  $\mathtt{cov}(S)$ & the cover set of $S$ in $\mathcal{C}$ \\
  \hline
  $\varepsilon$ & an input parameter to specify the approximation factor
  of top-$k$ queries in FD-RMS \\
  \hline
  $M$ & an input parameter to provide the upper bound of the number $m$ of utility
  vectors used in FD-RMS \\
  \hline
  $\mathtt{u}(\Delta_t)$ & the number of utility vectors whose approximate top-$k$
  results are changed by $\Delta_t$ \\
  \hline
  \end{tabular}
\end{table*}

\section{Frequently Used Notations}
A list of frequently used notations in this paper is summarized in Table~\ref{tbl:notation}.

\section{Additional Experimental Evaluation}

\subsection{Extension of Greedy for Dynamic Updates}
In Section~\ref{sec:exp}, all algorithms (except FD-RMS and URM) we compare are static algorithms
that recompute the results each time when the skyline is updated.
In fact, it is possible to extend a static algorithm, e.g.,
\textsc{Greedy}~\cite{DBLP:journals/pvldb/NanongkaiSLLX10}, to
support dynamic updates as follows: Initially, it computes a $k$-regret minimizing set
of size $r'=r+\Delta_r$ (where $\Delta_r$ is an integer parameter) using \textsc{Greedy}.
Among this set, a set of $r$ tuples is reported as the result.
The result is recomputed whenever (1) $\Delta_r$ tuples from the $k$-regret minimizing set
of size $r'$ are deleted (in which case it recomputes a $k$-regret minimizing set of size $r'$);
(2) there are a total of $\Delta_s$ number of insertions into the skyline
where $\Delta_s$ is another parameter.
We refer to this extension of \textsc{Greedy} as the Dynamic Greedy (DG) algorithm.

In this subsection, we compare the Dynamic Greedy (DG) algorithm with \textsc{Greedy}
and FD-RMS. Specifically, we test four sets of parameters $\Delta_s,\Delta_r$, namely
DG-1 with $\Delta_s=10,\Delta_r=0.25r$, DG-2 with $\Delta_s=25,\Delta_r=0.5r$,
DG-3 with $\Delta_s=50,\Delta_r=0.75r$, and DG-4 with $\Delta_s=100,\Delta_r=r$.
The original \textsc{Greedy} is a special case of DG with $\Delta_s=1,\Delta_r=0$.
The experimental results are shown in Fig.~\ref{fig:dynamic:greedy}.
First of all, when $\Delta_s$ and $\Delta_r$ increase, DG runs faster because
the recomputation is executed at lower frequencies. But unfortunately,
DG still runs much slower than FD-RMS even when $\Delta_s=100,\Delta_r=r$
because of the huge gap in the efficiency of \textsc{Greedy} and FD-RMS.
Meanwhile, the maximum regret ratios of the results of DG increase
with $\Delta_s$ and $\Delta_r$, and become significantly higher than
those of FD-RMS when $\Delta_s$ and $\Delta_r$ are large. This is because
the results of DG could often be obsolete w.r.t.~up-to-date
datasets. In general, DG is significantly inferior to FD-RMS in terms of
both efficiency and quality of results for different parameters.
Nevertheless, DG might be used to accelerate \textsc{Greedy} in dynamic settings
if the requirement for quality is not so strict.

\subsection{Effect of Update Patterns}
In Section~\ref{sec:exp}, we consider the updates to datasets consist of all insertions
first followed by all deletions to test the performance of different algorithms
for tuple insertions and deletions separately. In this subsection,
we evaluate the performance of FD-RMS in two update
patterns: (1) First all insertions followed by all deletions as Section~\ref{sec:exp} (I+D);
(2) A random mixture of insertions and deletions (Mixed).
We note that the efficiency and solution quality of a static
algorithm are not influenced by update patterns
because the update rates of skylines are similar for tuple insertions
and deletions, and it reruns from scratch for each skyline update.
The update time of FD-RMS in two update patterns is shown
in Fig.~\ref{fig:time:mixed}. We can see that the update time is nearly indifferent
to update patterns. In addition, we also confirm that the results returned by FD-RMS
remain the same in almost all cases regardless of the orders of tuple operations.

In addition, we show the update time of FD-RMS with varying $r$ for tuple
insertions and deletions in Fig.~\ref{fig:id}. The update time for deletions
is slightly longer than that for insertions on all datasets. This
is mainly attributed to the difference in the update of approximate top-$k$ results.
For insertions, the dual-tree can incrementally update the current top-$k$
result using the utility index only; but for deletions, the dual-tree must retrieve
the new top-$k$ result from scratch on the tuple index once a top-$k$ tuple is deleted.

\subsection{Extension of URM for Dynamic Updates}\label{appendix:dynamic:urm}

\begin{algorithm}
  \caption{\textsc{Dynamic URM}}
  \label{alg:urm}
  \Input{Initial database $P_0$, set of operations $\Delta$}
  \Output{Result $Q_t$ for $1$-RMS on $P_t$ at time $t$}
  \tcc{compute an initial result $Q_0$ on $P_0$}
  run DMM-RRMS to get an initial result $S_0$\;
  run $k$-\textsc{medoid} from $S_0$ until convergence as $Q_0$\;
  \tcc{update $Q_{t-1}$ to $Q_t$ for $\Delta_t = \langle p,\pm \rangle$}
  Update $P_{t-1}$ to $P_t$ w.r.t.~$\Delta_t$\;
  \uIf{$\Delta_t= \langle p,+ \rangle$}{
    \uIf{there exists $p' \in Q_{t-1}$ such that the max regret ratio of $p$
         in the region $R(p')$ of $p'$ is lower than $p'$}
    {$S' \gets Q_{t-1} \setminus \{p'\} \cup \{p\}$\;
     run $k$-\textsc{medoid} from $S'$ until convergence as $Q_t$\;}
    \Else{ $Q_t \gets Q_{t-1}$\; }
  }
  \Else{\tcc{for a deletion $\Delta_t= \langle p,- \rangle$}
    \uIf{$p \in Q_{t-1}$}
    {$S' \gets Q_{t-1} \setminus \{p\} \cup \{p'\}$ where $p' \notin Q_{t-1}$ is top-ranked
    for at least one utility vector in the region $R(p)$ of $p$\;
     run $k$-\textsc{medoid} from $S'$ until convergence as $Q_t$\;}
    \Else{ $Q_t \gets Q_{t-1}$\; }
  }
\end{algorithm}

Unlike other static algorithms, the URM algorithm~\cite{Shetiya2019}
adopts an iterative framework based on the $k$-\textsc{medoid} clustering
for RMS computation. It is intuitive to extend the iterative framework to support dynamic updates.
The high-level idea of updating the result of URM for tuple operations
without fully recomputation is as follows: If an operation does not
affect the convergence of the current result, then skip it directly;
Otherwise, first adjust the result for this operation and then
run the iterative framework starting from the new result
until it converges again. The detailed procedures of (dynamic) URM
are presented in Algorithm~\ref{alg:urm}.

\begin{figure*}[t]
  \captionsetup[subfigure]{aboveskip=2pt}
  \centering
  \begin{subfigure}{.32\textwidth}
    \centering
    \includegraphics[width=\textwidth]{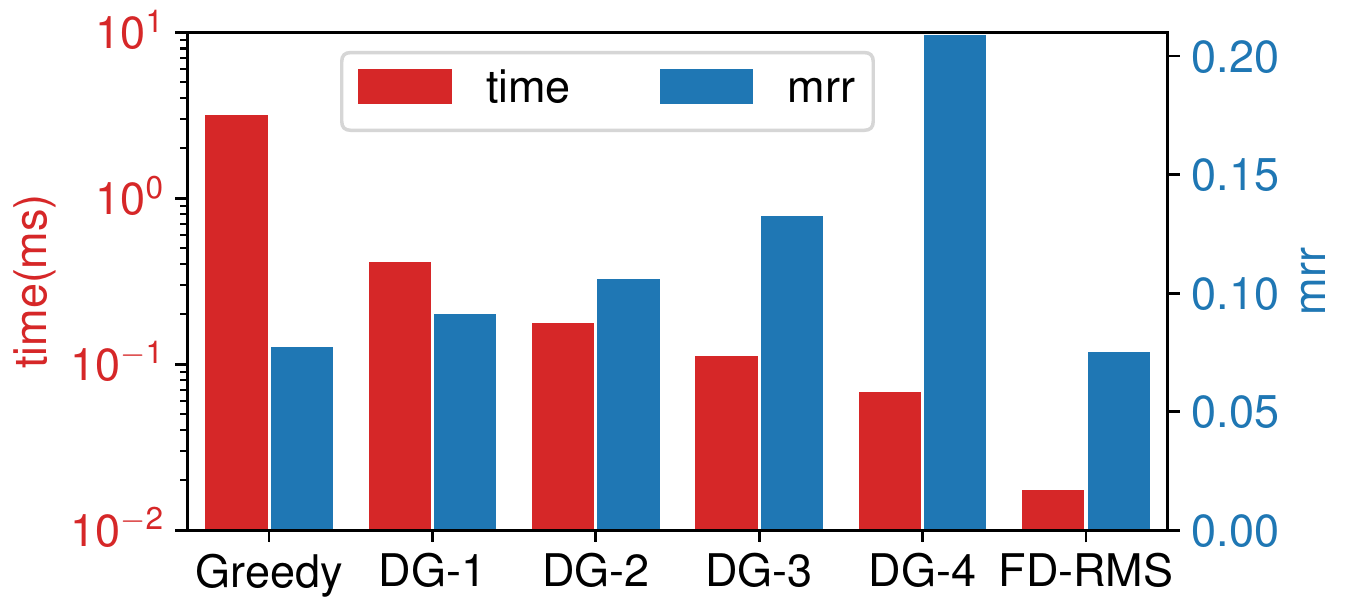}
    \caption{BB ($r=10$)}
  \end{subfigure}
  \begin{subfigure}{.32\textwidth}
    \centering
    \includegraphics[width=\textwidth]{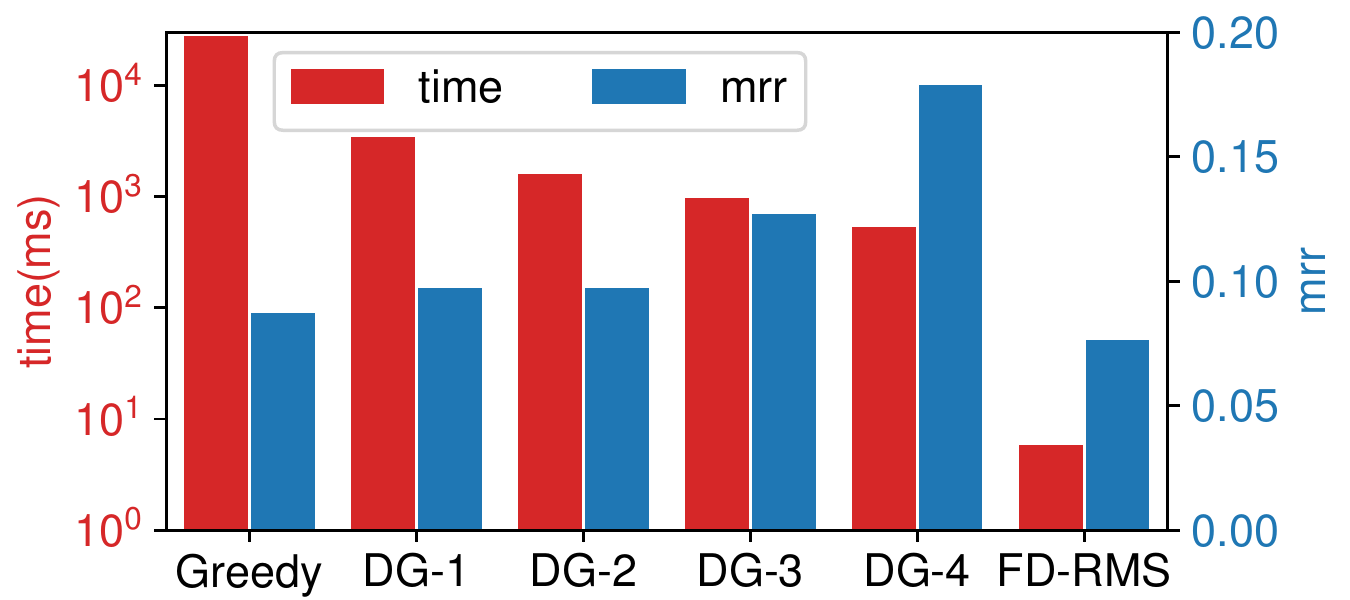}
    \caption{Movie ($r=50$)}
  \end{subfigure}
  \begin{subfigure}{.32\textwidth}
    \centering
    \includegraphics[width=\textwidth]{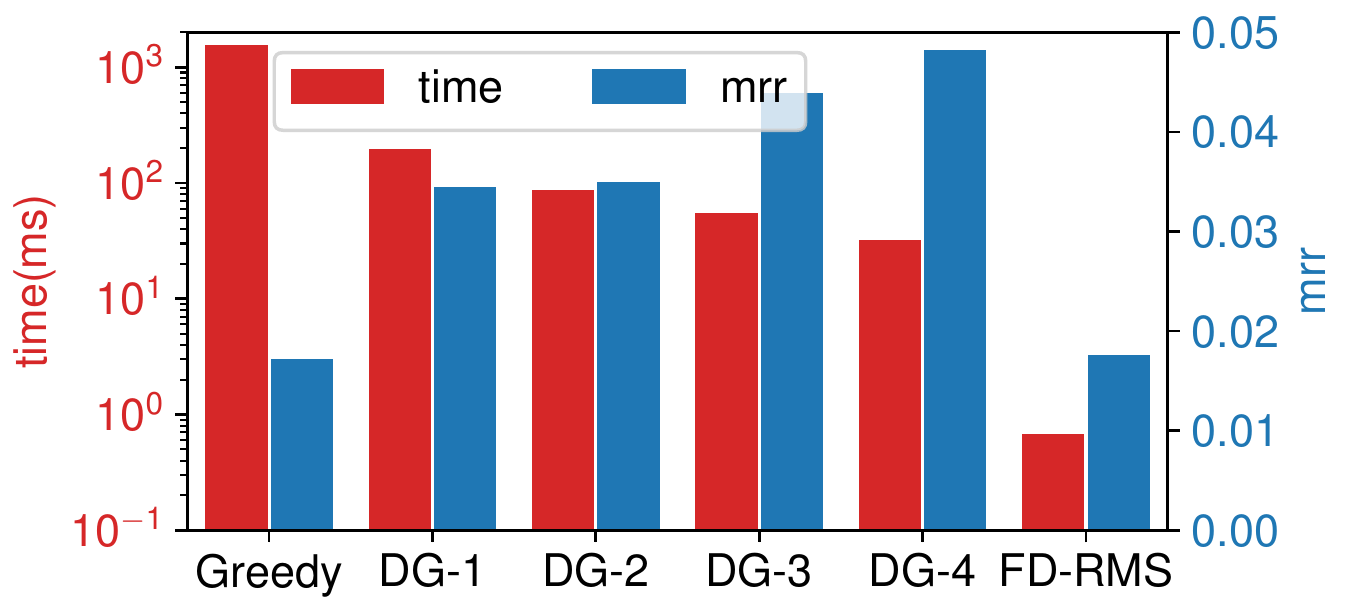}
    \caption{Indep ($d=6,r=50$)}
  \end{subfigure}
  \caption{Performance of the Dynamic Greedy (DG) algorithm in comparison with \textsc{Greedy}
  and FD-RMS ($k=1$). Parameter settings for DG are listed as follows:
  DG-1 -- $\Delta_s=10,\Delta_r=0.25r$; DG-2 -- $\Delta_s=25,\Delta_r=0.5r$;
  DG-3 -- $\Delta_s=50,\Delta_r=0.75r$; DG-4 -- $\Delta_s=100,\Delta_r=r$.}
  \label{fig:dynamic:greedy}
\end{figure*}

\begin{figure*}[t]
  \centering
  \begin{subfigure}{.18\textwidth}
    \centering
    \includegraphics[width=\textwidth]{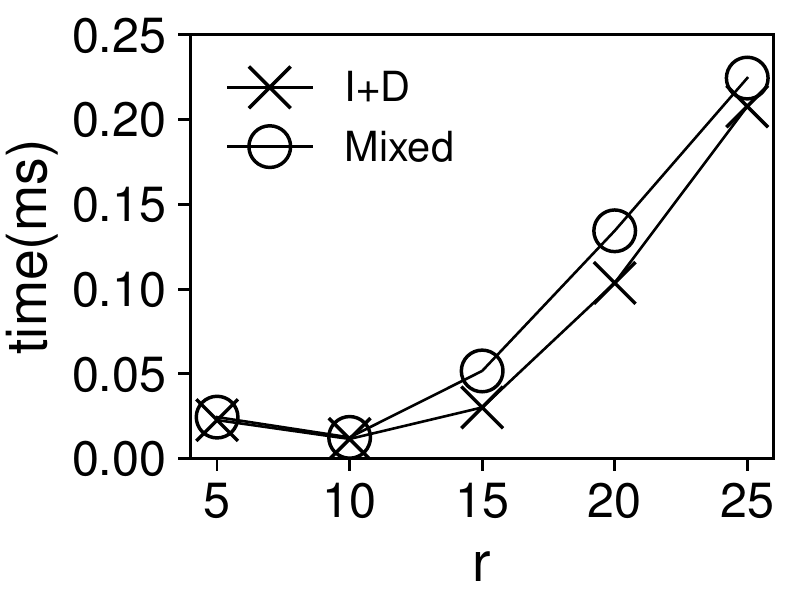}
    \caption{BB}
  \end{subfigure}
  \begin{subfigure}{.18\textwidth}
    \centering
    \includegraphics[width=\textwidth]{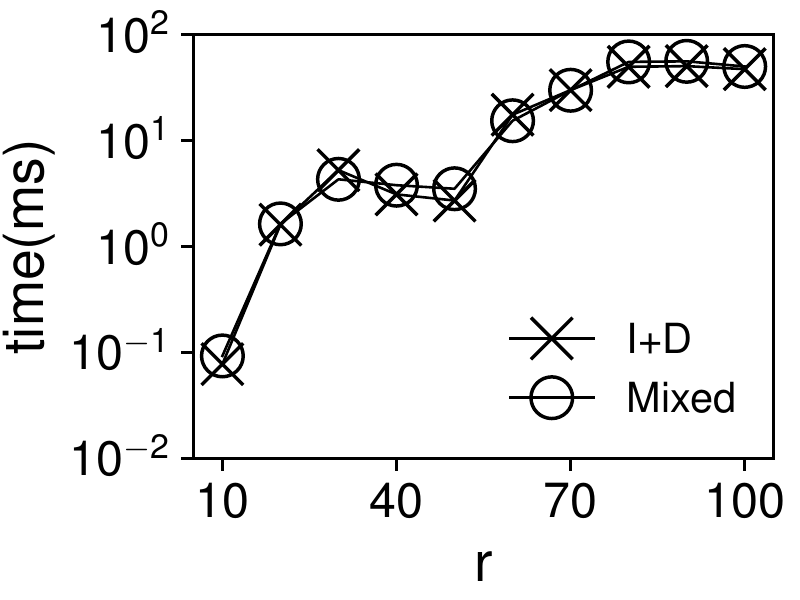}
    \caption{AQ}
  \end{subfigure}
  \begin{subfigure}{.18\textwidth}
    \centering
    \includegraphics[width=\textwidth]{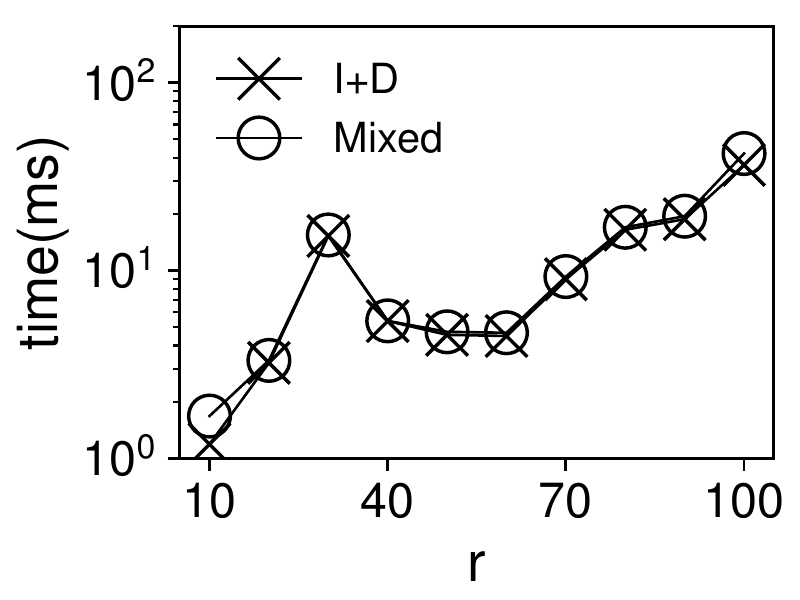}
    \caption{CT}
  \end{subfigure}
  \\
  \begin{subfigure}{.18\textwidth}
    \centering
    \includegraphics[width=\textwidth]{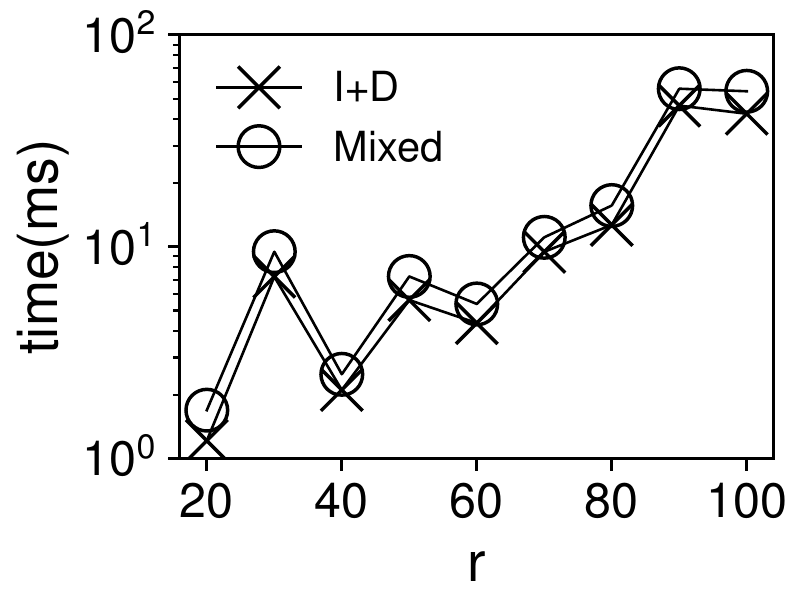}
    \caption{Movie}
  \end{subfigure}
  \begin{subfigure}{.18\textwidth}
    \centering
    \includegraphics[width=\textwidth]{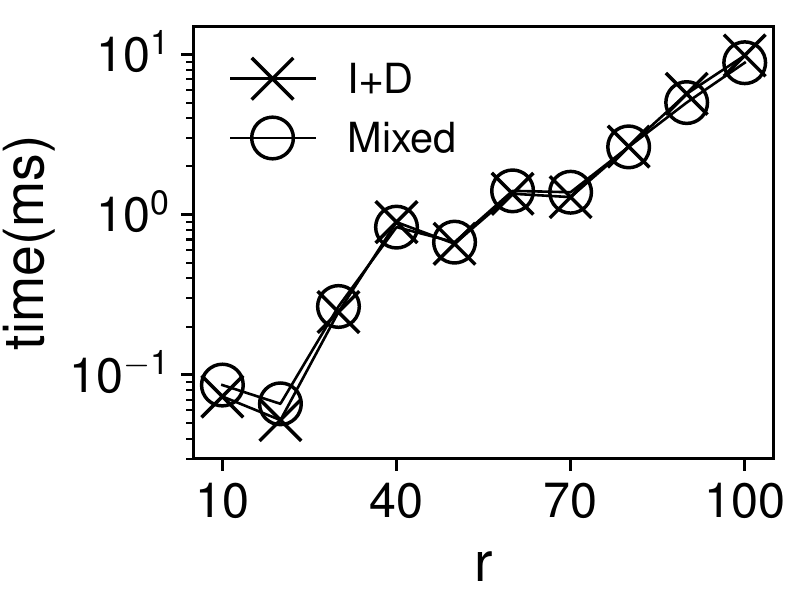}
    \caption{Indep ($d=6$)}
  \end{subfigure}
  \begin{subfigure}{.18\textwidth}
    \centering
    \includegraphics[width=\textwidth]{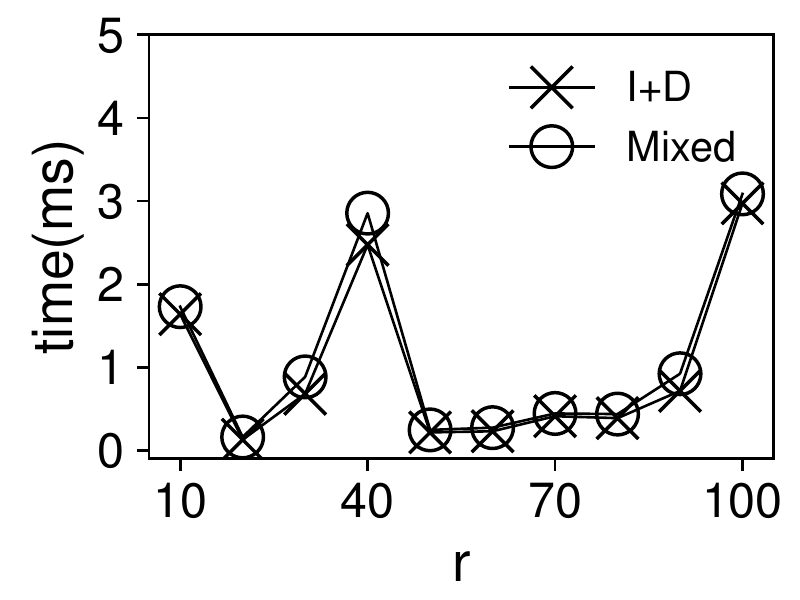}
    \caption{AntiCor ($d=6$)}
  \end{subfigure}
  \caption{Update time of FD-RMS in different update patterns ($k=1$)}
  \label{fig:time:mixed}
\end{figure*}

\begin{figure*}[t]
  \centering
  \begin{subfigure}{.18\textwidth}
    \centering
    \includegraphics[width=\textwidth]{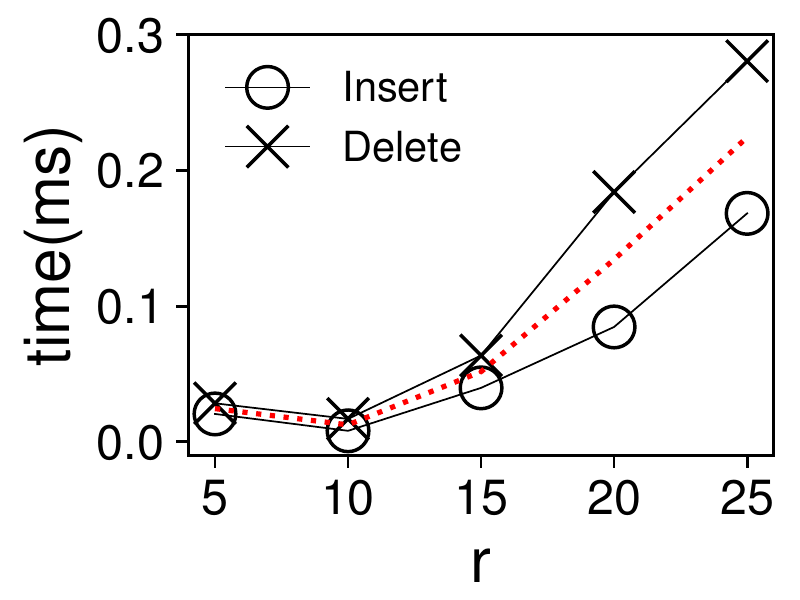}
    \caption{BB}
  \end{subfigure}
  \begin{subfigure}{.18\textwidth}
    \centering
    \includegraphics[width=\textwidth]{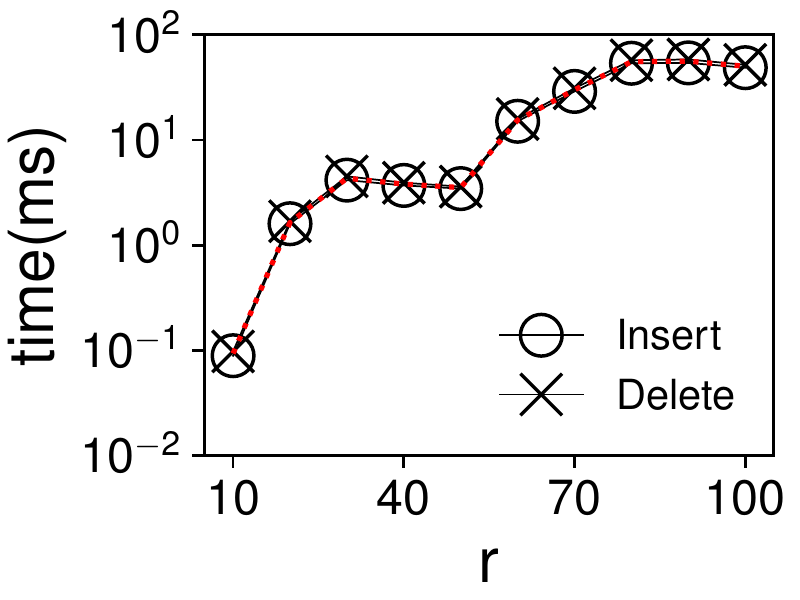}
    \caption{AQ}
  \end{subfigure}
  \begin{subfigure}{.18\textwidth}
    \centering
    \includegraphics[width=\textwidth]{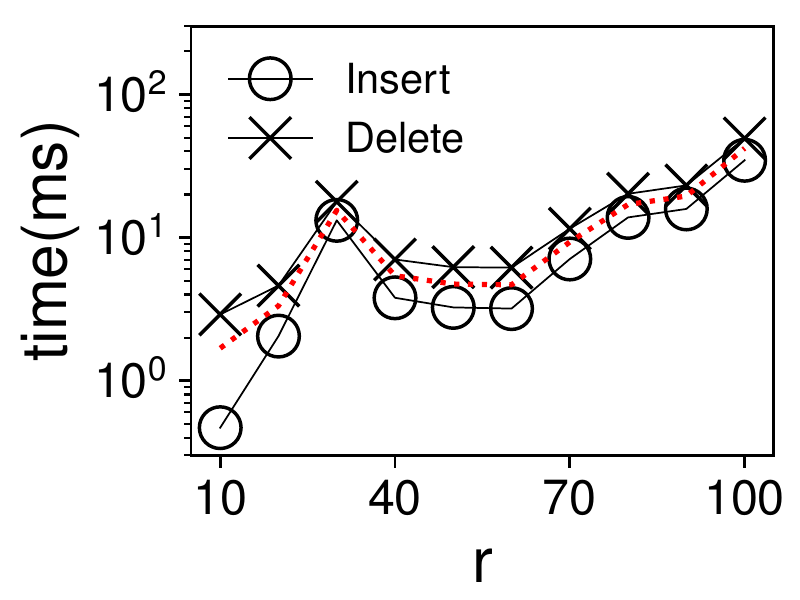}
    \caption{CT}
  \end{subfigure}
  \\
  \begin{subfigure}{.18\textwidth}
    \centering
    \includegraphics[width=\textwidth]{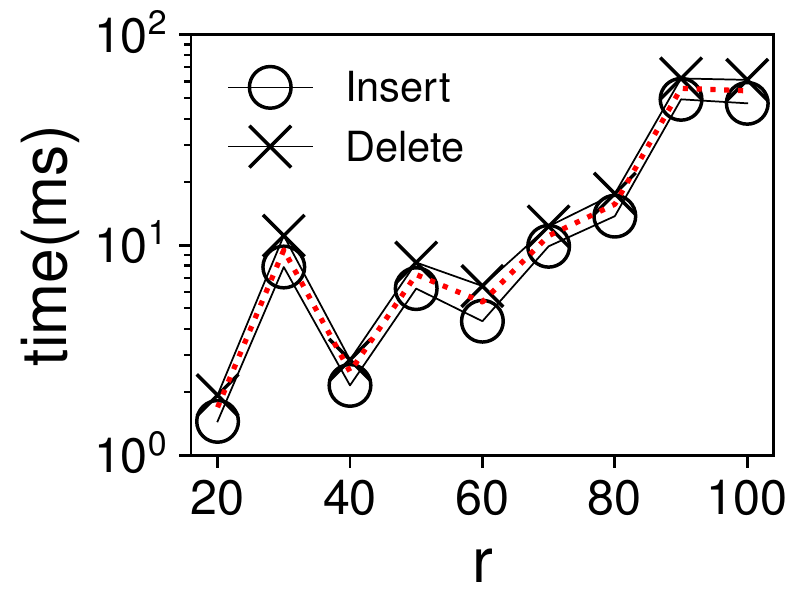}
    \caption{Movie}
  \end{subfigure}
  \begin{subfigure}{.18\textwidth}
    \centering
    \includegraphics[width=\textwidth]{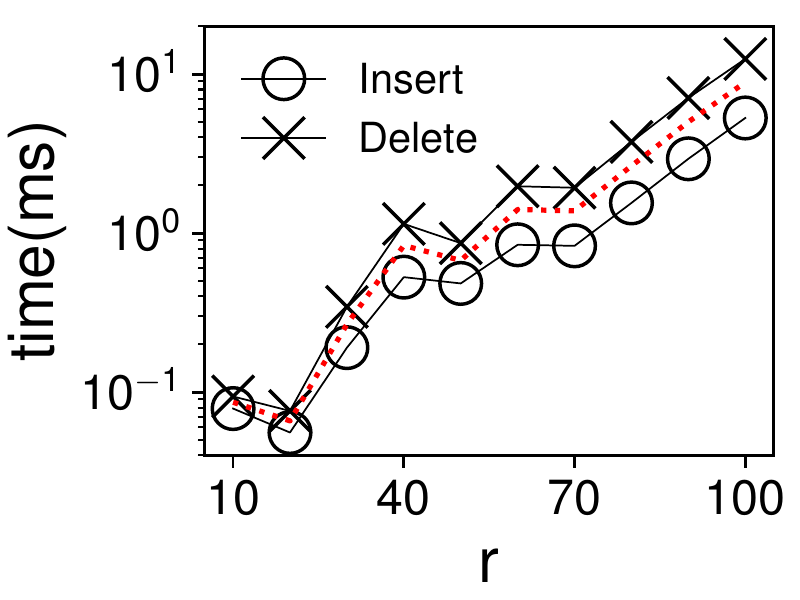}
    \caption{Indep}
  \end{subfigure}
  \begin{subfigure}{.18\textwidth}
    \centering
    \includegraphics[width=\textwidth]{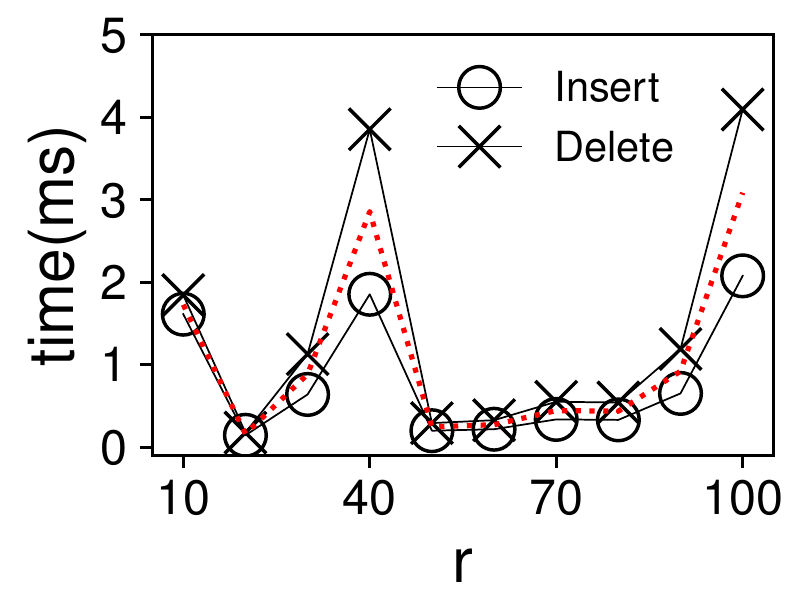}
    \caption{AntiCor}
  \end{subfigure}
  \caption{Update time of FD-RMS for tuple insertion and deletion ($k=1$).
           The red line denotes the average update time for all tuple operations.}
  \label{fig:id}
\end{figure*}

\end{document}